\documentclass{LMCS}

\def\doi{8(4:9)2012}
\lmcsheading%
{\doi}
{1--28}
{}
{}
{May\phantom.~22, 2012}
{Oct.~22, 2012}
{}

\usepackage{amssymb}
\usepackage{amsmath}
\usepackage{graphicx}
\usepackage{multirow}
\usepackage{xspace}
\usepackage{enumerate,hyperref}

\usepackage{tikz}
\usetikzlibrary{positioning,shapes,petri}

\newcommand{\styletheo}[1]{\sf #1}

\theoremstyle{plain}
\newtheorem{theorem}[thm]{\bf Theorem}

\numberwithin{thm}{section}
\setcounter{secnumdepth}{3}
\numberwithin{figure}{section}

\newtheorem{proposition}[thm]{\bf Proposition}
\newtheorem{lemma}[thm]{\bf Lemma}
\newtheorem{corollary}[thm]{\bf Corollary}

\theoremstyle{definition}

\newcommand{\cal}{\mathcal}


\newcommand{\insl}{in self-loop\xspace}
\newcommand{\sloop}{$1$-loop\xspace}
\newcommand{\sloops}{$1$-loops\xspace}
\newcommand{\ntrt}{neutral transition\xspace}
\newcommand{\ntrts}{neutral transitions\xspace}


\newcommand{\set}[1]{\{ #1 \}}

\newcommand{\pair}[2]{( #1,#2 )}

\newcommand{\Nat}{\mathbb{N}}
\newcommand{\Zed}{\mathbb{Z}}

\newcommand{\bottom}{\perp}

\newcommand{\card}[1]{{\rm card}(#1)}

\newcommand{\aset}{X}

\newcommand{\asetter}{Z}


\newcommand{\amap}{f}

\newcommand{\egdef}{\stackrel{\mbox{\begin{tiny}def\end{tiny}}}{=}} 
\newcommand{\equivdef}{\stackrel{\mbox{\begin{tiny}def\end{tiny}}}{\equivaut}} 
\newcommand{\equivaut}{\;\Leftrightarrow\;}

\newcommand{\vect}[1]{\vec{#1}}
\newcommand{\avaluation}{\mathbf{v}}


\newcommand{\bo}[1]{\Box}
\newcommand{\diam}[1]{\Diamond}

\newcommand{\avarprop}{{\rm p}}


\newcommand{\aformula}{\varphi}
\newcommand{\aformulabis}{\psi}
\newcommand{\aformulater}{\chi}



 %



\newcommand{\avariable}{{\sf x}}
\newcommand{\avariablebis}{{\sf y}}
\newcommand{\avariableter}{{\sf z}}

\newcommand{\aterm}{t}




\newcommand{\amodel}{\mathcal{M}}

\newcommand{\SOimpl}[1]{\overset{\!\!#1\!\!}{\Longrightarrow}}


\newcommand{\step}[1]{\xrightarrow{\!\!#1\!\!}}
\newcommand{\stepbis}[1]{[#1\rangle}






\newcommand{\pspace}{\textsc{PSpace}}

\newcommand{\expspace}{\textsc{ExpSpace}}


\newcounter{openproblem}
 \def\theopenproblem{\arabic{openproblem}}



\newcommand{\defstyle}[1]{\emph{#1}}
 
 \newcommand{\exesslliaddson}[1]{}

\newenvironment{openproblem}{\refstepcounter{openproblem}\trivlist\item[\hskip\labelsep{\styletheo{Open problem} \theopenproblem}.]}%
                            {\samepage\hfill\mbox{\ $\bigcirc$}\endtrivlist\par}






 \newcounter{example}[section]



\newcounter{exercise}[section]
\def\theexercise{\thesection.\arabic{exercise}}

                           {\samepage\hfill\endtrivlist\par}

\newcommand{\atransition}{t}
\newcommand{\FO}{{\rm FO}}
\newcommand{\FOP}{{\rm FO}^+}
\newcommand{\FOF}{{\rm FO}_{f}}

\newcommand{\ML}{{\rm ML}}
\newcommand{\PAML}{{\rm PAML}}

\newcommand{\aplace}{p}
\newcommand{\amarking}{M}

\newcommand{\interval}[2]{[#1,#2]}

\newcommand{\MC}[2]{{\rm MC}^{{\rm #2}}(#1)}
\newcommand{\VAL}[2]{{\rm VAL}^{{\rm #2}}(#1)}


\newcommand{\apn}{N}
\newcommand{\anetsystem}{N}
\newcommand{\RGV}{{\rm PURG}} 
\newcommand{\RG}[1]{\RGV(#1)}
\newcommand{\URG}[1]{{\rm URG}(#1)}

\newcommand{\UG}[1]{{\rm UG}(#1)}
\newcommand{\brownian}{Br}

\newcommand{\lsuc}{\varphi_{\mathit{l}}}

\newcommand{\reach}[1]{{\rm Reach}(#1)}

%
\newcommand{\dec}{DEC}
\newcommand{\undec}{UNDEC}

\newcommand{\cut}[1]{}



\keywords{Petri nets, First order logic, Reachability graph}
\subjclass{F.1.1, F.4.1}
\titlecomment{{\lsuper*}A short version of this paper was published at FSTTCS 2011.}


\begin{document}

\title[Petri Nets Reachability Graphs: FO Properties]{Petri Net Reachability Graphs:\\
  Decidability Status of First-Order Properties\rsuper*}

\author[P.~Darondeau]{Philippe Darondeau\rsuper a}
\address{{\lsuper a}INRIA Rennes Bretagne Atlantique, Campus de Beaulieu, Rennes, France}	
\email{philippe.darondeau@inria.fr}  

\author[S.~Demri]{Stephane Demri\rsuper b}
\address{{\lsuper b}LSV, ENS Cachan, CNRS, INRIA, France}	
\email{Stephane.Demri@lsv.ens-cachan.fr}  

\author[R.~Meyer]{Roland Meyer\rsuper c}
\address{{\lsuper c}University of Kaiserslautern, Germany}	
\email{meyer@cs.uni-kl.de}  

\author[C.~Morvan]{Christophe Morvan\rsuper d}
\address{{\lsuper d}Universit\'e Paris-Est, Marne-La-Vall\'ee, France}	
\email{christophe.morvan@univ-paris-est.fr}  

\begin{abstract} 
  We investigate the decidability and complexity status of
  model-checking problems on unlabelled reachability graphs of Petri
  nets by considering first-order and modal languages without labels
  on transitions or atomic propositions on markings. We consider
  several parameters to separate decidable problems from undecidable
  ones. Not only are we able to provide precise borders and a
  systematic analysis, but we also demonstrate the robustness of our
  proof techniques.
\end{abstract}

\maketitle

\section{Introduction}
\label{section-introduction}
\paragraph{\em Decision problems for Petri nets.}
Petri nets are among the oldest families of generators of infinite
state systems, and much effort has been dedicated to their algorithmic
analysis.
For Petri nets, the reachability problem is hard but decidable \cite{Mayr84}.
Further important problems that are specific to Petri nets and that were shown decidable are boundedness~\cite{Karp&Miller69,praveen10}, deadlock-freeness and liveness~\cite{Hack1975} (by reduction to reachability),
persistence~\cite{Grabowski80}, and semilinearity~\cite{Hauschildt90}.
Hack's thesis~\cite{Hack1975} provides a comprehensive overview of problems
equivalent to Petri net reachability. On the negative side, language equality
is undecidable for labelled Petri nets~\cite{Hack76,Araki&Kasami76}, but
it can be decided for injectively labelled as well as for labelled and
deterministic Petri nets~\cite{Pelz87} (by a reduction to reachability).
Another undecidability result for Petri nets, obtained by Rabin~\cite{Baker73}
and Hack~\cite{Hack76}, is that equality of reachability sets of two Petri
nets with identical places is undecidable. As our main contribution, we link
this result to first-order logic expressing properties of general Petri net
reachability graphs. We provide a robust proof schema that entails
undecidability of most logical fragments interpreted on such graphs.

\paragraph{\em Our motivations.}
For
Petri nets, model checking CTL formulae with atomic propositions of the form
$p>0$ (place $p$ contains at least one token) is known to be
undecidable~\cite{Esparza98}. This negative result carries over to all fragments
of CTL containing the modalities EF or AF.  Furthermore, model checking CTL
without atomic propositions but with next-time modalities indexed by action
labels is undecidable too~\cite{Esparza98}. In contrast, LTL model-checking over
vector addition systems with states is \expspace-complete~\cite{Habermehl97}
when atomic propositions refer to control states.

These negative results do not compromise the search for
decidable fragments of first-order logic that describe, only purely
graph-theoretically, the shape of the Petri net state graphs. 
So we intentionally avoid edge labels and atomic propositions
interpreted on markings. As an example, we shall consider the
first-order structure $\pair{\Nat^{n}}{\step{}}$ derived from a Petri
net $\apn$ with $n$ places such that $\amarking \step{} \amarking'$
iff $\amarking$ evolves to $\amarking'$ by firing a transition of
$\apn$.
Since $\pair{\Nat^n}{\step{}}$ is an automatic structure,
its first-order theory over predicates $\step{}$ and $=$ is decidable,
see e.g.~\cite{Blumensath&Graedel00}.
This decision procedure
can be extended to Petri net 
state graphs with Presburger-definable
predicates on markings and with labels on transitions.
 As a second example of results related to our work, given a
formula $\aformula$ in $\FO(\step{},=)$ with free variables $\avariable_1,
\ldots, \avariable_{m}$, one can effectively construct a Presburger
formula that characterizes exactly the markings satisfying $\aformula$
in $\pair{\Nat^{n}}{\step{}}$.

However, it is unclear what happens if we consider the first-order
theory of $\step{}$ over the practically interesting structure
$\pair{\reach{\apn}}{\step{}}$. Here, $\reach{\apn}$ denotes the set
of all markings {\em reachable} from the initial marking of Petri net
$\apn$.
Our paper studies this problem. We investigate the decidability status
of several first-order logics, sometimes extended by a bit of MSO (via
reachability predicates), sharing with~\cite{Schulz10} a common
motivation.
The properties of the reachability graph we are interested in  are
\emph{purely graph-theoretical} in that they do not refer to tokens or
transition labels and they are mostly \emph{local} in that we often restrict
ourselves to $\step{}$ instead of its transitive closure. As
summarised in Table~\ref{figure-summary}
(Section~\ref{section-conclusion}) we settle the decidability status
of most problems.  To the best of our knowledge, this is the first
study of logics for the reachability graph. In particular, related
logics in~\cite{Atig&Habermehl10} consider quantitative properties
on markings and transitions, and evaluate formulae on runs. We do not
refer to tokens or to transition labels.

\paragraph{\em Our contributions.}
We investigate the model-checking problem over structures of the form
$\pair{\reach{\apn}}{\step{},\step{*}}$ generated from Petri nets
$\apn$ with first-order languages including predicate symbols for
$\step{}$ and/or $\step{*}$.  We consider variants depending on the
predicates and on whether $\reach{\apn}$ or $\step{*}$ are effectively
semilinear. This allows us to provide a refined analysis about the
decidability borders for such problems.  As it is a classical fragment
of first-order logic, we also consider the modal language
$\ML(\bo{},\bo{}^{-1})$ with forward and backward modalities.
Let us mention some features of our investigation:
\begin{enumerate}[(1)]
\item Undecidability proofs are obtained by reduction from the equality problem
(or the inclusion problem) between reachability sets defined by Petri nets,
shown undecidable in~\cite{Baker73,Hack76}. We demonstrate that our proof schema is
robust and can be adapted to numerous formalisms specifying local properties as
in first-order logic.
Moreover, undecidability can be obtained even for a fixed formula (i.e., for a fixed property).

\item To determine the cause of undecidability, we investigate logical
fragments. At the same time, we strive for maximally expressive decidable
fragments.
With these two goals, our study on graph-theoretical properties is
quite systematic.
\item For decidable problems, we assess the computational complexity ---
either relative to standard complexity classes such as \pspace \ or \expspace
\ or by establishing a reduction from the reachability
problem for
Petri nets (when decision procedures rely on solving instances of
this problem).
\end{enumerate}
Our main findings are as follows
(refined statements can be found in
the body of the paper, see also Table~\ref{figure-summary} in
Section~\ref{section-conclusion}):
\begin{iteMize}{$\star$}
\itemsep 0 cm
\item Model-checking $\pair{\reach{\apn}}{\step{}}$ [resp. $\pair{\reach{\apn}}{\step{*}}$,
      $\pair{\reach{\apn}}{\step{+}}$] is undecidable for the corresponding first-order language
      with a single binary predicate symbol.
\item Undecidability is also shown for the positive fragment of
$\FO(\step{})$, for the
forward fragment of $\FO(\step{})$, and for  $\FO(\step{})$ augmented
with $\step{*}$. The latter result even holds if
the reachability sets are effectively semilinear.
\item
Combining procedures for coverability and reachability in
Petri nets, we obtain some positive results.
We prove that
model-checking the existential fragment of  $\FO(\step{})$ is decidable, but as
hard as the reachability problem for Petri nets.
Moreover, the model checking
problem is decidable for $\FO(\step{},\step{*},=)$ under the  assumption that
the relations $\step{}$ and $\step{*}$ are semilinear (consequence of~\cite{Blumensath&Graedel00}).
We have not found any
decision result between these two extremes.
\item Concerning the modal language  $\ML(\bo{},\bo{}^{-1})$,
      the global model-checking problem on  $\pair{\reach{\apn}}{\step{}}$
is
      undecidable
      but it becomes decidable when restricted
      to $\ML(\bo{})$ (even if extended with Presburger-definable
      predicates on markings); the latter problem is also as hard as
      the reachability problem for Petri nets.
\end{iteMize}

\noindent One may regret that our main results turn towards undecidability but this was
not clear at all when we began our study. On the positive side, we were able to
identify non-trivial fragments for which the decision problems can be of high
computational complexity. Our results shed some new light on the verification
of structural properties on unlabelled net reachability graphs.
\paragraph{\em Structure of the paper.}
The remaining sections are organized as follows.  Section 2 brings the
background of the study. Section 3 presents results that focus on the
reachability graph without the reachability predicate. Section 4
presents those involving the reachability predicate.

\section{Preliminaries}
\label{section-preliminaries}
We recall basics on Petri nets and semilinear sets and
we give the standard definitions and fundamental results used in
the paper. We first introduce the notations needed when considering Petri net
reachability graphs as models for first-order sentences. Then, we define
first-order logic and modal logic interpreted on graphs induced by Petri nets.
Finally, we present positive decidability results about
model-checking problems.
\subsection{Petri nets}
A \defstyle{Petri net} is a bi-partite graph $\apn = (P,T,F, \amarking_0)$,
where $P$ and $T$ are {\em finite} disjoint sets of
\defstyle{places} and \defstyle{transitions}, and $F:(P\times T)\cup(T\times P)
\rightarrow\mathbb{N}$ is a set of directed edges with
non-negative integer weights.
A {\em marking} of $\apn$ is a function $\amarking :P \rightarrow
\Nat$.  $\amarking_0$ is the \defstyle{initial marking} of $\apn$.  A
transition $t \in T$ is \defstyle{enabled at} a marking $\amarking$,
written $\amarking [t\rangle$, if $\amarking(p)\geq F(p,t)$ for all
places $p\in P$. If $t$ is enabled at $\amarking$ then it can
\defstyle{be fired}. This leads to the marking $\amarking'$ defined by
$\amarking'(p)= \amarking(p)+F(t,p){-}F(p,t)$ for all $p\in P$.
The firing relation is denoted by $M[t\rangle M'$.
The definitions are extended to transition sequences $s\in T^*$ in the
expected way.  A marking $M'$ is {\em reachable} from a marking $M$ if
$M[s\rangle M'$ for some $s\in T^*$.  A transition $\atransition$ is
\defstyle{\insl} with a place $\aplace$ iff $F(\aplace,\atransition) =
F(\atransition, \aplace)>0$.  A transition is \defstyle{neutral} if it
has null effect on all places. The \defstyle{reachability set}
$\reach{\apn}$ of $\apn$ is the set of all markings that are reachable
from the initial marking.
\begin{theorem} \cite{Mayr84} \label{theo:mayr}
  Given a Petri net $\apn$ and two markings $\amarking$ and
  $\amarking'$, one can decide whether $\amarking'$ is reachable from
  $\amarking$.
\end{theorem}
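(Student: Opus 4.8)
The plan is to follow the classical Kosaraju--Lambert--Mayr--Sacerdote--Tenney route, since this is one of the deepest decidability results in the area and no elementary argument is known (indeed the paper only cites it as Theorem~\ref{theo:mayr}, rather than reproving it). First I would pass from Petri nets to the equivalent model of vector addition systems with states, and reformulate the query ``is $\amarking'$ reachable from $\amarking$'' as the existence of a labelled path through a finite control graph that keeps every counter non-negative, with $\amarking$ and $\amarking'$ imposed as entry and exit constraints. The central combinatorial object would then be what Lambert calls a \emph{marked graph--transition sequence} (MGTS): a finite alternating sequence of strongly connected subgraphs of the control, linked by single ``bridge'' transitions, and decorated with input/output vectors in which some coordinates are pinned to fixed natural numbers and others are left unconstrained (marked $\omega$). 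The set of firing sequences realizing a given MGTS is its \emph{solution set}, and the original reachability instance is encoded by one such MGTS.

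The proof proceeds by a dichotomy on MGTSs. For each MGTS one can effectively test a \emph{consistency condition} --- essentially the solvability of the Euler/Kirchhoff circulation equations associated with the component subgraphs, together with the requirement that every $\omega$-coordinate be pumpable in both directions along the relevant components. I would show, on the one hand, that if the condition holds then the MGTS has a genuine solution: one realizes a rational circulation, scales it to an integer one, and then pumps the under-constrained coordinates high enough that all intermediate markings stay non-negative; feasibility here reduces to a system of linear Diophantine (in)equalities, whose solution set is semilinear and hence decidable. On the other hand, if the condition fails, I would decompose the MGTS into a finite, effectively computable family of MGTSs whose solution sets cover that of the original, obtained by splitting each offending component along the coordinates that the failure reveals to be bounded and refining the constraints accordingly (and by unfolding finitely when needed). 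Reachability then amounts to iterating this decomposition and asking whether any resulting MGTS is consistent.

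The main obstacle --- and the technical heart of the whole argument --- is proving that the decomposition terminates. One must equip MGTSs with a well-founded ranking (a tuple of natural numbers compared lexicographically: roughly, the total number of $\omega$-coordinates over all components, then the total size of the component graphs, then the number of components) and show that every MGTS produced by a decomposition step is strictly below its parent; since the branching at each step is finite, K\"onig's lemma bounds the recursion and the procedure halts. At the leaves, each MGTS is either inconsistent (contributing nothing) or consistent (hence solvable), so the answer is ``yes'' iff some leaf is consistent, and consistency is decidable by the semilinearity argument above. The subsidiary work --- verifying the Dickson/well-quasi-ordering facts behind well-foundedness, and checking that the semilinear description of solution sets is preserved down the recursion --- is routine once the ranking is fixed but genuinely laborious, which is why a full proof occupies a substantial monograph rather than a few pages.
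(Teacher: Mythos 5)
The paper does not prove this theorem at all --- it is quoted from the literature (Mayr, and later Kosaraju and Lambert) precisely because the known proof is a monograph-length argument --- and your sketch is a faithful outline of exactly that cited decomposition proof: reduction to VASS, marked graph--transition sequences, the perfectness/consistency test, the pumping argument for consistent MGTSs, and termination of the decomposition via a well-founded ranking. So your proposal takes the same route the paper implicitly relies on; the only caveat is that it remains a sketch, with the genuinely hard content (correctness of each decomposition step and the well-foundedness of the ranking) deferred, as you acknowledge.
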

\begin{theorem} \cite{Baker73,Hack76} \label{theo-hack}
  Given two Petri nets $\apn$ and $\apn'$, it is undecidable whether
  $\reach{\apn} = \reach{\apn'}$ [resp.  $\reach{\apn} \subseteq
  \reach{\apn'}$].
\end{theorem}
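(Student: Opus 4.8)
\noindent\emph{Proof plan.}
The plan is to reduce from an undecidable number-theoretic problem, following the classical route of Rabin and Hack~\cite{Baker73,Hack76}: Petri nets can ``weakly compute'' every polynomial with non-negative integer coefficients, and this is enough to encode Diophantine (un)satisfiability into reachability-set inclusion and then equality.

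First I would prove a \emph{weak computation lemma}: for every polynomial $R(\avariable_1,\dots,\avariable_n)$ with coefficients in $\Nat$ one can build a Petri net with distinguished ``input'' places $\aplace_1,\dots,\aplace_n$, an ``output'' place $\aplace_0$, and auxiliary places, together with input markings $\amarking_{\vec a}$ (putting $a_i$ tokens on $\aplace_i$ and nothing elsewhere), such that the markings reachable from $\amarking_{\vec a}$ with all auxiliary places empty and the inputs restored are exactly those carrying some $c$ with $0\le c\le R(\vec a)$ on $\aplace_0$. The net is assembled by induction on the structure of $R$ from elementary modules for copying, adding and multiplying already-computed quantities, each module threading through a fresh copy of the inputs it needs; monotonicity of the firing relation is what forces ``$\le$'' rather than ``$=$'', although the exact value $R(\vec a)$ is among the reachable outputs as well. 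Next I would record the number theory: by Matiyasevich's theorem, solvability over $\Nat$ of $D(\vec a)=0$ is undecidable for polynomials $D$ with integer coefficients; writing $D=D^{+}-D^{-}$ with $D^{+},D^{-}$ of non-negative coefficients, solvability of $D(\vec a)=0$ is equivalent to solvability of $(D^{+}(\vec a))^{2}+(D^{-}(\vec a))^{2}\le 2D^{+}(\vec a)D^{-}(\vec a)$, so with $A:=(D^{+})^{2}+(D^{-})^{2}+1$ and $B:=2D^{+}D^{-}+1$ (both non-negative-coefficient polynomials, with $A\ge 1$ everywhere) it is undecidable whether $B(\vec a)\le A(\vec a)-1$ holds for \emph{all} $\vec a\in\Nat^{n}$.

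For the reduction I would build $\apn$ and $\apn'$ on a common set of places, sharing an identical ``front end'' that nondeterministically generates some $\amarking_{\vec a}$ and identical auxiliary infrastructure, and differing only in that $\apn$ runs the weak-computation module for $B$ while $\apn'$ runs the one for $A-1$. Arranging that the ``dirty'' reachable markings (those with some auxiliary place still nonempty) coincide for the two nets, $\reach{\apn}\subseteq\reach{\apn'}$ holds iff for every $\vec a$ the ``clean'' output sets satisfy $\{\,c : c\le B(\vec a)\,\}\subseteq\{\,c : c\le A(\vec a)-1\,\}$, i.e.\ iff $B(\vec a)\le A(\vec a)-1$ for all $\vec a$ --- which is undecidable by the previous step. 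This settles the inclusion version. For the equality version, replace $\apn$ by a net $\widehat{\apn}$ that first nondeterministically chooses to behave like $\apn$ or like $\apn'$ (legitimate since they share places and front end), so that $\reach{\widehat{\apn}}=\reach{\apn}\cup\reach{\apn'}$; then $\reach{\apn'}\subseteq\reach{\widehat{\apn}}$ always, and $\reach{\widehat{\apn}}=\reach{\apn'}$ iff $\reach{\apn}\subseteq\reach{\apn'}$, so equality of reachability sets is undecidable as well --- indeed already for nets with identical places.

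I expect the main obstacle to be precisely the step ``dirty markings coincide''. Weak computation is intrinsically lossy, and a $B$-module and an $(A-1)$-module pass through genuinely different intermediate configurations, so one cannot naively transfer inclusion of the output sets to inclusion of the full reachability sets. The real work is to standardise the auxiliary machinery --- keeping all the computation generic and letting the target polynomial influence only a small ``commit at the very end'' part of the net --- so that the only observable discrepancy between $\reach{\apn}$ and $\reach{\apn'}$ lives on the output place $\aplace_0$. This bookkeeping is the technical heart; the module constructions, the appeal to Matiyasevich's theorem, and the union gadget for the equality case are all routine. (An alternative route, avoiding polynomials, weakly simulates a two-counter machine with zero-tests replaced by unchecked guesses and uses a second net whose reachability set captures exactly the ``cheating'' or non-halting computations; the same flavour of bookkeeping is the crux there too.)
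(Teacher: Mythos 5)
The paper does not actually prove this statement: Theorem~\ref{theo-hack} is quoted from the literature (Rabin's argument as written up by Hack), so your proposal can only be measured against that classical proof --- and it follows it faithfully. The skeleton is exactly the standard one: weak computation of polynomials with non-negative coefficients by Petri nets; Matiyasevich's theorem combined with the observation that, over $\Nat$, $x=y$ iff $x^2+y^2\le 2xy$, so that unsolvability of $D(\vec{a})=0$ becomes the universal inequality $\forall \vec{a}\; B(\vec{a})\le A(\vec{a})-1$ for the two non-negative-coefficient polynomials you define; reduction of that universal inequality to inclusion of reachability sets of two nets over the same places; and the nondeterministic-union gadget to pass from inclusion to equality. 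The number-theoretic step and the inclusion-to-equality reduction are correct as stated, and the latter is precisely how one obtains the equality version for nets with identical places.

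As a proof, however, the proposal is incomplete exactly where you say it is, and that gap is not cosmetic. Weak computers for $B$ and for $A-1$ pass through genuinely different intermediate markings, so inclusion of the ``clean'' output sets neither implies nor is implied by inclusion of the full reachability sets without further work; a spurious discrepancy on an auxiliary place would break the reduction in both directions. Hack resolves this by normalising the weak computers (a fixed input/start/stop interface so that the polynomial influences only a controlled portion of the net) and, crucially, by proving a separate lemma reducing inclusion of reachability sets \emph{projected onto a distinguished subset of places} to inclusion of full reachability sets; that projection lemma is the bookkeeping you defer. So: right route, correct arithmetic and union gadget, but the step you flag as the technical heart is a genuine missing piece rather than routine, and the argument is not complete until it is supplied.
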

A stronger version of Theorem~\ref{theo-hack} has been established
in~\cite{Jancar95} where it was shown that undecidability still holds
when $\apn$ and $\apn'$ have five places and one of these nets is
fixed.

A Petri net $\apn =(P,T,F,\amarking_0)$ induces several standard
structures on which first-order logics may be interpreted.  
The \defstyle{plain unlabelled reachability graph} of $\apn$ is the
structure $\RG{\apn}=\pair{D}{\step{}}$ where $D= \reach{\apn}$ and
$\step{}$ is the binary relation on $D$ defined by $M\step{}M'$ if
$M[t\rangle M'$ for some $t\in T$. Note that $\amarking_0\in D$ but no
predicate is given to identify this specific marking.
The \defstyle{unlabelled reachability graph} of $\apn$ is the
structure $\URG{\apn}=\pair{D}{init, \step{}, \step{*}, \step{+},=}$
where $init=\set{M_0}$, and
relations $\step{*}$ and $\step{+}$ are the iterative and strictly
iterative closures of $\step{}$, respectively.
The \emph{unlabelled transition graph} of $\apn$ is the structure
$\UG{\apn}=\pair{\mathbb{N}^P}{init,\step{},\step{*}, \step{+},=}$
where $\amarking \step{} \amarking'$ if $\amarking
\stepbis{\atransition} \amarking'$ for some transition $\atransition
\in T$. Note that reachability of markings is not taken into
  account in $\UG{\apn}$.
In the sequel, by default $\card{P} = n$ and we identify $\Nat^{P}$
and $\Nat^{n}$.  We also call \defstyle{\sloop} an edge $\amarking
\step{} \amarking'$ with $\amarking = \amarking'$.
\subsection{Petri nets and semilinear sets}
We rely on results about the semilinear subsets of $\mathbb{N}^n$
that represent possible markings of a Petri net with $n$ places.
Recall that $(\mathbb{N}^n,+)$ is a commutative monoid
where the product operation is the componentwise addition of
$n$-vectors ($+$) and the neutral element is the null $n$-vector.

A subset $E\subseteq\mathbb{N}^n$ is called {\em linear} if it can be
expressed as $x+\{y_1,\ldots,y_m\}^*$ for vectors $x\in\mathbb{N}^n$ and
$y_1,\ldots,y_m\in\mathbb{N}^n$. The Kleene iteration $\{y_1,\ldots,y_m\}^*$ is
a shorthand notation for $k_1y_1+\ldots +k_my_m$ for some $k_1, \ldots,
k_m\in\mathbb{N}$. A subset $E\subseteq\mathbb{N}^n$ is {\em semilinear} if it
is a finite union of linear subsets. Owing to the commutativity of the product
operation $+$, semilinear subsets of $\mathbb{N}^n$ coincide with the regular
subsets of $\mathbb{N}^n$. Hence, they are generated by finite automata over
$\mathbb{N}^n$. Indeed, one can always choose finite automata whose transitions
are labelled with generators, i.e., with $n$-vectors with a single non-null
entry equal to $1$. The semilinear subsets of $\mathbb{N}^n$ form an effective
Boolean algebra \cite{Ginsburg&Spanier64}, hence providing decision procedures
for emptiness. In \cite{Ginsburg&Spanier66}, Ginsburg and Spanier gave an
effective correspondence between semilinear subsets and Presburger subsets,
i.e., subsets of $\mathbb{N}^n$ definable in Presburger arithmetic. Presburger
arithmetic can be decided in triple exponential time~\cite{BoudetC96}.
\begin{proposition} \label{proposition-reach-semilinear}
  Given a Petri net $\apn= (P,T,F,\amarking_0)$ and a semilinear
  subset of markings $E\subseteq\mathbb{N}^{|P|}$, one can decide
  whether (some marking in) $E$ can be reached from $\amarking_0$.
\end{proposition}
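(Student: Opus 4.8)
The plan is to reduce the question to finitely many instances of the Petri net reachability problem, which is decidable by Theorem~\ref{theo:mayr}. Being semilinear, $E$ is given as a finite union $E = L_1 \cup \cdots \cup L_k$ of linear sets, and a marking of $E$ is reachable from $\amarking_0$ iff a marking of $L_j$ is reachable for some $j$. So it suffices to decide, for a single linear set $L = x + \{y_1,\ldots,y_m\}^*$ with $x, y_1, \ldots, y_m \in \Nat^{|P|}$, whether $\reach{\apn} \cap L \neq \emptyset$.

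For this I would build an extended net $\apn'$ that first simulates $\apn$ and then, nondeterministically, enters a verification phase checking membership of the current marking in $L$. Add two fresh ``control'' places $p_{\mathit{run}}$ and $p_{\mathit{check}}$, let $p_{\mathit{run}}$ carry a single token initially, and give every transition of $\apn$ a self-loop with $p_{\mathit{run}}$ (an arc of weight $1$ in each direction), so that $\apn$ can be simulated only while the token sits on $p_{\mathit{run}}$. Add one transition moving the token from $p_{\mathit{run}}$ to $p_{\mathit{check}}$ whose net effect on $P$ is $-x$ (an input arc of weight $x(p)$ from each place $p$ with $x(p)>0$, no output arc to $P$). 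Finally, for each period $y_i$, add a transition with a self-loop on $p_{\mathit{check}}$ whose net effect on $P$ is $-y_i$. I claim that $\reach{\apn} \cap L \neq \emptyset$ iff the marking $\amarking'$ that puts a single token on $p_{\mathit{check}}$ and $0$ everywhere else (in particular $0$ on every place of $P$) is reachable in $\apn'$. Since the latter is a reachability query, Theorem~\ref{theo:mayr} settles it, and running these queries for $L_1, \ldots, L_k$ decides the proposition.

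It remains to argue the claim. If $\amarking \in \reach{\apn}$ and $\amarking = x + \sum_i k_i y_i$, then in $\apn'$ one simulates $\apn$ to reach $\amarking$ (token on $p_{\mathit{run}}$), fires the $-x$ transition --- enabled since $\amarking \geq x$ componentwise --- reaching $\sum_i k_i y_i$ on $P$ with the token on $p_{\mathit{check}}$, and then fires each $-y_i$ transition $k_i$ times: each such step is enabled because, after removing any sub-multiset of the periods, the remaining $P$-marking still dominates every period that is still to be removed. This ends at $\amarking'$. Conversely, any firing sequence of $\apn'$ reaching $\amarking'$ necessarily has the shape: a simulation of $\apn$ reaching some $\amarking \in \reach{\apn}$, then the $-x$ move (no original transition can fire once the token has left $p_{\mathit{run}}$), then $-y_i$ transitions fired some numbers $k_i$ of times, ending on $P$ at $\amarking - x - \sum_i k_i y_i$; equality of this vector with $\mathbf{0}$ forces $\amarking = x + \sum_i k_i y_i$, so $\amarking \in \reach{\apn} \cap L$. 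The construction is routine; the only things to keep straight are that the weighted input arcs realising $-x$ and $-y_i$ are legal because $x, y_i$ are non-negative, that the intermediate markings of the subtraction phase stay non-negative, and that the control places force the verification phase to be entered exactly once and freeze $\apn$ afterwards. There is no genuine obstacle here --- the point is simply that ``reaching a marking inside a semilinear target'' is, up to this gadget, an ordinary reachability query.
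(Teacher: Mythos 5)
Your proposal is correct and takes essentially the same route as the paper: the paper simply invokes Hack's Lemma~4.3, which reduces semilinear reachability to ordinary Petri net reachability, and your gadget (control places $p_{\mathit{run}}$/$p_{\mathit{check}}$, a $-x$ switch transition, and $-y_i$ self-loop transitions per linear component) is exactly that reduction made explicit. The enabledness and decomposition arguments you give for both directions of the claim are sound, so nothing is missing.
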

Hack reduced this semilinear reachability problem to the reachability
problem in Petri nets \cite[Lemma 4.3]{Hack76}. The proposition now
follows with the decidability of reachability in Theorem~\ref{theo:mayr}.
The statement shows in particular that for any marking $M\in\mathbb{N}^{|P|}$,
one can decide whether a marking greater than or equal to $M$ is
reachable.

We recalled in the introduction that it is decidable whether the reachability
set of a Petri net system is semilinear. Note that semilinearity of the
reachability set $\reach{\apn}$ does not entail semilinearity of the
reachability relation $\step{*}\,\subseteq \reach{\apn} \times  \reach{\apn}
\subseteq \mathbb{N}^{|P|+|P|}$.
Here are some classes of Petri nets and
counter systems for which the reachability relation $\step{*}$ is effectively
semilinear (apart from bounded Petri nets):
\begin{iteMize}{$\star$}
\item Cyclic Petri nets,  see e.g.~\cite{Araki&Kasami77,Bouziane&Finkel97,Leroux11b}.
\item Communication-free Petri nets~\cite{Esparza97}.
\item Vector addition systems with states of dimension 2~\cite{Hopcroft&Pansiot79,Leroux&Sutre04}.
\item Single-path Petri nets~\cite{Howell&Jancar&Rosier93}.
\item Petri nets with regular languages~\cite{Valk&VidalNaquet81}.
\item Flat affine counter systems with the finite monoid
property~\cite{Boigelot98,Finkel&Leroux02b}.
\item Flat relational counter
systems~\cite{Comon&Jurski98,Bozga&Iosif&Konecny10}.
\item Reversal-bounded counter systems~\cite{Ibarra78}.
\end{iteMize}
Some of these results require complex machinery but they are essential
to use the decision procedures based on effective semilinearity.
%
\subsection{First-order languages}
\label{section-preliminaries-fo}
To specify properties of structures $\URG{\apn}$, $\RG{\apn}$ and
$\UG{\apn}$ obtained from a Petri net $\apn$, we introduce a
first-order logic $\FO$ with atomic 
predicates $\avariable \step{} \avariablebis$, $\avariable \step{*}
\avariablebis$, $\avariable \step{+} \avariablebis$ and
$init(\avariable)$.  Formulae in $\FO$ are defined by
$$
\avariable \step{} \avariablebis \ \mid \
\avariable \step{*} \avariablebis \ \mid \
\avariable \step{+} \avariablebis \ \mid \
init(\avariable) \ \mid \
\avariable = \avariablebis
 \ \mid \
\neg \aformula \ \mid \
\aformula \wedge \aformula \ \mid \
\exists \ \avariable \ \aformula
 \ \mid \
\forall \ \avariable \ \aformula.
$$
Given a set $\mathtt{P}$ of predicate symbols from the above
signature, we denote the \emph{restriction of $\FO$ to the predicates
  in $\mathtt{P}$} by $\FO(\mathtt{P})$.  By default, $\FO$ refers to
the full language.  Formulae are interpreted either on $\RG{\apn}$,
$\URG{\anetsystem}$ or $\UG{\anetsystem}$.  Observe that $\FO$ on
$\UG{\anetsystem}$ enables, using $init$ and reachability predicates,
to relativize formulae to $\URG{\anetsystem}$, but restricted logical
languages motivate the existence of both structures. It is worth
noting that by slight abuse, we sometimes use the same notation for a
predicate symbol and its fixed interpretation.
Note that, as regards interpretation, $\step{*}\,=\!(=\cup\step{+})$
and $\step{+}\,=\!(\step{}\circ\step{*})$, hence
$\FO(init,\step{},\step{+},=)$, $\FO(init,\step{},\step{*},=)$, and
$\FO(init,\step{},\step{+},\step{*},=)$ are equally expressive.
$\FO$ indicates that one can quantify over markings. Note that
predicates $\step{+}$ or $\step{*}$ exceed the expressiveness of usual
first-order logics on graphs.
%
We omit the standard definition of the satisfaction relation
$\mathcal{U}, \avaluation \models \aformula$ with $\mathcal{U}$ a
structure ($\RG{\apn}$, $\URG{\apn}$ or $\UG{\apn}$) and $\avaluation$
a valuation of the free variables in $\aformula$.
For example, $\forall\avariable \ \aformula$ holds true whenever the
formula $\aformula$ holds true for all elements (markings) of the
considered structure. \defstyle{Sentences} are closed formulae, i.e.,
without free variables. If $\mathcal{U}\models \aformula$ then
$\mathcal{U}$ is called a model of $\aformula$.

It is worth noting that $\FO$ can only describe
\emph{graph-theoretical} properties of the structures $\mathcal{U}$,
apart from equality tests. The binary relations do not use transitions
of nets as labels and no atomic propositions give reference to
markings. As a consequence, quantitative properties about markings
cannot be expressed in $\FO$, at least in the obvious way, and
constraints about the firing of specific transitions cannot be
expressed either. Note that $\FO$ is not minimal when it comes to
expressiveness.  The redundancies, however, help us design interesting
logical
fragments.\\[0.1cm]
\noindent
In the sequel, we consider several model-checking problems.
The model-checking problem $\MC{\FO}{URG}$ is stated as follows:\vspace{0.1cm}
\begin{description}
\item[\quad input:] a Petri net  $\apn=(P,T,F,M_0)$ and a sentence
$\aformula \in \FO$
\item[\quad question:] $\URG{\apn} \models \aformula$?
\end{description}
\vspace{0.1cm}The variant $\MC{\FO}{UG}$ is:\vspace{0.1cm}
\begin{description}
\item[\quad input:] a Petri net $\apn=(P,T,F,M_0)$ and a sentence
$\aformula \in \FO$
\item[\quad question:] $\UG{\apn} \models \aformula$?
\end{description}
\vspace{0.1cm}
The logics $\FO(\mathtt{P})$ (atomic formulae restricted to
predicates in $\mathtt{P}$) induce restricted variants of the two
model checking problems that we denote by $\MC{\FO(\mathtt{P})}{URG}$ and
$\MC{\FO(\mathtt{P})}{UG}$, respectively.
Formulae in $\FO$ can express standard structural properties, for instance
deadlock-freeness with
$\forall \avariable \ \exists \avariablebis\ \avariable \step{} \avariablebis$,
existence of a \sloop with $\exists
\avariable\ \avariable \step{} \avariable$, or cyclicity with
$\forall \avariable
\forall \avariablebis\ \avariable \step{*} \avariablebis \Rightarrow
\avariablebis \step{*} \avariable$.
Automatic structures form a large class of structures having a
decidable model checking problem for $\FO$.  These structures have
presentations in which $k$-ary relations are defined by synchronous
automata (see~\cite{Blumensath&Graedel00} for more details).
\begin{theorem} \cite{Blumensath&Graedel00}
\label{theo-dec-auto}
  Let $\cal S$ be an automatic structure, then $\MC{\FO}{\cal S}$ is
  decidable.
\end{theorem}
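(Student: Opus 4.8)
The plan is to prove this by the standard automata-theoretic induction on the formula, exploiting the defining feature of automatic structures. Fix an automatic presentation of $\cal S$ over a finite alphabet: the domain is a regular language of words, the relation $=$ is recognised by a synchronous (letter-to-letter) automaton, and every atomic relation $\arelation$ of $\cal S$ is recognised by a synchronous $k$-tape automaton reading the \emph{convolution} of its argument words, shorter words being right-padded by a fresh symbol $\diamond$. The invariant I would maintain is: for every subformula $\aformula(\avariable_1,\ldots,\avariable_k)$ one can effectively construct a synchronous $k$-tape automaton $\aautomaton_{\aformula}$ accepting exactly the convolutions of the tuples of domain elements that satisfy $\aformula$ in $\cal S$.

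The base case is immediate: the atomic formulae $\avariable \step{} \avariablebis$, $\avariable = \avariablebis$, $init(\avariable)$ (more abstractly, the atomic relations of $\cal S$) come equipped with such automata by the very definition of the presentation; an atomic relation mentioning fewer than $k$ variables is lifted to $k$ tapes by adjoining dummy tapes constrained only to the domain language. For the Boolean connectives, the class of relations recognised by synchronous $k$-tape automata over the convolution alphabet is exactly the class of regular languages over that alphabet, hence effectively closed under union, intersection and complement; so $\aformula\wedge\aformulabis$ is handled by a product automaton and $\neg\aformula$ by complementation (the latter going through determinisation, which is where the — harmless for mere decidability — non-elementary blow-up in the quantifier-alternation depth enters).

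The quantifier step is the one I expect to be the genuine technical obstacle. To handle $\exists\,\avariable_i\,\aformula$ one wants to project $\aautomaton_{\aformula}$ onto the tapes other than the $i$-th. The subtlety is that projection is \emph{not} just erasure of a track: after deleting the $i$-th component one may be left with trailing positions whose surviving components are all $\diamond$, violating the convolution convention (the encoded remaining words would be strictly shorter than the block read). The correct construction composes track-erasure with a clean-up that nondeterministically drops such all-$\diamond$ suffix positions while simultaneously verifying that the erased track could legitimately have carried a domain word of the appropriate length; packaged this way one still obtains a synchronous automaton (after determinisation), so the invariant is restored. Universal quantification is then $\forall\,\avariable_i\,\aformula \equiv \neg\exists\,\avariable_i\,\neg\aformula$.

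Finally, applying the construction to a sentence $\aformula$ (no free variables) yields an automaton over the empty tuple of tapes — effectively a finite automaton recognising either $\{\emptyword\}$ or $\emptyset$ — and $\cal S\models\aformula$ holds iff this automaton is non-empty, which is decidable. Since every step above is effective, $\MC{\FO}{\cal S}$ is decidable. The only delicate point is the convolution/padding bookkeeping under projection; the rest is closure of regular languages under Boolean operations together with the emptiness test.
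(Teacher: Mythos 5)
Your proof is correct: it is the standard automata-theoretic argument (induction maintaining a synchronous automaton over convolutions for each subformula, Boolean closure, projection with padding clean-up for $\exists$, emptiness test for sentences), and the one delicate point you flag — restoring the convolution convention after erasing a track — is indeed where the real work lies. The paper itself gives no proof of this theorem but simply imports it from the cited reference, where essentially your argument appears; the only routine detail left implicit in your write-up is that complementation must be taken relative to the regular set of valid convolutions of domain words, not of all words over the padded alphabet.
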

From~\cite{Ginsburg&Spanier64}, semilinear sets and semilinear
relations are automatic. In particular, this means that $(\Nat^n,
\step{}, =)$ is automatic.
Propositions~\ref{proposition-dec-ug},~\ref{proposition-n1}
and~\ref{proposition-dec-reach} are consequences of Theorem~\ref{theo-dec-auto}; they are
provided below to present more explicitly what is the current state of knowledge.
\begin{proposition} \label{proposition-dec-ug}
$\MC{\FO(\step{},=)}{UG}$ is decidable.
\end{proposition}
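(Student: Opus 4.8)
The plan is to realize the structure $\UG{\apn} = \pair{\Nat^n}{\step{}}$ (restricted to the signature $\set{\step{}, =}$) as an automatic structure and then invoke Theorem~\ref{theo-dec-auto}. First I would fix the standard convolution encoding of tuples of vectors in $\Nat^n$ over a finite alphabet: a vector $\amarking \in \Nat^n$ is written componentwise in, say, unary or binary, and a pair $\couple{\amarking}{\amarking'}$ is encoded by interleaving the two words up to the length of the longer one, padding the shorter with a dummy symbol. Equality $=$ is trivially recognized by a synchronous automaton under this encoding (it holds exactly when the two tracks agree), so the only real work is the relation $\step{}$.

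The key step is to show that $\step{}\,\subseteq \Nat^n \times \Nat^n$ is a synchronous-regular (equivalently, semilinear) relation. Since $\amarking \step{} \amarking'$ iff there is a transition $\atransition \in T$ with $\amarking[\atransition\rangle$ and $\amarking'(\aplace) = \amarking(\aplace) + F(\atransition,\aplace) - F(\aplace,\atransition)$ for all $\aplace$, and since $T$ is finite, $\step{}$ is the finite union over $\atransition \in T$ of the relations $R_\atransition = \set{\couple{\amarking}{\amarking'} : \amarking(\aplace)\ge F(\aplace,\atransition) \text{ and } \amarking'(\aplace) = \amarking(\aplace) + F(\atransition,\aplace)-F(\aplace,\atransition) \text{ for all }\aplace}$. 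Each $R_\atransition$ is a linear (hence semilinear) subset of $\Nat^{2n}$: a base vector realizing the minimal enabling marking together with its successor, plus the periods that add one token simultaneously to coordinate $\aplace$ of $\amarking$ and coordinate $\aplace$ of $\amarking'$, for each $\aplace$. By the excerpt's remark that semilinear sets and relations over $\Nat^k$ are automatic (from~\cite{Ginsburg&Spanier64}), and that automatic structures are closed under first-order definable relations and finite unions, $\pair{\Nat^n}{\step{},=}$ is automatic. Applying Theorem~\ref{theo-dec-auto} then yields decidability of $\MC{\FO(\step{},=)}{UG}$.

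I expect the only point requiring care — and the closest thing to an obstacle — to be verifying that the chosen convolution encoding genuinely makes $\step{}$ recognizable by a \emph{synchronous} automaton rather than merely an asynchronous one; this is where the semilinearity argument above does the job, because semilinear relations are precisely the ones accepted in the synchronous model, so no ad hoc automaton construction is needed. Everything else (closure of automatic structures under Boolean operations on the relations, and the reduction to Theorem~\ref{theo-dec-auto}) is routine.
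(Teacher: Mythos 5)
Your proposal is correct and follows essentially the same route as the paper, which derives Proposition~\ref{proposition-dec-ug} directly from Theorem~\ref{theo-dec-auto} via the observation that $=$ and $\step{}$ (as a finite union over $\atransition \in T$ of linear relations in $\Nat^{2n}$) are semilinear, hence automatic. The only difference is that you spell out the base/period decomposition of each $R_\atransition$ explicitly, which the paper leaves implicit.
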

\noindent
Note that given  $\aformula$ in $\FO(\step{},=)$, one can
effectively build a Presburger formula that characterizes exactly the
valuations satisfying $\aformula$ in $\UG{\apn}$.
Decidability is preserved with Presburger-definable properties on markings and
with labelled transition relations $[t\rangle$.
%
However, having $\Nat^n$ as a domain does not always guarantee
decidability, see the undecidability result in~\cite[Theorem
2]{Schulz10} about a structure with domain $\Nat^n$ but equipped with
successor relations for each dimension and with reachability
predicates constrained by regular languages.
Likewise, subproblems of $\MC{\FO}{URG}$ may require additional
assumptions to achieve decidability, as the semilinearity assumption
made in the statement below. The proposition also follows from
Theorem~\ref{theo-dec-auto}.
\newcommand{\stmntpropnone}{%
Let $\mathcal{C}$ be a class of
Petri nets for which the restriction on reachable markings of the reachability
relation $\avariable \step{*} \avariablebis$ is effectively semilinear. Then,
$\MC{\FO}{URG}$ restricted to $\mathcal{C}$ is decidable.
}
\begin{proposition}
\label{proposition-n1}
\stmntpropnone
\end{proposition}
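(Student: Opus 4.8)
The plan is to reduce the problem to Theorem~\ref{theo-dec-auto} by exhibiting an automatic presentation of $\URG{\apn}$ for every $\apn \in \mathcal{C}$. Recall $\URG{\apn} = \pair{D}{init, \step{}, \step{*}, \step{+}, =}$ with $D = \reach{\apn}$. The first step is to fix the natural encoding of markings in $\Nat^n$ as words over a finite alphabet (e.g., the convolution/synchronous encoding used for automatic structures over $\Nat^n$, padding shorter components), exactly as in the observation just before the statement that $(\Nat^n, \step{}, =)$ is automatic. Under this encoding, $=$ is trivially regular, and $\step{}$ is semilinear on $\Nat^n$ (it is a finite union of translations $M \mapsto M + (F(t,\cdot) - F(\cdot,t))$ restricted to the semilinear enabling condition $M(p) \ge F(p,t)$), hence regular by \cite{Ginsburg&Spanier64}.

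The second step uses the hypothesis on $\mathcal{C}$: the restriction of $\step{*}$ to reachable markings, i.e. $\step{*} \cap (\reach{\apn} \times \reach{\apn})$, is effectively semilinear, hence regular under the convolution encoding. From an effective semilinear description of this relation one also obtains, by projection onto either coordinate, an effective semilinear — hence regular — description of $D = \reach{\apn}$ itself (this is where the decidability relies on \cite{Ginsburg&Spanier64} and \cite{Ginsburg&Spanier66}): $D = \{M : (M,M) \in {\step{*}}\}$ or alternatively $D = \range{\step{*}|_{\reach{\apn}}} \cup \set{M_0}$. Consequently $init = \set{M_0}$ is a regular (finite) set, $\step{*}$ restricted to $D$ is regular, $\step{+} = \step{*} \setminus (= \cap (D \times D))$ is regular, and $\step{} \cap (D \times D)$ is regular as the intersection of the regular relation $\step{}$ on $\Nat^n$ with $D \times D$. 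This yields an automatic presentation of the whole structure $\URG{\apn}$, whose components are obtained effectively from $\apn$.

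The third step simply invokes Theorem~\ref{theo-dec-auto}: an automatic structure has a decidable $\FO$ model-checking problem, so $\MC{\FO}{URG}$ restricted to $\mathcal{C}$ is decidable, completing the proof. The main obstacle — and the only genuinely non-routine point — is making sure that semilinearity of $\step{*}|_{\reach{\apn}}$ really does deliver everything we need: one must check that $\reach{\apn}$ is itself semilinear (so that the domain of the presentation is regular) and that all derived predicates $init$, $\step{}$, $\step{+}$ restricted to $\reach{\apn}$ stay semilinear; since semilinear sets and relations over $\Nat^n$ form an effective Boolean algebra closed under projection \cite{Ginsburg&Spanier64}, each of these closure steps is effective, so no effectiveness is lost. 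Everything else is a direct application of the machinery recalled in Section~\ref{section-preliminaries}.
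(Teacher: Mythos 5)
Your overall strategy is sound and is essentially the one the paper itself points to: the text just before the statement already remarks that the proposition ``also follows from Theorem~\ref{theo-dec-auto}'', and the paper's written proof is the Presburger-arithmetic incarnation of the same argument (semilinear $=$ Presburger-definable $=$ regular under the convolution encoding, by Ginsburg--Spanier). Your derivation of $D=\reach{\apn}$ as the diagonal section of $\step{*}$ restricted to reachable markings, and the regularity of $=$, of $init$ and of the one-step relation $\step{}$, are all fine.

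There is, however, one step that fails: the identity $\step{+}\;=\;\step{*}\setminus(=\cap(D\times D))$ is false whenever the reachability graph contains a cycle, which is the typical situation for Petri nets. If $M\step{}M'\step{}M$ with $M\neq M'$, or if a neutral transition is enabled at $M$ so that $M\step{}M$, then $M\step{+}M$, yet the pair $(M,M)$ is removed by your set difference. Your automatic presentation would therefore interpret $\step{+}$ as an irreflexive relation, and the resulting decision procedure would answer incorrectly on formulae such as $\exists\,\avariable\ \avariable\step{+}\avariable$. The repair is the identity the paper records in Section~\ref{section-preliminaries-fo}, namely $\step{+}\;=\;(\step{}\circ\step{*})$: the composition of two effectively semilinear relations is effectively semilinear (existential quantification over the intermediate marking, i.e., projection / quantifier elimination in Presburger arithmetic), so $\step{+}$ restricted to $D$ is effectively semilinear and hence regular. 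Note that this uses closure under \emph{projection}, which you correctly list in your toolbox in the last paragraph but never actually invoke, since your formula for $\step{+}$ is purely Boolean. The paper's own proof computes this relation as $\Delta\cup\Delta^{2}$, where $\Delta$ is the irreflexive part of $\step{*}$ on $D$ and $\Delta^{2}$ is its composition with itself --- so the projection step is not optional; it is precisely the point where something beyond the Boolean algebra of semilinear sets is needed.
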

\begin{proof} 
Let $ \apn =(P,T,F, \amarking_0)$ be a Petri net in $\mathcal{C}$
with $\card{P} = n$. 
We represent its
markings by vectors $\amarking \in\Nat^n$. 
By assumption, $\reach{\apn}$ 
 and the
set $\set{\pair{\amarking}{\amarking'} \,|\, \amarking,\amarking'\in
\reach{\apn} \ {\rm and} \ \amarking\step{*}\amarking'}$ are effectively
semilinear. 
Similarly, the set
$\set{\pair{\amarking}{\amarking} \,|\, \amarking\in
\reach{\apn}}$ is effectively semilinear.  
Define
$\Delta= \set{\pair{\amarking}{\amarking'} \,|\, \amarking,\amarking'\in
\reach{\apn} \ {\rm and} \ \amarking\step{*}\amarking', \ \amarking \neq \amarking'}$.
Then  $\Delta$ is effectively semilinear. Let
$\Delta^2=\set{\pair{\amarking}{\amarking'}\,|\,(\exists \amarking'')\,\pair{\amarking}{
\amarking''} \in\Delta \ {\rm and } \ \pair{\amarking''}{\amarking'} \in\Delta}$. As
semilinear sets are closed under projection (quantifier elimination in
Presburger arithmetic),
$\Delta^2$ is effectively semilinear. Now
$\set{\pair{\amarking}{\amarking'} \,|\, \amarking \in
  \reach{\apn} \ {\rm and} $ \\
  $ \amarking \step{+} \amarking'}$ is equal to
$\Delta\cup\Delta^2$. Hence this set is effectively
semilinear. Therefore, through the effective correspondence between
semilinear sets and sets definable in Presburger arithmetic,
any sentence $\aformula$ of $\FO$ translates to a sentence
$\aformula'$ of Presburger arithmetic logic such that
$\URG{\anetsystem} \models \aformula$ if and only if $\aformula'$ is
true. The proposition follows from the decidability of Presburger
arithmetic~\cite{Presburger29}.
\end{proof}

%
When reachability sets are effectively semilinear but the reachability
relation is not, the strictly less expressive logical fragment
$\FO(\step{},=)$ remains decidable, from Theorem~\ref{theo-dec-auto}.
\newcommand{\stmntpropdecreach}{%
Let $\mathcal{C}$ be a class of
Petri nets $\apn$ for which $\reach{\apn}$ is effectively semilinear.
Then, $\MC{\FO(\step{},=)}{URG}$ restricted to  $\mathcal{C}$ is decidable.
}
\begin{proposition} \label{proposition-dec-reach}
\stmntpropdecreach
\end{proposition}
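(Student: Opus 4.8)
The plan is to notice that, once the iterated relations $\step{*}$ and $\step{+}$ are absent from the signature, the structure over which the sentence is evaluated becomes (effectively) automatic, so that Theorem~\ref{theo-dec-auto} applies. Indeed, for $\aformula\in\FO(\step{},=)$ the truth value of $\URG{\apn}\models\aformula$ depends only on the reduct $\pair{\reach{\apn}}{\step{},=}$, since $\aformula$ mentions neither $init$ nor $\step{*},\step{+}$; it therefore suffices to exhibit an effective automatic presentation of this reduct.

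First I would fix $\apn=(P,T,F,\amarking_0)$ in $\mathcal{C}$ with $\card{P}=n$, identify $\Nat^P$ with $\Nat^n$, and recall that the one-step relation over the whole of $\Nat^n$ is
\[
\step{}\ =\ \bigcup_{t\in T}\set{\pair{\amarking}{\amarking'}\in\Nat^{2n}\ \mid\ \amarking(p)\geq F(p,t)\ {\rm and}\ \amarking'(p)=\amarking(p)+F(t,p)-F(p,t)\ {\rm for\ all}\ p\in P},
\]
a finite union of linear sets, hence effectively semilinear. By hypothesis $\reach{\apn}\subseteq\Nat^n$ is effectively semilinear, so $\reach{\apn}\times\reach{\apn}\subseteq\Nat^{2n}$ and the equality relation on $\reach{\apn}$ are effectively semilinear too. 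Since semilinear subsets of $\Nat^{2n}$ form an effective Boolean algebra \cite{Ginsburg&Spanier64}, the restricted one-step relation $\step{}\cap(\reach{\apn}\times\reach{\apn})$ is effectively semilinear. Thus the domain and both relations of the reduct are effectively semilinear, and semilinear sets and relations are recognised by synchronous finite automata, so the reduct is an effectively automatic structure.

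I would then conclude by invoking Theorem~\ref{theo-dec-auto}, which gives decidability of $\MC{\FO}{\cal S}$ for automatic structures $\cal S$, and hence decidability of $\MC{\FO(\step{},=)}{URG}$ on $\mathcal{C}$. Equivalently, one may translate $\aformula$ into a Presburger sentence by relativising every quantifier to a Presburger formula defining $\reach{\apn}$ and replacing each atom $\avariable\step{}\avariablebis$ by a Presburger formula defining $\step{}\cap(\reach{\apn}\times\reach{\apn})$, and then appeal to decidability of Presburger arithmetic \cite{Presburger29}.

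There is no substantial obstacle here; the only point worth stressing — and the reason why effective semilinearity of $\reach{\apn}$ alone suffices, in contrast with Proposition~\ref{proposition-n1} — is that we never need semilinearity of $\step{*}$ (or $\step{+}$) restricted to reachable markings: the one-step relation is already semilinear over all of $\Nat^n$, so intersecting it with the semilinear set $\reach{\apn}\times\reach{\apn}$ is all that is required. The mild bookkeeping to check is merely that every step above (the finite union over $T$, the Cartesian product, the Boolean operations, and the passage to an automaton presentation) is effective, which it is.
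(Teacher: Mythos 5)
Your proof is correct and follows essentially the same route as the paper: the paper explicitly notes that this proposition is a consequence of Theorem~\ref{theo-dec-auto}, and its detailed argument is exactly your alternative formulation --- translate $\aformula$ into Presburger arithmetic by relativising quantifiers to a Presburger formula for $\reach{\apn}$ and replacing $\step{}$-atoms by the Presburger formula defining the one-step relation. Your observation that only the semilinearity of $\reach{\apn}$ (and not of $\step{*}$) is needed is precisely the point the paper makes in contrasting this result with Proposition~\ref{proposition-n1}.
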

\begin{proof} 
Consider a Petri net $\apn = (P,T,F,\amarking_0)$ in $\mathcal{C}$. Assume
the Presburger
formula $\aformula(\avariable_1, \ldots, \avariable_{n})$ characterizes
$\reach{\apn}$ where $|P| = n$. There is a second Presburger formula
$\aformula'(\avariable_1, \ldots, \avariable_{n},\avariable_1', \ldots,
\avariable_{n}')$ that characterizes the binary relation $\step{}$ in
$\UG{\apn}$. 

Given a sentence $\aformulabis$ in $\FO(\step{},=)$, one can build  a  sentence
$\amap(\aformulabis)$ in Presburger arithmetic such that $\URG{\apn} \models
\aformulabis$ iff $\amap(\aformulabis)$ is satisfiable in Presburger arithmetic.
The map $\amap(\cdot)$ is homomorphic for Boolean connectives. Furthermore,
\begin{iteMize}{$\star$}
\itemsep 0 cm
\item $\amap(\avariableter \step{} \avariableter') \egdef
       \aformula'(\avariableter_1, \ldots, \avariableter_{n},\avariableter_1', \ldots, 
        \avariableter_{n}')$,
\item $\amap(\avariableter = \avariableter') \egdef
      \underset{i \in \interval{1}{n}}{\bigwedge} \avariableter_i = \avariableter_i'$,
\item $\amap(\forall  \avariableter \ \aformulater) \egdef
       \forall  \avariableter_1, \ldots, \avariableter_{n} \
       (\aformula(\avariableter_1, \ldots, \avariableter_{n})
        \Rightarrow \amap(\aformulater))$.
\end{iteMize}
To evaluate predicate $\step{}$, we resort to 
$\aformula'$.
With
$\aformula$,
we relativize the quantifiers to taking only positions in
$\reach{\apn}$ into account.
\end{proof}

Again, decidability is preserved with Presburger-definable properties on markings and
with labelled transition relations of the form $\step{\atransition}$.
To give an example application of this result, $\MC{\FO(\step{},=)}{URG}$
restricted to cyclic Petri nets is decidable. This follows from
Proposition~\ref{proposition-dec-reach} combined with the fact that cyclic Petri
nets have semilinear reachability sets~\cite{Bouziane&Finkel97}. The restriction
to language $\FO(\step{}, =)$ is essential for the decidability in
Proposition~\ref{proposition-dec-reach}. As we shall see in
Proposition~\ref{proposition-undec-acc-semilinear-sets}, the related model
checking problem $\MC{\FO(\step{}, \step{*})}{URG}$ is undecidable --- even
under the assumption of semilinearity for the reachability sets.
\subsection{Standard first-order fragments: modal languages}
\label{section-preliminaries-ml}
By moving along edges, modal languages provide a local view to
(potentially labelled) graph structures. Note the contrast to first-order
logic in which one quantifies over any element of the structure.
Applications of modal languages include modelling temporal and epistemic
reasoning, and they are central for designing logical specification languages.
In this paper, we consider simple modal languages understood as distinguished
fragments of first-order logic. Moreover, the modal language $\ML$ defined
below has no propositional variable (like  Hennessy-Milner modal
logic~\cite{Hennessy&Milner80} but unlike  standard modal logic K~\cite{Blackburn&deRijke&Venema01})
and no label on modal operators (unlike in  modal languages
dedicated to describing labelled transition systems). This allows us to
interpret modal formulae on directed graphs of the form
$\pair{ \reach{\apn}}{\step{}}$. However, in some places, we shall indicate when
decidability or complexity results can be extended to richer versions of
$\ML$.
The modal formulae in $\ML$ are defined by the grammar
$$
\perp \ \mid \ \top \ \mid \ \neg \aformula \ \mid \ \aformula \wedge \aformulabis \ \mid \
\bo{} \aformula \ \mid \ \diam{} \aformula \ \mid \
\bo{}^{-1} \aformula \ \mid \ \diam{}^{-1} \aformula.
$$
This language is not only poor compared to first-order logic, but also little
expressive compared to other modal dialects. Yet, it is sometimes
sufficiently expressive to obtain first undecidability results for model
checking Petri net structures.
Given a modal formula $\aformula$,
its \defstyle{modal degree} is the greatest number of nested occurrences of modal operators in $\aformula$.
We write $\ML(\bo{})$
to denote the restriction of $\ML$ to the modal operators $\bo{}$ and $\diam{}$.
We interpret modal formulae on directed graphs of the form $\pair{D}{\step{}}$
for some Petri net $\apn=(P,T,F,\amarking_0)$ with $\URG{\apn}=\pair{D}{init,
\step{}, \step{*}, \step{+},=}$.
We provide the definition of the satisfaction relation $\models$
relatively to an arbitrary directed graph $\amodel =
\pair{W}{R}$ (and $w \in W$).
The clauses for Boolean connectives and
logical constants are standard and we omit them. For the modal operators, we
set\vspace{0.1cm}
\begin{iteMize}{$\star$}
\item $\amodel, w\models \bo{} \aformula$
      $\equivdef$ for every $w' \in W$ such that $\pair{w}{w'} \in R$,
      we have $\amodel, w' \models \aformula$.
\item $\amodel, w \models \diam{} \aformula$
      $\equivdef$ there is  $w' \in W$ such that $\pair{w}{w'} \in R$  and
      $\amodel, w' \models \aformula$.
\item $\amodel, w\models \bo{}^{-1} \aformula$
      $\equivdef$ for every $w' \in W$ such that $\pair{w'}{w} \in R$,
      we have $\amodel, w' \models \aformula$.
\item $\amodel, w \models \diam{}^{-1} \aformula$
      $\equivdef$ there is  $w' \in W$ such that $\pair{w'}{w} \in R$  and
      $\amodel, w' \models \aformula$.
\end{iteMize}
\vspace{0.1cm}As usual, $\bo{}$ and $\diam{}$ as well as $\bo{}^{-1}$ and
$\diam{}^{-1}$ are dual operators that can be defined one from another as soon
as negation is part of the language.\\[0.3cm]
The model-checking problem $\MC{\ML}{URG}$  is the
following:
\vspace{0.1cm}
\begin{description}
\item[\quad input:] a Petri net   $\apn=(P,T,F,\amarking_0)$ and a modal formula
$\aformula \in \ML$.
\item[\quad question:] $\pair{Reach(N)}{\step{}}, \amarking_0 \models
\aformula$?
\end{description}
\vspace{0.1cm}
Let $\MC{\ML(\bo{})}{URG}$ denote $\MC{\ML}{URG}$ restricted to
$\ML(\bo{})$. %
Proposition~\ref{proposition-preliminaries-ml1} proves this
model checking problem decidable. The procedure exploits the fact
that a modal formula of modal degree $d$ can only induce constraints on nodes
at distance at most $d$ from the initial marking, a
standard argument, see e.g.~\cite{Blackburn&deRijke&Venema01}.
\newcommand{\stmntpropprelimmlone}{%
$\MC{\ML(\bo{})}{URG}$
is decidable and \pspace-complete.
}
\begin{proposition} \label{proposition-preliminaries-ml1}
\stmntpropprelimmlone
\end{proposition}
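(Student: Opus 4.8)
\emph{Overview.} The plan is to prove $\MC{\ML(\bo{})}{URG}\in\pspace$ (which already gives decidability) and then \pspace-hardness, so that the two together yield \pspace-completeness. For the upper bound I would evaluate the input formula by recursive descent, never materialising the (exponentially large) set of markings reachable from $\amarking_0$ within the modal degree of the formula. Concretely, define a procedure $\mathrm{Check}(\amarking,\aformula)$, for a marking $\amarking$ and a subformula $\aformula$ of the input: the Boolean and constant cases are immediate; for $\aformula=\diam{}\aformulabis$ (resp.\ $\aformula=\bo{}\aformulabis$) the procedure enumerates the transitions $\atransition\in T$ enabled at $\amarking$, sets $\amarking'=\amarking[\atransition\rangle$, and returns whether $\mathrm{Check}(\amarking',\aformulabis)$ succeeds for some (resp.\ every) such $\atransition$; the instance is answered by $\mathrm{Check}(\amarking_0,\aformula)$. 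Correctness is immediate from the semantics, using that every $\step{}$-successor of a reachable marking is again reachable, so firing transitions from $\amarking_0$ never leaves $\reach{\apn}$ and no reachability test is needed (equivalently, evaluating $\ML(\bo{})$ at $\amarking_0$ cannot tell $\URG{\apn}$ from $\UG{\apn}$). For the space bound I would note that the recursion depth is at most $\size{\aformula}$ and that at most $\size{\aformula}$ transitions are fired along any branch, so every marking handled by the procedure has each component bounded by $\amarking_0(p)+\size{\aformula}\cdot\max_{\atransition,p}F(\atransition,p)$, a number of bit-size polynomial in the input even under binary encodings; each stack frame then stores one such marking, an index over $T$ and a pointer into $\aformula$, all polynomial, so the (deterministic) computation runs in polynomial space.

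\emph{The hardness construction.} For the lower bound I would reduce from validity of quantified Boolean formulae. Given $\Phi=Q_1x_1\cdots Q_nx_n\,\psi$ with $\psi=C_1\wedge\cdots\wedge C_m$ in CNF, I would build a Petri net $\apn_\Phi$ with: a place $v_i^{+}$ and a place $v_i^{-}$ for each variable (a token in $v_i^{+}$, resp.\ $v_i^{-}$, recording that $x_i$ has been set true, resp.\ false); and a chain of single-token ``program-counter'' places $\mathrm{pc}_0,\dots,\mathrm{pc}_n,\mathrm{cc}_1,\dots,\mathrm{cc}_{m+1}$, of which only $\mathrm{pc}_0$ is marked initially. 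For each $i$ there are transitions $\atransition_i^{+},\atransition_i^{-}$ moving the counter $\mathrm{pc}_{i-1}\mapsto\mathrm{pc}_i$ and depositing one token in $v_i^{+}$, resp.\ $v_i^{-}$; a transition $\atransition_{\mathrm{go}}$ moving $\mathrm{pc}_n\mapsto\mathrm{cc}_1$; and, for each clause $C_j$ and literal $\ell\in C_j$, a transition $\atransition_{j,\ell}$ moving $\mathrm{cc}_j\mapsto\mathrm{cc}_{j+1}$ and being \insl with $v_i^{+}$ if $\ell=x_i$ (with $v_i^{-}$ if $\ell=\neg x_i$), so that $\atransition_{j,\ell}$ is enabled exactly when literal $\ell$ is true under the recorded assignment. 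Finally I would take $\aformula_\Phi=Q'_1Q'_2\cdots Q'_n\,\diam{}^{m+1}\top$, where $Q'_i=\diam{}$ if $Q_i=\exists$ and $Q'_i=\bo{}$ if $Q_i=\forall$, and $\diam{}^{k}$ abbreviates $k$ nested diamonds. This is constructible in polynomial time, and $\aformula_\Phi$ uses only the forward modalities, i.e.\ $\aformula_\Phi\in\ML(\bo{})$.

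\emph{Correctness of the reduction.} I would then argue that $\Phi$ is valid iff $\pair{\reach{\apn_\Phi}}{\step{}},\amarking_0\models\aformula_\Phi$. The key invariant, proved by induction on $i$, is that whenever the counter token is in $\mathrm{pc}_{i-1}$ (so only $\atransition_i^{+}$ and $\atransition_i^{-}$ are enabled, and both always are), the marking has \emph{exactly} the two successors obtained by firing $\atransition_i^{+}$ and $\atransition_i^{-}$; hence the prefix $Q'_1\cdots Q'_n$ of $\aformula_\Phi$ faithfully plays the quantifier game of $\Phi$ and leads, branch by branch, to a marking $\amarking_{\vec a}$ with counter in $\mathrm{pc}_n$ that records the chosen assignment $\vec a$. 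From $\amarking_{\vec a}$ the transition $\atransition_{\mathrm{go}}$ is the unique enabled one; and from any marking whose counter is in $\mathrm{cc}_j$ and whose $v$-places still record $\vec a$, a step exists iff some literal of $C_j$ is true under $\vec a$, and every such step leads to the marking with counter in $\mathrm{cc}_{j+1}$ and the same $v$-content. Consequently $\diam{}^{m+1}\top$ holds at $\amarking_{\vec a}$ iff there is a walk of length $m+1$ from $\amarking_{\vec a}$ iff each of $C_1,\dots,C_m$ is satisfied by $\vec a$, which is exactly what the innermost subformula must certify. Combined with \pspace\ membership, this yields \pspace-completeness.

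\emph{Main obstacle.} I expect the hardness direction to be the delicate part, for two reasons. First, $\bo{}$ encodes $\forall$ only if the net's branching in the choice phase is \emph{precisely} the two intended moves, with no spurious enabled transition and no vacuous universal quantification — which is what the single-token program counter guarantees. Second, a Petri net can test for the \emph{presence} of a token but not for its absence, so a negative literal $\neg x_i$ must be recorded by its own place $v_i^{-}$ rather than by emptiness of $v_i^{+}$; this is what turns ``$C_j$ is satisfied'' into the purely positive assertion ``some transition out of $\mathrm{cc}_j$ is enabled'', a property that the diamond modality can observe.
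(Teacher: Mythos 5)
Your proof is correct, and while the \pspace\ upper bound follows essentially the paper's route, your hardness reduction is genuinely different in how the work is split between the net and the formula. For the upper bound, both arguments rest on the same observations: forward modalities never leave $\reach{\apn}$ (so no reachability test is needed and $\URG{\apn}$ and $\UG{\apn}$ are indistinguishable from $\amarking_0$), the recursion depth is bounded by the modal degree, and every marking met along a branch has polynomial bit-size. The paper phrases the evaluator as a nondeterministic procedure and invokes Savitch's theorem, whereas your deterministic depth-first evaluation gets \pspace\ directly; that is a minor simplification, not a different idea. For the lower bound, the paper also reduces from QBF but encodes the truth value of $\avarprop_i$ \emph{numerically} ($i$ tokens in place $\aplace_i$ for true, none for false) and then lets the \emph{formula} evaluate the propositional matrix, reading off each variable with the path-length gadget $\diam{}^{i}\bo{}\perp$ after a selector transition isolates one variable; the matrix therefore need not be in CNF, but the correctness of the gadget depends on a somewhat delicate analysis of which paths of length exactly $i$ exist. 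You instead encode each truth value \emph{positionally} in dedicated places $v_i^{+},v_i^{-}$, push the clause evaluation into the net via self-loop tests, and reduce the matrix check to the single formula $\diam{}^{m+1}\top$; this requires a CNF matrix (harmless, since QSAT over CNF is \pspace-complete) but makes the correctness argument cleaner, since the single-token program counter forces exactly the intended branching at every stage. Both reductions are polynomial and both formulas lie in $\ML(\bo{})$, so each establishes the stated \pspace-completeness.
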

\begin{proof}
Consider a Petri net $N=(P,T,F,\amarking_0)$  with $\URG{N}=\pair{D}{init,
\step{}, \step{*}, \step{+},=}$. Let $\aformula$ be a modal formula in
$\ML(\bo{})$ with modal degree $d$ ($d$ is the greatest number of nested
occurrences of modal operators in $\aformula$). We consider the directed graph $\amodel
= \pair{W}{R}$ so that
\begin{iteMize}{$\star$}
\itemsep 0 cm
\item $W \subseteq \Nat^P$ and $R$ is the restriction of $\step{}$ to $W$.
\item For $\amarking \in \Nat^{P}$ we set $\amarking \in W$ $\equivdef$
      there is a sequence of transitions $s$ of length at most $d$ such that
      $\amarking_0 [s \rangle \amarking$.
\end{iteMize}
Observe that $\amodel$ is finite and the cardinal of $W$ is at most exponential
in the size of $N$ and $d$. One can
 show that
$\amodel, \amarking_0 \models \aformula$ iff $\pair{D}{\step{}}, \amarking_0 \models \aformula$.
Hence, $\MC{\ML(\bo{})}{URG}$ is decidable, because the
model-checking problem for $\ML$ over finite structures is decidable (in
polynomial time).
The \pspace \ upper bound can be obtained with an algorithm similar to the one
that shows CTL model-checking over 1-safe Petri nets to be in \pspace,
see e.g.~\cite[Section 4.2]{Esparza98}.
Our problem is actually simpler since we can restrict ourselves
to the temporal operators $\mathtt{A} \mathtt{X}$ and $\mathtt{E} \mathtt{X}$
corresponding to $\bo{}$ and $\diam{}$, respectively.
We briefly describe below the nondeterministic algorithm
$MC((P,T,F,\amarking_0), \aformula)$ that returns $\mathtt{true}$
whenever $\pair{D}{\step{}}, \amarking_0 \models \aformula$. We proceed by a case analysis.
\begin{desCription}
\itemsep 0 cm
\item\noindent{\hskip-12 pt$\aformula = \top$}\ {\bf return} $\mathtt{true}$;
\item\noindent{\hskip-12 pt$\aformula = \neg \aformula'$:}\
     {\bf if} $MC((P,T,F,\amarking_0), \aformula')$
     {\bf then} {\bf return} $\mathtt{false}$ {\bf else} {\bf return} $\mathtt{true}$;
\item\noindent{\hskip-12 pt$\aformula = \aformula_1 \wedge \aformula_2$:}\
     {\bf if} $MC((P,T,F,\amarking_0), \aformula_1)$ {\bf and}
     $MC((P,T,F,\amarking_0), \aformula_2)$
     {\bf then} {\bf return} $\mathtt{true}$ {\bf else} {\bf return} $\mathtt{false}$;
\item\noindent{\hskip-12 pt$\aformula = \bo{} \aformula'$:}\
     {\bf if} for some $\amarking'$ such that $\amarking_0 \step{\atransition}
              \amarking'$ with $\atransition \in T$ we have
     $MC((P,T,F,\amarking'), \aformula') = \mathtt{false}$
     {\bf then} {\bf return} $\mathtt{false}$ {\bf else} {\bf return} $\mathtt{true}$.
\end{desCription}\smallskip

\noindent Note that the depth of recursive calls for $MC((P,T,F,\amarking_0), \aformula)$ is
bounded by the modal degree of $\aformula$ and each call requires only polynomial
space in the size of $(P,T,F,\amarking_0)$ and $\aformula$. Hence,
$MC((P,T,F,\amarking_0), \aformula)$ runs in nondeterministic polynomial space.
By Savitch Theorem, we get the bound \pspace.

To establish \pspace-hardness, we give a reduction from QBF. Let $\mathcal{Q}_1
\avarprop_1 \ \cdots \ \mathcal{Q}_{2n} \avarprop_{2n} \ \aformulabis$ be
a QBF formula where $\mathcal{Q}_1 \cdots \mathcal{Q}_{2n}$ is a sequence of
quantifiers starting with $\mathcal{Q}_1 = \exists$, alternating strictly
$\exists$ and $\forall$, and $\aformulabis$ is a quantifier-free propositional
formula built over the propositional variables in $\set{\avarprop_1, \ldots,
\avarprop_{2n}}$. We consider a modal formula $\aformula$ of the form
$(\diam{}
\bo{})^{n} \aformulabis'$ where $\aformulabis'$ is obtained from
$\aformulabis$ by replacing each propositional variable $\avarprop_i$ by
$\diam{}^{i} \bo{} \perp$. Construct a Petri net $N=(P,T,F,\amarking_0)$ as
follows. The set of places $P$ contains a subset $\set{\aplace_1, \ldots,
\aplace_{2n}}$, in bijection with the atomic propositions and initially empty,
plus auxiliary places. From $\amarking_0$, $N$ executes first a sequence of
$2n$ independent
choices $(t'_1+t''_1)\cdot(t'_2+t''_2)\cdot\ldots\cdot(t'_{2n}+t''_{2n})$ where
$t'_i$ puts $i$ tokens in place $\aplace_i$ to represent the truth of the
corresponding atomic proposition while $t''_i$ puts no tokens in $\aplace_i$ to
indicate the proposition does not hold. After this sequence of
binary choices,  $N$ executes  a non-deterministic
choice  $(x_1+\cdots +x_{2n})$ where $x_i$ removes one token from $p_i$
and puts one token in a place $p'_i$
which was initially empty. Each control place $p'_i$ is set \insl with
a transition $t_i$ that removes at each firing one token from $\aplace_i$.
%
%
  %

Existential quantifications are replaced by $\diam{}$, and universal
ones by $\bo{}$. A path relative to a formula $(\diam{}\bo{})^{n}$
then ends up in a configuration where truth values have been chosen
for all variables.  Note that the formula needs to be true for one
continuation at each $\diam{}$ position and true for each continuation
at $\bo{}$ positions.  The last part of the formula needs to check the
truth values of individual variables.  For each $\avarprop_i$, we have
a formula $\diam{}^{i} \bo{} \perp$ that is true only when there is
precisely a path of length $i$, which corresponds to our encoding of
truth values. The selection of each individual variable (and only one)
is performed by the transition $(x_1+\cdots +x_{2n})$.
  %
Altogether, $\pair{ \reach{\apn}}{\step{}}, \amarking_0 \models (\diam{}
\bo{})^{n}\aformulabis'$ iff $\mathcal{Q}_1 \avarprop_1 \ \cdots \
\mathcal{Q}_{2n} \avarprop_{2n} \aformulabis$ is satisfiable.  Note
that $\reach{\apn}$ is finite.
\end{proof}


For simple
models (like finite structures),
adding $\bo{}^{-1}$ to
$\ML(\bo{})$, often does not change the decidability status or  the
computational complexity of model checking, see
e.g.~\cite{Blackburn&deRijke&Venema01}. When it comes to Petri net
reachability graphs $\RG{\apn}$, adding the backward operator $\bo{}^{-1}$
preserves decidability but at the cost of performing reachability
checks.
\newcommand{\stmntpropprelimmltwo}{%
$\MC{\ML(\bo{},\bo{}^{-1})}{URG}$ is decidable.
}
\begin{proposition} \label{proposition-preliminaries-ml2}
\stmntpropprelimmltwo
\end{proposition}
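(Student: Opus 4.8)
The plan is to reduce model-checking $\ML(\bo{},\bo{}^{-1})$ on $\pair{\reach{\apn}}{\step{}}$ to a finite collection of reachability and coverability queries on auxiliary Petri nets. The difficulty, compared with Proposition~\ref{proposition-preliminaries-ml1}, is that the backward modality $\bo{}^{-1}$ (resp.\ $\diam{}^{-1}$) may force us to look at predecessors of the current marking; and a marking $\amarking$ may have predecessors in $\Nat^P$ that are \emph{not} reachable from $\amarking_0$, so those must be excluded. Hence the evaluation can no longer be confined to a ball of radius $d$ around $\amarking_0$: a $\diam{}^{-1}$ can reach an arbitrarily ``old'' marking, and a $\diam{}$ applied afterwards can again move far away. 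So the truncation argument of Proposition~\ref{proposition-preliminaries-ml1} does not suffice by itself.

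First I would make the set of markings that can ever be visited while evaluating $\aformula$ at $\amarking_0$ explicit. If $d$ is the modal degree of $\aformula$, then evaluation only ever concerns markings $\amarking$ such that there is an \emph{undirected} path of length at most $d$ in $\pair{\reach{\apn}}{\step{}}$ from $\amarking_0$ to $\amarking$; call this set $W_d$. Crucially every element of $W_d$ lies in $\reach{\apn}$, so $W_d$ is exactly the set of markings obtained from $\amarking_0$ by a sequence of at most $d$ ``forward or backward'' steps that \emph{stays inside $\reach{\apn}$ at every intermediate point}. The key observation is that $W_d$ is finite: each forward step is bounded by the (finitely many) transitions, and each backward step from a marking $\amarking$ produces a marking $\amarking'$ with $\amarking'(p) \le \amarking(p) + \max_{t}F(p,t)$ for all $p$, so the coordinates stay within an a~priori bound determined by $\amarking_0$, $d$ and the weights of $\apn$. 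Thus $W_d$ is a finite, explicitly boundable subset of $\Nat^P$.

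Next I would compute $W_d$ effectively. Starting from $\set{\amarking_0}$, I iterate $d$ times the operation ``add all forward successors and all backward predecessors, then keep only the markings that belong to $\reach{\apn}$''. Forward successors and syntactic backward predecessors in $\Nat^P$ are trivially computable; the only non-trivial test is membership in $\reach{\apn}$, which is exactly the Petri net reachability problem and is decidable by Theorem~\ref{theo:mayr} (each query asks whether a fixed target marking is reachable from $\amarking_0$). Because at every level we only retain finitely many candidates and each is within the stated bound, this terminates and yields $W_d$ together with the restriction $R$ of $\step{}$ to $W_d$. I would then argue, by induction on the structure of subformulae $\aformulabis$ of $\aformula$ and on the allowed ``remaining modal budget'', that for every $\amarking \in W_{d'}$ (with $d'$ the appropriate residual degree) we have $\pair{W_d}{R},\amarking \models \aformulabis$ iff $\pair{\reach{\apn}}{\step{}},\amarking \models \aformulabis$: the point is that every neighbour of $\amarking$ that matters for $\aformulabis$ in $\pair{\reach{\apn}}{\step{}}$ is again in $W_d$, and conversely $R$ adds no spurious edges since it is the literal restriction of $\step{}$. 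Finally, $\pair{W_d}{R}$ is a \emph{finite} pointed Kripke structure, so $\ML(\bo{},\bo{}^{-1})$ model-checking on it is decidable by the standard labelling algorithm. The main obstacle is simply getting the bound on backward predecessors right and being careful that the intermediate markings of an undirected path all lie in $\reach{\apn}$; once that is pinned down, the rest is routine, and the cost of the procedure is dominated by the $O(|W_d|)$ calls to the Petri net reachability oracle.
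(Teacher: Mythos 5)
Your proposal is correct and is essentially the paper's own argument: the paper likewise builds the finite depth-$d$ neighbourhood of $\amarking_0$ under forward transitions and formal inverse transitions (via the net $\overline{\apn}$ with transition set $T\cup T^{-1}$), uses Petri net reachability queries (Theorem~\ref{theo:mayr}) to discard backward predecessors outside $\reach{\apn}$, and then model-checks the resulting finite structure. The only (inessential) difference is that the paper's set $W'$ does not insist that intermediate markings of the undirected path lie in $\reach{\apn}$, so it is a slightly larger but still finite and still correct superset of your $W_d$.
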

\begin{proof}
Consider a Petri net $\apn =(P,T,F,\amarking_0)$ with $\URG{N}=\pair{D}{init,
\step{}, \step{*}, \step{+},=}$. Let $\aformula$ be a modal formula in
$\ML(\bo{},\bo{}^{-1})$ of modal degree $d$. Define $\overline{\apn}=(P,T\cup
T^{-1},F,\amarking_0)$ where $T^{-1}$ is a set of formal
inverses of the transitions in $T$, i.e., $F(p,t^{-1})=F(t,p)$ and
$F(t^{-1},p)=F(p,t)$ for all $t\in T$. To model check $\URG{\apn}$ against
$\aformula$, the idea is to consider a depth $d$ unrolling of
$\URG{\overline{\apn}}$. However, when following inverse transitions $\amarking'
[t^{-1}\rangle \amarking$, reachability checks are needed to guarantee the
target marking $\amarking$ belongs to the domain $D$ of structure $\URG{\apn}$.
These checks are effective by Theorem~\ref{theo:mayr}
quoted from~\cite{Mayr84,Kosaraju82,Leroux11}.
More formally, we consider the directed graph $\amodel' = \pair{W'}{R'}$ defined
by\vspace{0.1cm}
\begin{iteMize}{$\star$}
\itemsep 0 cm
\item $W' \subseteq \Nat^P$  and $R'$ is the restriction of $\step{}$ to $W'$.
\item For $\amarking \in \Nat^{P}$ we set $\amarking \in W'$ $\equivdef$
      \begin{enumerate}
      \item $\amarking \in D$,
      \item there is a sequence of transitions $s \in (T \cup T^{-1})^{*}$ of length at
            most $d$ such that $\amarking_0 [s \rangle \amarking$.
      \end{enumerate}
\end{iteMize}
\vspace{0.1cm}Checking $\amarking_0 [s \rangle \amarking$ is easy
whereas $\amarking \in D$ requires a reachability check.
Observe that $\amodel'$ is finite and effectively constructible. The
cardinal of $W'$ is exponential in $d$.
One can
 show that $\amodel', \amarking_0 \models \aformula$ iff
$\pair{D}{\step{}}, \amarking_0 \models \aformula$.
Hence, $\MC{\ML(\bo{},\bo{}^{-1})}{URG}$ is decidable, because
model-checking $\ML$ over finite structures is a decidable problem that
takes polynomial time. 
\end{proof} 

The best known decision procedures for Petri net reachability are
non primitive recursive, which provides the worst possible and
hopefully not tight upper bound to the complexity of the model-checking
problem $\MC{\ML(\bo{},\bo{}^{-1})}{URG}$. Unfortunately, it might well
be the case that this upper complexity bound is tight, for we shall
(in turn) reduce Petri net reachability to the above model-checking
problem in Section~\ref{section-hardness}. 

We introduce another decision problem about $\ML$ that is closely
related to first-order model-checking over reachability graphs.  The
\defstyle{validity problem $\VAL{\ML}{URG}$}, also known as
\emph{global model-checking}, is stated as follows:
\vspace{0.1cm}
\begin{description}
\item[\quad input:]  a Petri net  $\apn=(P,T,F,\amarking_0)$ that
induces the structure $\URG{N}=\pair{D}{init, \step{}, \step{*}, \step{+},=}$,
 and a modal formula $\aformula \in \ML$.
\item[\quad question:]
$\pair{D}{\step{}}, \amarking \models \aformula$
for every marking $\amarking \in D$ ?
\end{description}
\vspace{0.1cm}
As observed earlier, formulae from $\ML(\bo{}, \bo{}^{-1})$ can be
viewed as first-order formulae in $\FO(\step{})$. Therefore, using modal
languages in specifications is a way to consider fragments of
$\FO(\step{})$. Indeed, given a modal formula
$\aformula$ in  $\ML(\bo{},
\bo{}^{-1})$, one can compute in linear time a first-order formula $\aformula'$
with only two individual variables (see
e.g.~\cite{Blackburn&deRijke&Venema01}) that satisfies: for every
Petri net $\apn$ we have $\RG{\apn} \models \aformula'$
iff $\RG{\apn}, \amarking \models \aformula$ for every
marking $\amarking$ in $\reach{\apn}$. Hence,
the validity problem $\VAL{\ML}{URG}$
appears as a natural counterpart to
the model-checking problem for $\FO$
over unlabelled reachability graphs of Petri nets.
We will see in the next section
that both problems are undecidable.

We conclude the section by introducing an extension of $\ML$ that admits
quantifier-free formulae from Presburger arithmetic as atomic propositions. The
idea is to pose arithmetical constraints on the numbers of tokens in places, and
thus to increase the expressiveness of $\ML$. We call this logic $\PAML$ and it
will be mainly used in decidability results in
Section~\ref{section-taming-undecidability}. The domain of the structure for
$\PAML$ needs to be of the form $\Nat^{P}$. More precisely, with terms
$
\aterm ::=  a \times \aplace \ \mid \ \aterm + \aterm
$
where $\aplace$ is a place and $a \in \Zed$ we define $\PAML$ from
$\ML$ by adding atomic formulae $\aformulabis$
defined by $$
\aformulabis ::=
\top \ \mid \
\aterm \leq k \ \mid \
\aterm \geq k \ \mid \
\aterm \equiv_c k' \ \mid \
\aformulabis \wedge \aformulabis
\ \mid \ \neg \aformulabis
.$$
Here, $\top$ is the truth constant, $c \in \Nat \setminus \set{0,1}$,
$k \in \Zed$ and $k' \in \Nat$. The definition of
$\pair{\reach{\apn}}{\amarking}\models \aformulabis$ depends on the definition
of satisfaction of $\aformulabis$ in Presburger arithmetic by a tuple
$\amarking$. The details are as expected and we omit them here.
It can be shown that $\MC{\PAML(\bo{},\bo{}^{-1})}{URG}$ is
decidable. The proof is similar to the proof of
Proposition~\ref{proposition-preliminaries-ml2}.

\section{Structural Properties of Unlabelled Net  Reachability Graphs}
\label{section-URG}

We study the decidability status of model checking unlabelled
reachability graphs of Petri nets against the first-order and modal
logics defined in the previous section.  Recall that the logics are
designed to expressing purely graph-theoretical properties of
reachability graphs.
\subsection{A proof schema for undecidability of $\FO(\step{})$}
\label{section-undec-FO}


To establish undecidability of $\MC{\FO(\step{})}{URG}$, model
checking reachability graphs against first-order specifications, we
provide a reduction of the equality problem for reachability sets. For
two Petri nets $\apn_1$ and $\apn_2$ with identical sets of places,
Hack proved it to be undecidable whether the sets of reachable
markings $\reach{\apn_1}$ and $\reach{\apn_2}$ coincide
(Theorem~\ref{theo-hack} recalls this result from~\cite{Hack76}). To
encode the equality problem into a first-order model checking problem,
we join $\apn_1$ and $\apn_2$ in a third Petri net
$\overline{\apn}$. The construction ensures that equality of the
reachability sets can be checked with a first-order query:
$\reach{\apn_1}= \reach{\apn_2}$ if and only if $\RG{\overline{\apn}}
\models \aformula$. Interestingly, $\aformula$ is a fixed formula and
thus independent of the inputs $\apn_1$ and $\apn_2$.
Before we turn to the technicalities, we sketch the idea of the
construction and comment on why it yields so much expressiveness. With
an initial guess, $\overline{\apn}$ decides to simulate either
$\apn_1$ or $\apn_2$. At any time, $\overline{\apn}$ may stop the
simulation. Then $\overline{\apn}$ either starts behaving in different
ways according to the initial choice between $\apn_1$ and
$\apn_2$. Alternatively, $\overline{\apn}$ may forget this choice and
enter a deadlock marking $M$ that reflects the last marking of
$\apn_1$ or $\apn_2$ in the simulation.

The reachability sets of $\apn_1$ and $\apn_2$ are equal if and only
if every simulation result $M$ can be obtained from both, $\apn_1$ and
$\apn_2$. But inspecting $M$ in isolation does not reveal whether it
stemmed from $\apn_1$ or $\apn_2$. The idea is in the different
behaviours that recall the initial guess when the simulation
ends. They yield a neighbourhood of $M$ in the reachability graph of
$\overline{\apn}$ that reveals the origin of the marking. Indeed, with
\emph{finite experiments} we can check whether $M$ is found in the
simulation of $\apn_1$ or $\apn_2$. Equality of the reachability sets
is then checked by a formula $\aformula$ which requires that, for any
simulation result $M$, both experiments witnessing for $\apn_1$ and
$\apn_2$ succeed. The experiments consist of one backward transition
and some forward transitions. Backward transitions reconstruct the
initial choice, and forward transitions distinguish the nets $\apn_1$
and $\apn_2$.

The strength of this construction stems from the combination of two
ideas. A Petri net can (i) store choices over arbitrarily long
histories and
  (ii) reveal this propagated information in local structures. These
  structures can be characterised by finite back and forth experiments
  that are expressed in terms of first-order formulae.

\vspace{0.1cm}

\noindent\emph{Construction.}\ The two nets $\apn_1$ and $\apn_2$ to be compared
for equality of reachability sets share all places.
The constructed net, $\overline{\apn}$, has these places together with an initialization place $\aplace$, two control places $\aplace_1$ and $\aplace_2$, and additional places $\aplace'_1, \aplace''_1$, and $\aplace'_2$ that we will elaborate on below. The initialization place is the only place that is initially marked, by a single token.

As transitions, $\overline{\apn}$ has the disjoint union of the transitions of $\apn_1$ and $\apn_2$, plus additional transitions that we introduce now together with an explanation of their intended behaviour.
The original transitions are put \insl with the respective control places.
Furthermore, we have two concurrent transitions $\atransition^1_{c}, \atransition^2_{c}$ that consume the initial token and mark either $\aplace_1$ and all
places marked in the initial configuration of $\apn_1$ or $\aplace_2$
and all places marked in the initial configuration of $\apn_2$.
Firing $\atransition^1_{c}$ starts the simulation of $\apn_1$, and similar for $\atransition^2_{c}$.
Each subnet $\apn_1$ and $\apn_2$ may be stopped at any time by firing transitions $\atransition^1_{end}$ and $\atransition^2_{end}$ that move the token from the control place $\aplace_1$ or $\aplace_2$ to the place $\aplace'_1$ or $\aplace'_2$, respectively.
As a result, the token count on the places of $\apn_1$ and $\apn_2$ is not changed any more.

When the transitions $\atransition^1_{end}$ and $\atransition^2_{end}$ have been fired, $\overline{N}$ behaves as indicated in Figure~\ref{GRAPHE1} below $\amarking_1$ and $\amarking_2$, respectively.
At a marking $\amarking_1$, place $\aplace'_1$ enables a transition $\atransition^1_{\ell}$ which puts a token on $\aplace''_1$, depicted by $\amarking_{\circlearrowleft}$ in the figure.
The place enables a transition $\atransition_{sl}$ \insl.
Furthermore, two transitions $\atransition^1_{dl}$ and $\atransition^2_{dl}$ (from $\amarking_1$ to $\amarking_\ell$ and from $\amarking_2$ to $\amarking_r$) empty the
places $\aplace'_1$ and $\aplace'_2$. 
The markings reached by these transitions are designed to be deadlocks.
Moreover, by construction of $\overline{\apn}$,
deadlock markings can only be reached this way (as $\amarking_\ell$ or
$\amarking_r$ or both). 
Since, firing $\atransition^1_{dl}$ or $\atransition^2_{dl}$ lets $\overline{N}$ forget the index $1$ or $2$ of the net that was simulated, we have the following relationship.
Whenever a marking $\amarking$ is reached both in $\apn_1$ and $\apn_2$, the corresponding markings in $\overline{\apn}$ lead to $\amarking_\ell=\amarking_r$

\begin{figure}[htbp]
\begin{center}
{\small 
  %
  \begin{tikzpicture}[scale=.8]
    \tikzstyle{every transition}=[fill, minimum width=5mm, minimum height=2mm];
    \tikzstyle{arc}=[arrows=-latex,auto=left];
    \tikzstyle{multiarc}=[arrows = -latex, bend left, thick, shorten >=2pt];
    \tikzstyle{subnet}=[draw,dashed,shape=regular polygon, regular polygon sides=6];
    
    \draw
    node[subnet](N1){$\apn_1$}
    node[subnet, node distance=3cm, right=of N1](N2){$\apn_2$}  
    node [node distance=1.4cm, right=of N1](tempo){}
    node [node distance=.5cm, above=of tempo](init){$\amarking_0'$}
    node[below=of N1](z1){$\amarking_1$}
    node[below=of N2](z2){$\amarking_2$}
    node[below left=of z1](y){$\amarking_{\circlearrowleft}$}
    node[below right=of z1](zl){$\amarking_\ell$}
    node[below left=of z2](zr){$\amarking_r$}
    (init)edge [arc,swap] node {$\atransition^1_c$} (N1)
    (init)edge [arc] node {$\atransition^2_c$} (N2)
    (N1) edge [dashed, arc,swap] node {$\atransition^1_{end}$} (z1)
    (N2) edge [dashed, arc,swap] node {$\atransition^2_{end}$} (z2)
    (z1) edge [arc,swap] node {$\atransition^1_\ell$} (y)
    (z1) edge [arc] node {$\atransition^1_{dl}$} (zl)
    (z2) edge [arc,swap] node {$\atransition^2_{dl}$} (zr)
    (y) edge[out=160,in=200,loop,arc,swap] node {$\atransition_{sl}$}(y)
    ;
\end{tikzpicture}
}
%

\end{center}
\caption{Reachability graph of $\overline{\apn}$}\label{GRAPHE1}
\end{figure}
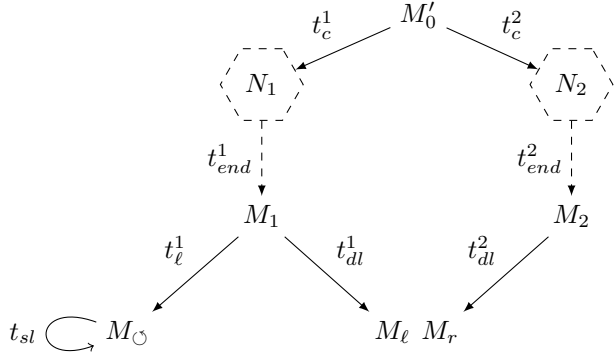
\mbox{}\\
A formula 
expressing equality of the reachability sets of
$\apn_1$ and $\apn_2$ (without recycling variables) is defined hereafter:
$$\aformula \egdef \forall \ \avariableter \ (\neg \exists  \avariableter'\;  \avariableter
\rightarrow  \avariableter')
\Rightarrow (\exists  \avariableter_1\;  \avariableter_1 \rightarrow
\avariableter \wedge
\lsuc( \avariableter_1)) \wedge (\exists  \avariableter_2\;
\avariableter_2\rightarrow  \avariableter
\wedge \neg \lsuc( \avariableter_2))
$$
%
Formula $\lsuc(\avariable) \egdef \exists \ \avariablebis \  ( \avariable
\step{}
\avariablebis \wedge  \avariablebis
\step{} \avariablebis)$ indicates that $\avariable$ has a successor that has a \sloop.
\newcommand{\stmntlemunschem}{
$\reach{\apn_1} = \reach{\apn_2}$ if and only if $\RG{\overline{\apn}} \models
\aformula$.
}
\begin{lemma} \label{lemma-undecidability-schema}
\stmntlemunschem
\end{lemma}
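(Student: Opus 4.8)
The plan is to prove the biconditional in Lemma~\ref{lemma-undecidability-schema} by unpacking the formula $\aformula$ and matching its truth conditions against the structure of $\RG{\overline{\apn}}$ described in the construction and Figure~\ref{GRAPHE1}. First I would observe that $\aformula$ quantifies universally over markings $\avariableter$ that are deadlocks (the antecedent $\neg \exists \avariableter'\; \avariableter \rightarrow \avariableter'$), and for each such deadlock demands two incoming edges: one from a predecessor satisfying $\lsuc$ and one from a predecessor not satisfying $\lsuc$. So the first key step is to characterise exactly which markings of $\overline{\apn}$ are deadlocks: by the construction, the only deadlock markings are those of the form $\amarking_\ell$ (reached by firing $\atransition^1_{dl}$ from some $\amarking_1$) or $\amarking_r$ (reached by firing $\atransition^2_{dl}$ from some $\amarking_2$), since every other reachable marking still enables some transition. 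I would also record that these $\amarking_\ell, \amarking_r$ markings consist only of tokens on the shared places (the index and control places $\aplace,\aplace_1,\aplace_2,\aplace'_1,\aplace'_2,\aplace''_1$ having been emptied), and that a marking $m$ on the shared places is of the form $\amarking_\ell$ for some run iff $m$ is reachable in $\apn_1$, and of the form $\amarking_r$ iff $m$ is reachable in $\apn_2$.

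The second key step is to pin down $\lsuc$ on the relevant predecessors. A deadlock $\amarking_\ell$ has as predecessors only markings of the form $\amarking_1$ (i.e.\ the simulation of $\apn_1$ stopped, giving a token on $\aplace'_1$). From such a $\amarking_1$, the transition $\atransition^1_\ell$ leads to $\amarking_{\circlearrowleft}$, which carries a token on $\aplace''_1$ and therefore enables the self-loop $\atransition_{sl}$; hence $\amarking_1$ has a successor with a \sloop, so $\lsuc(\amarking_1)$ holds. Symmetrically, a deadlock $\amarking_r$ has as predecessors only markings of the form $\amarking_2$, and from $\amarking_2$ the only outgoing transition is $\atransition^2_{dl}$ (there is no $\atransition^2_\ell$, no $\aplace''_2$, no self-loop on that side), so no successor of $\amarking_2$ has a \sloop and $\lsuc(\amarking_2)$ is false. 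Thus the predecessors on the $\apn_1$-side always satisfy $\lsuc$ and those on the $\apn_2$-side never do. (One must also check that a marking cannot simultaneously be both an $\amarking_1$-type and $\amarking_2$-type predecessor — it cannot, because the index is recorded by disjoint control places.)

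Combining these, for a fixed deadlock $d$: the conjunct $\exists \avariableter_1\; \avariableter_1 \rightarrow \avariableter \wedge \lsuc(\avariableter_1)$ holds iff $d$ has an $\apn_1$-side predecessor, i.e.\ iff $d = \amarking_\ell$ for some run, i.e.\ iff (the shared-place marking underlying) $d$ is reachable in $\apn_1$; symmetrically the conjunct with $\neg\lsuc$ holds iff $d$ is reachable in $\apn_2$. Hence $\aformula$ holds in $\RG{\overline{\apn}}$ iff every deadlock marking is reachable both as an $\amarking_\ell$ and as an $\amarking_r$. Since the deadlock markings are exactly the union of $\{\amarking_\ell \mid \amarking_\ell \text{ reachable in } \apn_1\}$ and $\{\amarking_r \mid \amarking_r \text{ reachable in } \apn_2\}$ over shared-place markings, this says precisely $\reach{\apn_1} = \reach{\apn_2}$: if the sets are equal each deadlock is witnessed on both sides, and if some $m \in \reach{\apn_1}\setminus\reach{\apn_2}$ (or vice versa) then the corresponding deadlock fails the missing conjunct. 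I expect the main obstacle to be the careful bookkeeping in the first step — verifying that the auxiliary places $\aplace'_1,\aplace''_1,\aplace'_2$ and the transitions $\atransition^1_\ell,\atransition_{sl},\atransition^1_{dl},\atransition^2_{dl},\atransition^i_{end},\atransition^i_c$ indeed make the reachability graph look exactly like Figure~\ref{GRAPHE1} with no spurious edges, no unintended deadlocks, and in particular that putting the original transitions \insl with the control places does not disturb the token counts on shared places once an $\atransition^i_{end}$ has fired; the rest is then a direct reading-off of the semantics of $\aformula$.
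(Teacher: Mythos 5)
Your proposal is correct and follows essentially the same route as the paper's proof: you identify the deadlocks as exactly the markings reached via $\atransition^1_{dl}$ or $\atransition^2_{dl}$, use $\lsuc$ to tell apart the $\amarking_1$-type and $\amarking_2$-type predecessors (via the self-loop reachable only on the $\apn_1$ side), and observe that the underlying shared-place marking of a deadlock is reachable in $\apn_i$ iff the deadlock has a predecessor of the corresponding type. The only presentational difference is that you establish the truth condition of $\aformula$ once as a biconditional, whereas the paper argues the two implications separately; the content is the same.
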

\begin{proof}
  For the implication from left to right, consider a deadlock
  $\amarking$.  Marking $\amarking$ is reachable only via
  $\atransition^1_{dl}$ or $\atransition^2_{dl}$, say $\amarking
  _1\stepbis{\atransition^1_{dl}} \amarking$.  Then marking
  $\amarking_1$ satisfies $\lsuc$ and stems from a marking
  $\amarking_1' \stepbis{\atransition^1_{end}} \amarking_1$ of
  $\apn_1$. The hypothesis on equal reachability sets yields a marking
  $\amarking_2'$ of $\apn_2$ that leads by transition
  $\atransition^2_{end}$ to a marking $\amarking_2$ satisfying $\neg
  \lsuc$ as required.

  In turn, if $\aformula$ holds we establish two inclusions.  To show
  $\reach{\apn_1} \subseteq \reach{\apn_2}$, consider marking
  $\amarking'_1$ reachable via sequence $s_1$ in $\apn_1$. In
  $\overline{\apn}$, the marking can be prolonged to a deadlock $M$
  with $\amarking_0'\stepbis{\atransition_c^1} \amarking_0^1
  \stepbis{s_1} \amarking_1' \stepbis{\atransition_{end}^1 }
  \amarking_1 \stepbis{\atransition_{dl } ^1 } \amarking$. Here,
  $\amarking_1$ satisfies $\lsuc$. But $\aformula$ yields another
  predecessor $\amarking_2$ of $\amarking$ with $\amarking_2 \neq
  \amarking_1$. To avoid the \sloop, marking $\amarking_2$ has to
  result from a sequence $\amarking_0' \stepbis{\atransition_c^2}
  \amarking_0^2\stepbis{s_2} \amarking_2'
  \stepbis{\atransition_{end}^2 } \amarking_2
  \stepbis{\atransition_{dl } ^2 } \amarking$. It is readily checked
  that $\amarking_1'$ and $\amarking_2'$ coincide up to the token on
  the control place. Hence, $\amarking_1' \in \reach{\apn_2}$ as
  required.
\end{proof}
\begin{corollary} \label{prop:undecFO}
  $\MC{\FO(\step{})}{URG}$ is undecidable, already for the fixed
  formula $\aformula$ given in this section.
\end{corollary}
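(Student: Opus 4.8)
The plan is to derive Corollary~\ref{prop:undecFO} directly from Lemma~\ref{lemma-undecidability-schema} together with Hack's undecidability result (Theorem~\ref{theo-hack}). The construction of $\overline{\apn}$ from two Petri nets $\apn_1,\apn_2$ with identical sets of places is effective: one simply takes the disjoint union of transitions, puts each original transition \insl with the appropriate control place, and adds the finitely many new places ($\aplace,\aplace_1,\aplace_2,\aplace'_1,\aplace''_1,\aplace'_2$) and transitions ($\atransition^i_c,\atransition^i_{end},\atransition^1_\ell,\atransition_{sl},\atransition^i_{dl}$) described in the Construction. Thus the map $(\apn_1,\apn_2)\mapsto\overline{\apn}$ is computable, and the sentence $\aformula\in\FO(\step{})$ is fixed once and for all — it does not depend on the inputs.

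First I would note that by Lemma~\ref{lemma-undecidability-schema} we have, for every such pair, $\reach{\apn_1}=\reach{\apn_2}$ iff $\RG{\overline{\apn}}\models\aformula$. Hence any algorithm deciding $\MC{\FO(\step{})}{URG}$ — even one restricted to deciding whether the single fixed sentence $\aformula$ holds in the plain unlabelled reachability graph of an arbitrary input net — could be composed with the effective construction above to decide equality of reachability sets of arbitrary Petri nets with common places, contradicting Theorem~\ref{theo-hack}. Therefore $\MC{\FO(\step{})}{URG}$ is undecidable, and in fact already the restricted problem ``given $\apn$, does $\RG{\apn}\models\aformula$?'' for this fixed $\aformula$ is undecidable. (The same argument yields undecidability for $\MC{\FO(\step{})}{RG}$, the problem phrased over $\RG{\apn}$; since $\FO$ on $\URG{\apn}$ subsumes $\FO(\step{})$ on $\RG{\apn}$, the statement for $\URG$ follows a fortiori, and one should also check $\aformula$ uses only the predicate $\step{}$ and $=$-free sub-formulae as written, which it does.)

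The only real content beyond quoting the lemma is the observation about reductions and fixed formulae, so there is essentially no obstacle here: the hard work has already been done in establishing Lemma~\ref{lemma-undecidability-schema}. If anything, the one point deserving a sentence of care is that the domain of $\RG{\overline{\apn}}$ is exactly $\reach{\overline{\apn}}$ with no distinguished initial marking, so the reading of $\aformula$ as ``for all deadlocks $\avariableter$ \ldots'' ranges over precisely the reachable deadlocks, which is what the lemma's proof uses when it argues that reachable deadlocks arise only via $\atransition^1_{dl}$ or $\atransition^2_{dl}$; this is guaranteed by the clause ``deadlock markings can only be reached this way'' in the construction. I therefore expect the proof of the corollary to be a single short paragraph: invoke effectiveness of the construction, invoke Lemma~\ref{lemma-undecidability-schema}, and conclude by reduction from Theorem~\ref{theo-hack}, emphasising that $\aformula$ is fixed.
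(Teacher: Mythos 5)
Your proposal is correct and is exactly the argument the paper intends: the corollary follows from Lemma~\ref{lemma-undecidability-schema} by composing the effective construction of $\overline{\apn}$ with Theorem~\ref{theo-hack}, the fixed formula $\aformula$ being independent of the input nets. The points you flag (effectiveness of the construction, the domain of $\RG{\overline{\apn}}$ being the reachable markings, and the $\RG$ versus $\URG$ reading for the fragment $\FO(\step{})$) are precisely the routine checks the paper leaves implicit, so there is nothing to add.
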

By recycling variables in $\aformula$ above, we get a sharp result that marks
the undecidability border of model checking against $\FO(\step{})$ by two
variables. Model checking $\FO(\step{})$ restricted to one variable is
decidable.
\newcommand{\stmntpropundecFOthree}{
There exists a formula $\aformula$ in $\FO(\step{})$ with two individual variables such that
$\MC{\FO(\step{})}{URG}$ restricted to  $\aformula$  is undecidable.
}

\begin{theorem} \label{prop:undecFO3}
\stmntpropundecFOthree
\end{theorem}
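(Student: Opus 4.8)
The plan is to revisit the formula $\aformula$ of Lemma~\ref{lemma-undecidability-schema} and rewrite it so that only two individual variables are used, while preserving the equivalence $\reach{\apn_1} = \reach{\apn_2}$ iff $\RG{\overline{\apn}}\models\aformula$. The key observation is that $\aformula$ has quantifier \emph{alternation depth} essentially $3$ (outer $\forall \avariableter$, then the existential predecessors $\exists \avariableter_1, \exists \avariableter_2$, then the witnesses hidden inside $\lsuc$), but never more than two quantified variables are \emph{live} at the same time along any branch of the syntax tree. Concretely, once the outer variable is named $\avariable$, I can name the predecessor variable $\avariablebis$; and inside $\lsuc(\avariablebis)\egdef \exists \avariableter\,(\avariablebis \step{}\avariableter \wedge \avariableter\step{}\avariableter)$ the variable $\avariable$ is no longer referenced, so I may reuse it in place of $\avariableter$. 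The standard fact that any first-order sentence of quantifier rank/width $k$ over a binary relational signature can be written with $k$ reused variable names (see e.g.~\cite{Blackburn&deRijke&Venema01}, the modal/two-variable correspondence) is exactly what makes this recycling sound.

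The first step is to write out the recycled formula explicitly. Set
$$
\lsuc'(\avariable) \egdef \exists \avariablebis\,(\avariable \step{} \avariablebis \wedge \avariablebis \step{} \avariablebis),
$$
and then
$$
\aformula \egdef \forall \avariable\, \bigl[(\neg \exists \avariablebis\; \avariable \step{} \avariablebis) \Rightarrow \exists \avariablebis\,(\avariablebis \step{} \avariable \wedge \lsuc'(\avariablebis)) \wedge \exists \avariablebis\,(\avariablebis \step{} \avariable \wedge \neg \lsuc'(\avariablebis))\bigr].
$$
Here $\lsuc'(\avariablebis)$ unfolds to $\exists \avariable\,(\avariablebis \step{} \avariable \wedge \avariable \step{} \avariable)$, which is legitimate since the outer $\avariable$ has been ``used up'' by the time we descend into this subformula: the bound occurrence shadows it, and no free occurrence of the outer $\avariable$ is needed inside. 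Thus the whole sentence uses only the two names $\avariable$ and $\avariablebis$. The second step is to verify that this recycled sentence is logically equivalent to the original $\aformula$ of Lemma~\ref{lemma-undecidability-schema}: one simply $\alpha$-renames bound variables back apart and checks that nothing in the semantics changed, which is immediate.

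The third and final step is to invoke Lemma~\ref{lemma-undecidability-schema} verbatim: since the recycled $\aformula$ is equivalent to the original, we still have $\reach{\apn_1} = \reach{\apn_2}$ iff $\RG{\overline{\apn}}\models\aformula$, and this $\aformula$ is now a fixed sentence in $\FO(\step{})$ with two individual variables. Undecidability of $\MC{\FO(\step{})}{URG}$ restricted to this single fixed $\aformula$ then follows from Theorem~\ref{theo-hack} (undecidability of equality of reachability sets), exactly as in Corollary~\ref{prop:undecFO}. I do not expect any real obstacle here: the only point requiring a moment's care is confirming that the nesting structure of the original $\aformula$ genuinely admits a two-variable rewriting, i.e.\ that at no node of the formula tree do three distinct quantified variables occur free simultaneously — and inspection of $\aformula$ (outer $\forall$, a single predecessor witness, a single loop witness, each introduced and discharged in turn rather than in parallel) shows this is the case. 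The remark following the statement — that one variable suffices for decidability — is orthogonal and can be dispatched separately by noting that $\FO(\step{})$ with one variable over $\RG{\apn}$ can only assert, of each marking, properties of the form ``has an outgoing edge'' / ``has a self-loop'' and Boolean combinations thereof, whose truth over all of $\reach{\apn}$ reduces to coverability queries decidable by Proposition~\ref{proposition-reach-semilinear}.
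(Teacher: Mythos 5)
Your proposal is correct and follows essentially the same route as the paper: the paper's proof likewise observes that in the formula of Lemma~\ref{lemma-undecidability-schema} at most two quantified variables are ever simultaneously live, recycles the variable names to obtain an equivalent two-variable sentence (citing the standard variable-recycling technique), and concludes via the undecidability of reachability-set equality. Your explicit recycled formula matches the paper's up to renaming, so nothing further is needed.
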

\begin{proof} It is sufficient to observe that formula $\aformula$ below
$$
\forall  \avariableter \ (\neg \exists  \avariableter'\;  \avariableter
\rightarrow  \avariableter') 
\Rightarrow (\exists  \avariableter_1\;  \avariableter_1 \rightarrow 
\avariableter \wedge 
\lsuc( \avariableter_1)) \wedge (\exists  \avariableter_2\; 
\avariableter_2\rightarrow  \avariableter 
\wedge \neg \lsuc( \avariableter_2))
$$
with 
$\lsuc(\avariable) \egdef \exists \avariablebis \  ( \avariable
\step{}
\avariablebis \wedge  \avariablebis                                          
\step{} \avariablebis)$ is logically equivalent to the formula
$$
\forall  \avariableter \ (\neg \exists  \avariableter'\;  \avariableter
\rightarrow  \avariableter') 
\Rightarrow (\exists  \avariableter'\;  \avariableter' \rightarrow 
\avariableter \wedge 
\lsuc'( \avariableter')) \wedge (\exists  \avariableter'\; 
\avariableter'\rightarrow  \avariableter 
\wedge \neg \lsuc'( \avariableter'))
$$
where 
$\lsuc'(\avariableter') \egdef \exists \ \avariableter \  ( \avariableter'
\step{}
\avariableter \wedge  \avariableter                                         
\step{} \avariableter)$. Recycling of variables is explained
e.g. in~\cite{Gabbay81}.
\end{proof}


Moreover, combined with the fact that model checking first order logic
for automatic structures is decidable, Theorem~\ref{prop:undecFO3}
leads to the following impossibility result.
\begin{corollary}\label{coro:notauto}
  There is no algorithm to construct an automatic graph isomorphic to
  the unlabelled reachability graph of a Petri net.
\end{corollary}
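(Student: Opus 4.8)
The plan is a short reduction showing that such an algorithm would contradict Theorem~\ref{prop:undecFO3}. Suppose, toward a contradiction, that there is an algorithm $\mathcal{A}$ that on input a Petri net $\apn$ outputs an automatic presentation of some graph $\mathcal{G}$ with $\mathcal{G} \cong \RG{\apn}$, where $\RG{\apn} = \pair{\reach{\apn}}{\step{}}$ is the plain unlabelled reachability graph.

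First I would fix the two-variable sentence $\aformula \in \FO(\step{})$ supplied by Theorem~\ref{prop:undecFO3}; since $\aformula$ mentions only the binary predicate $\step{}$ and equality, it is a genuine first-order sentence over the signature of $\mathcal{G}$, and the problem of deciding, given $\apn$, whether $\RG{\apn} \models \aformula$ (which is exactly $\MC{\FO(\step{})}{URG}$ restricted to this $\aformula$, as $\URG{\apn}$ and $\RG{\apn}$ share the domain $\reach{\apn}$ and $\aformula$ uses no other predicate) is undecidable. I would then build a decision procedure for precisely this problem: on input $\apn$, run $\mathcal{A}$ to obtain an automatic presentation of $\mathcal{G}$, and apply the decision procedure of Theorem~\ref{theo-dec-auto} to test whether $\mathcal{G} \models \aformula$. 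Since first-order truth is invariant under isomorphism and $\mathcal{G} \cong \RG{\apn}$, this answer is exactly whether $\RG{\apn} \models \aformula$. That decides an undecidable problem, so no such $\mathcal{A}$ exists.

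There is essentially no hard step; the proof is just a careful composition of two effective procedures. The points to verify are: (i) effectiveness of the composition, which holds because $\mathcal{A}$ is by assumption an algorithm and the model checker for automatic structures of Theorem~\ref{theo-dec-auto} is effective; (ii) that $\aformula$ lies in the fragment handled by Theorem~\ref{theo-dec-auto}, which it does, as that theorem treats full $\FO$; and (iii) that one and the \emph{same} fixed sentence $\aformula$ is used throughout, so that the undecidability of a single fixed property is what rules out the construction. The only mild subtlety — the ``main obstacle'', such as it is — is the isomorphism-invariance argument in (iii): one must use a purely graph-theoretic, isomorphism-closed property, which is precisely why Theorem~\ref{prop:undecFO3} (undecidability for a fixed formula of $\FO(\step{})$, with no $init$ and no reachability predicates) is the right input. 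If one prefers the richer structure $\URG{\apn}$, the same argument applies a fortiori, since an automatic presentation of $\URG{\apn}$ yields one for its reduct to $\step{}$.
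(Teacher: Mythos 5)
Your proof is correct and follows exactly the route the paper intends: the corollary is stated there as an immediate consequence of combining Theorem~\ref{prop:undecFO3} (undecidability of $\MC{\FO(\step{})}{URG}$ for a fixed two-variable sentence) with Theorem~\ref{theo-dec-auto} (decidability of $\FO$ model checking over automatic structures). Your added care about isomorphism-invariance and about using one fixed, purely graph-theoretic sentence is exactly the right justification for the step the paper leaves implicit.
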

%
%
%
%
Note that this negative result cannot follow directly from
complexity-theoretic considerations.
Indeed, even if the unlabelled reachability graph of a Petri net could be
represented as an automatic graph, this automatic graph could not be
used to decide on reachability of markings unless this representation
were in effective bijection with $\Nat^n$ (where $n$ is the number of
places).

%
%

Restricted to a single variable, model checking $\FO(\step{})$ becomes
decidable.
\newcommand{\stmntpropdecfoone}{
$\MC{\FO(\step{})}{URG}$ restricted to one individual variable is
decidable.
}
\begin{proposition} \label{proposition-decidability-fo1}
\stmntpropdecfoone
\end{proposition}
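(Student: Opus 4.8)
The plan is to exploit that, with only one individual variable and only the binary predicate $\step{}$ available, the sole non-trivial atomic formula is the \sloop test $\avariable \step{} \avariable$. Since every quantifier re-uses the same variable name and hence re-binds (shadows) the outer occurrence, a routine induction on the structure of formulas shows: for any one-variable sentence $\aformula \in \FO(\step{})$ and any two directed graphs $\pair{W}{R}$ and $\pair{W'}{R'}$ with $W,W' \neq \emptyset$, the graphs agree on $\aformula$ as soon as they have the same \emph{profile}, i.e.\ the same pair of Booleans ``$R$ contains a self-loop'' and ``some element of $W$ carries no self-loop''. (Equivalently, over non-empty structures every such sentence is logically equivalent to an effectively computable Boolean combination of $\exists \avariable\,(\avariable \step{} \avariable)$ and $\exists \avariable\,\neg(\avariable \step{} \avariable)$.) Hence, to decide $\URG{\apn} \models \aformula$ --- equivalently $\RG{\apn} \models \aformula$, as $\aformula$ mentions only $\step{}$ --- it suffices to compute the profile of $\RG{\apn}$, whose domain $\reach{\apn}$ is non-empty since $\amarking_0$ belongs to it, and then model-check $\aformula$ on the canonical finite graph realising that profile (one vertex with a loop, one vertex without a loop, or both).

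So the real work reduces to two questions: (1)~does some reachable marking carry a \sloop, and (2)~does some reachable marking carry no \sloop? I would first observe that $\amarking \step{} \amarking$ holds iff some \ntrt $\atransition$ (one with null effect on every place) is enabled at $\amarking$, i.e.\ $\amarking(\aplace) \geq F(\aplace,\atransition)$ for all $\aplace \in P$. Thus the set $U \subseteq \Nat^{n}$ of markings carrying a \sloop equals $\bigcup_{\atransition \text{ neutral}} (\vect{v}_{\atransition} + \Nat^{n})$, with $\vect{v}_{\atransition} = (F(\aplace,\atransition))_{\aplace \in P}$; this is a finite union of linear sets, hence effectively semilinear, and its complement $\Nat^{n} \setminus U$ is effectively semilinear too, because semilinear subsets of $\Nat^{n}$ form an effective Boolean algebra~\cite{Ginsburg&Spanier64}. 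Question~(1) is then exactly ``is some marking of the semilinear set $U$ reachable from $\amarking_0$?'', and question~(2) is the same for $\Nat^{n}\setminus U$; both are instances of semilinear reachability and are decidable by Proposition~\ref{proposition-reach-semilinear}.

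Assembling the pieces: compute $U$ and $\Nat^{n}\setminus U$ as semilinear sets, run the two semilinear-reachability tests to obtain the profile of $\RG{\apn}$, build the corresponding canonical finite graph, and evaluate $\aformula$ on it by ordinary finite model checking. This decides $\MC{\FO(\step{})}{URG}$ restricted to one individual variable.

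The only genuinely non-routine point is the first paragraph's observation --- that one-variable $\FO(\step{})$ detects nothing beyond the presence and the absence of a \sloop --- and within it, the induction that needs care with the re-binding of the variable under nested quantifiers. Once that is in place, everything else is either elementary bookkeeping with semilinear sets or a direct appeal to Proposition~\ref{proposition-reach-semilinear}.
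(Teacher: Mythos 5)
Your proposal is correct and follows essentially the same route as the paper: reduce every one-variable sentence to a Boolean combination of the self-loop tests $\exists\avariable\,(\avariable\step{}\avariable)$ and $\exists\avariable\,\neg(\avariable\step{}\avariable)$, observe that a marking has a \sloop iff it enables a \ntrt, and decide both tests via reachability of an effectively semilinear set (Proposition~\ref{proposition-reach-semilinear}). The only cosmetic difference is that the paper dispatches the first test to coverability (one instance per \ntrt) rather than to semilinear reachability, which your upward-closed set $U$ subsumes anyway.
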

\begin{proof}
Every sentence in $\FO(\step{})$ restricted to one individual variable 
is logically equivalent either to $\perp$, or to $\top$, or to a positive Boolean
formula with atomic formulae of one of the forms below:
\begin{enumerate}[(1)]
\itemsep 0 cm
\item $\exists  \avariable \  (\avariable \step{} \avariable)$
\item $\exists  \avariable \  \neg (\avariable \step{} \avariable)$
\item $\forall  \avariable \ (\avariable \step{} \avariable)$
\item $\forall  \avariable \ \neg (\avariable \step{} \avariable)$.
\end{enumerate}
Since (2) is the negation of (3) and (1) is the negation of (4), decidability is
obtained by evaluating 
(1) $\RG{\apn} \models \exists  \avariable \  (\avariable \step{} \avariable)$ 
and (3) $\RG{\apn} \models \forall  \avariable \ (\avariable \step{}
\avariable)$. (1) can be checked by solving one instance of the covering
problem for each \ntrt of the net whereas (3) can be checked by
solving a single instance of the reachability problem.
 Indeed, let
$T_{\circlearrowleft}$ be the subset of transitions of the net that leave
markings unchanged (neutral transitions). Then the set of markings specified hereafter is  
effectively semilinear:
$$
\asetter \egdef
\set{\amarking: \ {\rm not} \ \amarking [\atransition \rangle  \ {\rm for \ all} \ \atransition \in 
T_{\circlearrowleft}}
$$
We have {\em not} $\RG{\apn} \models \forall \ \avariable \ (\avariable \step{} \avariable)$
iff there is a marking $\amarking \in \asetter$ that is reachable, $\amarking_0
\step{*} \amarking$. With~\cite[Lemma 4.3]{Hack76} this reduces to an instance
of the reachability problem. 
\end{proof}

It is possible to play further with parameters. For instance, our undecidability proof
uses several reachability graphs with constant formulae. It is open whether
there is a fixed Petri net reachability graph for which
the model-checking problem for $\FO(\step{})$ is undecidable.
\subsection{Robustness of the proof schema}
Based on the previous proof schema, this section presents
undecidability results for subproblems of $\MC{\FO(\step{})}{URG}$.
More specifically, we consider the positive fragment, the forward
fragment, the restriction when the direction of edges is omitted, and
$\ML(\bo{},\bo{}^{-1})$.  For all these fragments, we establish
undecidability of model checking.
\subsubsection{Forgetting orientation}
Let $\lambda(\avariable, \avariable') \egdef (\avariable \step{}
\avariable') \vee (\avariable' \step{} \avariable)$.  Expressing
properties about $\RG{\apn}$ in $\FO(\lambda)$ amounts to getting rid
of the direction of edges of this graph.  Despite this weakening,
undecidability is still present for general Petri nets.
To instantiate the above argumentation, we have to identify deadlock
markings and analyse their environment.
In $\FO(\lambda)$, we augment markings encountered during the
simulation by $3$-cycles.  Then, the absence of $3$-cycles and an
environment without such cycles characterises deadlock markings.
\newcommand{\stmntpropundeclambda}{
$\MC{\FO(\lambda)}{URG}$ is undecidable.
}
\begin{proposition}\label{prop-undec-lambda}
\stmntpropundeclambda
\end{proposition}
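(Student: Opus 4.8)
The plan is to adapt the construction behind Lemma~\ref{lemma-undecidability-schema} so that it survives the loss of edge orientation. In the oriented setting, the formula $\aformula$ hinges on two things: (i) identifying deadlock markings via $\neg\exists\avariableter'\;\avariableter\step{}\avariableter'$, and (ii) distinguishing the two backward predecessors of a deadlock by the local pattern $\lsuc$, i.e.\ ``has a $\step{}$-successor carrying a \sloop''. When we only have $\lambda$, neither of these is directly available: $\lambda$ cannot see which endpoint of an edge is the source, so a deadlock marking $\amarking$ of $\overline{\apn}$ is indistinguishable, via $\lambda$ alone, from a marking that has incoming but no outgoing edges. The remedy, as the excerpt already announces, is to decorate every marking that occurs \emph{during the simulation phase} (i.e.\ every marking with the control place $\aplace_1$ or $\aplace_2$ still marked, and the various intermediate markings down to $\amarking_1,\amarking_2,\amarking_{\circlearrowleft}$) with an attached undirected $3$-cycle, while arranging that the genuine deadlock markings $\amarking_\ell,\amarking_r$ carry no $3$-cycle and have no neighbour (in the $\lambda$-graph) lying on a $3$-cycle. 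Then ``$\avariableter$ is a deadlock'' becomes expressible in $\FO(\lambda)$ as: $\avariableter$ is on no triangle and no $\lambda$-neighbour of $\avariableter$ is on a triangle.

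Concretely I would modify $\overline{\apn}$ as follows. Add three fresh places $a_1,a_2,a_3$ and transitions that, whenever the net is in the simulation phase, maintain one token in each of $a_1,a_2,a_3$ and provide three mutually inverse ``rotation'' transitions $a_1a_2a_3\!\to\!a_1a_2a_3$ realising the edges of a triangle on the three auxiliary markings differing only in these coordinates — the point being that the reachable markings then split into triples forming an undirected $K_3$. These auxiliary places are emptied precisely by $\atransition^1_{dl}$ and $\atransition^2_{dl}$ (the transitions producing $\amarking_\ell,\amarking_r$), so the deadlock markings sit outside every triangle; and because $\atransition^1_{dl},\atransition^2_{dl}$ are the \emph{only} way to reach a deadlock and they strip the triangle, no $\lambda$-neighbour of a deadlock marking lies on a triangle either. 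The role previously played by $\lsuc$ — recalling whether a deadlock's predecessor came from the $\apn_1$-side or the $\apn_2$-side — can be recovered by keeping the $\atransition_{sl}$ \sloop on one side only ($\amarking_{\circlearrowleft}$ still has a \sloop, which in $\FO(\lambda)$ is simply $\lambda(\avariable,\avariable)$, i.e.\ $\avariable\step{}\avariable$), so that exactly one of the two predecessors of a deadlock has a $\lambda$-neighbour carrying a $\lambda$-self-loop. One then writes down the $\FO(\lambda)$ analogue of $\aformula$: for every ``deadlock'' node $\avariableter$ (in the triangle-based sense above) there are two distinct $\lambda$-neighbours, one with the self-loop-neighbour property and one without, and prove the counterpart of Lemma~\ref{lemma-undecidability-schema}, namely $\reach{\apn_1}=\reach{\apn_2}$ iff $\RG{\overline{\apn}}\models\aformula_\lambda$. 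The reduction direction (right to left) again unfolds a reachable marking of $\apn_1$ into a deadlock and uses the second, triangle-free predecessor to extract a corresponding run of $\apn_2$.

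The main obstacle I expect is the \emph{soundness of the triangle gadget against spurious pollution of the $\lambda$-neighbourhoods}: because $\lambda$ ignores orientation, a careless encoding could let some marking of the triangle gadget be $\lambda$-adjacent to a genuine deadlock marking, or let the rotation transitions create extra edges among markings that already play a role (e.g.\ between $\amarking_1$ and $\amarking_{\circlearrowleft}$), thereby breaking the ``deadlock = not on a triangle, no neighbour on a triangle'' characterisation or the predecessor-counting argument. Making the gadget robust requires checking that (a) the auxiliary places $a_1,a_2,a_3$ are present in \emph{every} simulation-phase marking and absent from $\amarking_\ell,\amarking_r$; (b) the rotation transitions are \insl with the control places $\aplace_1,\aplace_2$ (or with a dedicated ``phase'' place) so they fire only during simulation and never touch a deadlock or its immediate undirected neighbourhood; and (c) no two distinct ``roles'' in Figure~\ref{GRAPHE1} get accidentally identified or linked by the new transitions. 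Once the gadget is verified to be clean in this sense, the equivalence proof is a routine back-and-forth argument mirroring the proof of Lemma~\ref{lemma-undecidability-schema}, and undecidability of $\MC{\FO(\lambda)}{URG}$ follows from Theorem~\ref{theo-hack}.
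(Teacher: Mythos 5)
Your plan is essentially the paper's own proof: attach an undirected $3$-cycle gadget to every simulation-phase marking, characterise deadlocks in $\FO(\lambda)$ as nodes whose closed $\lambda$-neighbourhood contains no triangle, and reuse the self-loop at $\amarking_{\circlearrowleft}$ (via $\lambda(\avariable,\avariable)$) to tell the two predecessors apart. One detail to repair when you write the formula out: your deadlock test must also include $\neg\lambda(\avariableter,\avariableter)$, since otherwise $\amarking_{\circlearrowleft}$ (which lies on no triangle and whose only neighbours, itself and $\amarking_1$, lie on no triangle) would pass the test and falsify the formula even when $\reach{\apn_1}=\reach{\apn_2}$; relatedly, the paper assumes w.l.o.g.\ that every transition of $\apn_1,\apn_2$ changes the marking, so that no spurious $\lambda$-self-loops pollute the $3$-cycle and $\varphi_l$ tests.
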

\begin{proof} 
We take advantage of the fact that $\FO(\lambda)$ can express that a node $x$
belongs to an undirected cycle of length three. A possible formula is:
\[
3cycle(\avariable) \egdef \exists \avariablebis \exists \avariableter\
(\lambda(\avariable,\avariablebis)\wedge\lambda(\avariablebis,\avariableter)\wedge\lambda(\avariableter,\avariable))\wedge
\neg(\lambda(\avariable,\avariable)\vee\lambda(\avariablebis,\avariablebis)\vee\lambda(\avariableter,\avariableter))
\]
Now consider two Petri nets $N_1$ and $N_2$ with identical sets of places. 
For $1\leq i\leq 3$, add to each net new places $p_i$ and transitions $t_i$ such
that $p_1$ contains initially one token, while $p_2$ and $p_3$ are empty.
Transition $t_i$ takes one token from $p_i$ and puts one token in $p_{i+1\mod
3}$. The resulting Petri nets have identical reachability sets if and only if
$N_1$ and $N_2$ have identical reachability sets. Therefore, equality of
reachability sets is undecidable for nets in which every reachable marking
belongs to some cycle of length three. Assuming that $N_1$ and $N_2$ have this
property, let $\overline{N}$ be the net constructed from $N_1$ and $N_2$ as
in 
the proof of 
Proposition~\ref{prop:undecFO3} 
(see also Figure~\ref{GRAPHE1}).
We can assume without loss of generality that every transition of $\apn_1$ and 
$\apn_2$ changes the current
marking (the other transitions do not affect the reachability sets and can be
removed). As a consequence, the reachability 
graphs of the augmented nets $N_1$ and $N_2$ have no \sloops, 
which is
required for the effectiveness of $3cycle(\avariable)$. The deadlock markings of
$\overline{\apn}$ are then exactly the markings that have no cycle of length 
one or three
and that are surrounded by nodes without 
cycles of length three: 
\[
dead(\avariableter)  \egdef \neg\lambda(\avariableter,\avariableter)\ \wedge\ 
\neg 3cycle(\avariableter) \wedge\
\forall \avariable\ \lambda(\avariableter,\avariable) \Rightarrow \neg
3cycle(\avariable).
\]  
Equality of the reachability sets of $N_1$ and $N_2$ is then expressed by
the formula $\aformula$ below
\[
\forall \avariableter\ dead(\avariableter)\Rightarrow (\exists \avariableter_1\ 
\lambda(\avariableter,\avariableter_1)\wedge \varphi_l(\avariableter_1))
\wedge (\exists \avariableter_2\ \lambda(\avariableter,\avariableter_2)\wedge \neg 
\varphi_l(\avariableter_2))
\]
where 
$\varphi_l(\avariableter) \egdef \exists \avariablebis\
\lambda(\avariableter,\avariablebis)\wedge 
\lambda(\avariablebis,\avariablebis)$.
We have $\reach{\apn_1} = \reach{\apn_2}$ iff
$\overline{\apn} \models \aformula$. 
By Theorem~\ref{theo-hack}, $\MC{\FO(\lambda)}{URG}$ is undecidable.
\end{proof}


\subsubsection{A well-known first-order fragment:  $\ML(\bo{},\bo{}^{-1})$}
\label{section-undecidability-ml}
To establish undecidability of the problem
$\VAL{\ML(\bo{},\bo{}^{-1})}{URG}$, we again provide a reduction from
the equality problem for Petri net reachability sets.
\newcommand{\stmntpropundecidabilityml}{
$\VAL{\ML(\bo{},\bo{}^{-1})}{URG}$ is undecidable.
}
\begin{proposition} \label{proposition-undecidability-ml}
\stmntpropundecidabilityml
\end{proposition}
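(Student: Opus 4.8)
The plan is to reduce the equality problem for Petri net reachability sets (Theorem~\ref{theo-hack}) to $\VAL{\ML(\bo{},\bo{}^{-1})}{URG}$, reusing the net $\overline{\apn}$ built from two nets $\apn_1,\apn_2$ with the same places in the proof of Lemma~\ref{lemma-undecidability-schema}, whose reachability graph is depicted in Figure~\ref{GRAPHE1}. I would first record the structural facts already established there: the deadlock markings of $\overline{\apn}$ are exactly the markings reachable as $\amarking_\ell$ (fired from a marking of type $\amarking_1$ through $\atransition^1_{dl}$) and/or as $\amarking_r$ (fired from a marking of type $\amarking_2$ through $\atransition^2_{dl}$); the predecessors of any deadlock are precisely the markings of type $\amarking_1$ and of type $\amarking_2$; a deadlock of $\overline{\apn}$ is reachable as $\amarking_\ell$ (resp.\ $\amarking_r$) iff its underlying marking is reachable in $\apn_1$ (resp.\ $\apn_2$); hence $\reach{\apn_1}=\reach{\apn_2}$ iff every deadlock of $\overline{\apn}$ is reachable in both ways.

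Next I would turn the fixed first-order sentence of Lemma~\ref{lemma-undecidability-schema} into a modal formula. Its leading universal quantifier over markings is precisely what the validity (global model-checking) quantifier provides, so only the matrix has to be rewritten: ``being a deadlock'', $\neg\exists\avariableter'\,\avariableter\step{}\avariableter'$, is $\bo{}\perp$, and ``having a predecessor satisfying $\aformulabis$'' is $\diam{}^{-1}\aformulabis$. The only subpredicate that does not translate literally is $\lsuc(\avariable)=\exists\avariablebis\,(\avariable\step{}\avariablebis\wedge\avariablebis\step{}\avariablebis)$, because ``$\avariablebis$ carries a self-loop'' is not expressible in the bounded-depth, nominal-free language $\ML(\bo{},\bo{}^{-1})$. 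This is not needed, though: the formula only inspects the discriminating subpredicate on predecessors of deadlocks, i.e.\ on markings of type $\amarking_1$ and $\amarking_2$, and these are already separated within a fixed radius --- from $\amarking_1$ one reaches $\amarking_{\circlearrowleft}$, which is non-terminal because it carries the self-loop $\atransition_{sl}$, whereas every successor of $\amarking_2$ is a deadlock. Hence $\amarking_1\models\diam{}\diam{}\top$ and $\amarking_2\not\models\diam{}\diam{}\top$. I therefore set $\theta\egdef\diam{}\diam{}\top$ and define the fixed modal formula
$$
\aformula\egdef\bo{}\perp \Rightarrow (\diam{}^{-1}\theta \wedge \diam{}^{-1}\neg\theta).
$$
I would then claim that $\RG{\overline{\apn}},\amarking\models\aformula$ for every marking $\amarking$ of $\overline{\apn}$ iff $\reach{\apn_1}=\reach{\apn_2}$.

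The verification of the claim proceeds along the two implications of the proof of Lemma~\ref{lemma-undecidability-schema}. At a non-deadlock marking $\aformula$ holds vacuously. At a deadlock $\amarking$, since every predecessor of $\amarking$ has type $\amarking_1$ or $\amarking_2$ and $\theta$ separates these two types, $\aformula$ holds at $\amarking$ iff $\amarking$ has an $\amarking_1$-predecessor (equivalently, its underlying marking lies in $\reach{\apn_1}$) and an $\amarking_2$-predecessor (equivalently, in $\reach{\apn_2}$). Quantifying over all deadlocks, $\aformula$ is valid on $\RG{\overline{\apn}}$ iff $\reach{\apn_1}\subseteq\reach{\apn_2}$ and $\reach{\apn_2}\subseteq\reach{\apn_1}$, i.e.\ iff $\reach{\apn_1}=\reach{\apn_2}$. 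With Theorem~\ref{theo-hack} this yields undecidability of $\VAL{\ML(\bo{},\bo{}^{-1})}{URG}$, even for the single fixed formula $\aformula$.

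The step I expect to be the main obstacle is justifying the replacement of the self-loop predicate $\lsuc$ by $\diam{}\diam{}\top$. Two points must be checked carefully: (i) that $\diam{}\diam{}\top$ really does separate the $\amarking_1$- and $\amarking_2$-type markings --- which relies on $\amarking_{\circlearrowleft}$ being non-terminal, hence on its self-loop; and (ii) that the gadget below $\amarking_1$ creates no fresh deadlock marking, since any deadlock lacking both an $\amarking_1$- and an $\amarking_2$-predecessor would falsify $\aformula$ there and break the reduction --- and it is exactly the self-loop at $\amarking_{\circlearrowleft}$ that prevents this. Once these local checks are in place, the remainder is bookkeeping inherited essentially verbatim from the proof of Lemma~\ref{lemma-undecidability-schema}.
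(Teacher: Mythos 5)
Your proposal is correct and coincides with the paper's own proof: the paper uses exactly the same reduction via $\overline{\apn}$ and the fixed formula $\bo{}\perp \Rightarrow (\diam{}^{-1}\diam{}\diam{}\top \wedge \diam{}^{-1}\bo{}\bo{}\perp)$, which is your $\bo{}\perp \Rightarrow (\diam{}^{-1}\theta \wedge \diam{}^{-1}\neg\theta)$ after rewriting $\neg\diam{}\diam{}\top$ as $\bo{}\bo{}\perp$. The justification --- predecessors of deadlocks are exactly the $\amarking_1$- and $\amarking_2$-type markings, separated by whether two more steps are possible thanks to the self-loop at $\amarking_{\circlearrowleft}$ --- is precisely the paper's argument.
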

\proof
Consider two Petri nets $\apn_1$ and $\apn_2$ with identical
sets of places. We rely on the construction of $\overline{\apn}$ in
Section~\ref{section-undec-FO}, but give a modal formula $\aformula$
(independent of $\apn_1$ and $\apn_2$) that yields the following
equivalence: $\apn_1$ and $\apn_2$ have the same reachability set  iff
$\RG{\overline{\apn}}, \amarking \models \aformula$ for every marking
$\amarking$ in $\reach{\overline{\apn}}$. 
For all deadlocks, there is one predecessor (from $\apn_1$)
that is able to do two more steps and another predecessor (from $\apn_2$) that
is not:
$\aformula \egdef 
\bo{} \perp\ \Rightarrow\ 
(\diam{}^{-1}\diam{}\diam{}\top\ \wedge\  \diam{}^{-1}\bo{}\bo{}\perp)
$. 
Formula $\aformula$  is semantically equivalent to
the first-order formula $\aformula_{fo}$ defined below:
\begin{align*}
\forall  \avariableter \ (\neg \exists  \avariableter'\;  \avariableter
\rightarrow  \avariableter')\
\Rightarrow&\ (\exists  \avariableter_1, \avariableter_2, \avariableter_3\; 
(\avariableter_1 \step{} \avariableter) \wedge 
(\avariableter_1 \step{} \avariableter_2) \wedge  (\avariableter_2 \step{}
\avariableter_3))\ \wedge \\ 
&\ (\exists  \avariableter_1\; 
(\avariableter_1 \step{}  \avariableter) 
\wedge  \forall  \avariableter_2, \avariableter_3\; \neg ((\avariableter_1 \step{} \avariableter_2) 
\wedge  (\avariableter_2 \step{} \avariableter_3))).
\rlap{\hbox to 52 pt{\hfill\qEd}}
\end{align*}\smallskip

\noindent This undecidability result is tight.  In
Section~\ref{section-dec-val-ml}, we establish decidability of an
extended variant of $\VAL{\ML(\bo{})}{URG}$ where the backward
modality $\bo{}^{-1}$ is excluded.  Moreover, by translating formulae
in $\ML(\bo{},\bo{}^{-1})$ to $\FO(\step{})$ restricted to two
individual variables, we get another evidence that
$\MC{\FO(\step{})}{URG}$ restricted to two individual variables is
undecidable.
\subsubsection{$\FO(\step{})$ restricted to positive or forward formulae}
Although $\VAL{\ML(\bo{},\bo{}^{-1})}{URG}$ and
$\MC{\FO(\step{})}{URG}$ are undecidable in general, we have
identified decidable fragments of modal logic in
Section~\ref{section-preliminaries-ml}. By analogy, one may expect to
find decidability of related fragments of first-order logic. We prove
here that this is not the case. We consider forward $\FO(\step{})$ and
positive $\FO(\step{})$ and show that their model checking problems
are undecidable.  In a \defstyle{positive formula}, atomic
propositions occur only under the scope of an even number of
negations. Let ${\FOP( \mathtt{P} )}$ denote the set of positive first
order formulae over predicates in $\mathtt{P}$.
\newcommand{\stmntproppositive}{
$\MC{\FOP(\step{})}{URG}$ is undecidable.
}

\begin{proposition}\label{prop-positive}
\stmntproppositive
\end{proposition}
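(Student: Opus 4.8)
\textbf{Proof plan for Proposition~\ref{prop-positive}.}
The plan is to reuse the reduction from the equality problem for Petri net reachability sets, as in Lemma~\ref{lemma-undecidability-schema}, but to massage the construction so that the distinguishing formula can be written positively, i.e.\ with no negation in front of an atomic subformula. The obstacle is that the formula $\aformula$ used so far crucially contains the negative atoms ``$\neg\exists\avariableter'\ \avariableter\rightarrow\avariableter'$'' (to pick out deadlocks) and ``$\neg\lsuc(\avariableter_2)$'' (to distinguish the $\apn_2$-witness from the $\apn_1$-witness). The strategy is therefore to replace each such negative test by a \emph{positive} local pattern in an enriched net $\overline{\apn}$, so that the property ``$\amarking$ is a deadlock originating from both simulations'' is expressed purely in $\FOP(\step{})$.

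First I would modify the net of Figure~\ref{GRAPHE1} by padding every marking that must play the role of a ``deadlock'' with a small, rigid gadget whose presence is detectable by a positive existential pattern, and padding every \emph{non}-deadlock marking so that it provably cannot exhibit that pattern; then ``$\amarking$ is a final marking to be inspected'' becomes positive. Concretely, one attaches to $\amarking_\ell$ and $\amarking_r$ a fixed finite subgraph (e.g.\ a directed cycle of some chosen length, or a star of self-loops) that no marking reachable during the simulation of $\apn_1$ or $\apn_2$ can carry; membership in this class is then a positive formula $\mathit{dead}^+(\avariableter)$. Symmetrically, I would arrange that the $\apn_1$-predecessor $\amarking_1$ of such a deadlock has a positively recognisable feature $\varphi^+$ (a successor with a self-loop, via $\atransition_{sl}$ as in the current picture) while the $\apn_2$-predecessor $\amarking_2$ has a \emph{different}, also positively recognisable feature $\psi^+$ that is incompatible with $\varphi^+$ — for instance a successor lying on a $3$-cycle but with no self-loop, mirroring the trick of Proposition~\ref{prop-undec-lambda}. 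The point is that ``$\amarking_2$ does not satisfy $\varphi^+$'' gets replaced by the positive statement ``$\amarking_2$ satisfies the mutually exclusive $\psi^+$''.

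With these gadgets in place, the correctness formula becomes
\[
\aformula^+ \egdef \forall\avariableter\ \mathit{dead}^+(\avariableter)\Rightarrow
\big(\exists\avariableter_1\ \avariableter_1\step{}\avariableter \wedge \varphi^+(\avariableter_1)\big)
\wedge
\big(\exists\avariableter_2\ \avariableter_2\step{}\avariableter \wedge \psi^+(\avariableter_2)\big),
\]
where $\mathit{dead}^+,\varphi^+,\psi^+$ are positive (built from $\step{}$, $\exists$, $\wedge$, $\vee$), and the single outermost implication $\Rightarrow$ can be absorbed because $\mathit{dead}^+$ is itself of the form $\exists(\ldots)$ over the fixed gadget: pushing the negation of an existential positive formula inward is exactly what we avoided, so instead I would phrase the whole statement dually, writing $\aformula^+$ directly as a universally quantified disjunction of positive conjuncts, using that the ``$\amarking$ is not a deadlock'' case is witnessed positively by ``$\exists\avariableter'\ \avariableter\step{}\avariableter'$''. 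One then checks, exactly as in Lemma~\ref{lemma-undecidability-schema}, that $\RG{\overline{\apn}}\models\aformula^+$ iff $\reach{\apn_1}=\reach{\apn_2}$: from left to right, every padded deadlock $\amarking$ forces $\apn_1$- and $\apn_2$-predecessors, hence a common simulated marking; from right to left, any marking reachable in $\apn_1$ is prolonged in $\overline{\apn}$ to a padded deadlock, whose forced $\psi^+$-predecessor (coming from $\apn_2$) exhibits the marking in $\reach{\apn_2}$, and symmetrically.

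The hard part, and the only place needing real care, is engineering the gadgets so that (i) the ``deadlock'' pattern, the $\varphi^+$ pattern and the $\psi^+$ pattern are all \emph{positively} definable, (ii) $\varphi^+$ and $\psi^+$ are genuinely incompatible so that a single marking cannot be mistaken for the wrong origin, and (iii) the padding transitions themselves do not create spurious patterns elsewhere in $\RG{\overline{\apn}}$ — in particular the simulation part of $\apn_1$ and $\apn_2$ must be kept free of self-loops and of the chosen cycles, which (as in Proposition~\ref{prop-undec-lambda}) is achieved by first removing all neutral transitions from $\apn_1$ and $\apn_2$ (they do not change the reachability sets) and by choosing the gadget cycle length larger than anything the simulation can produce. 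Once these local incompatibilities are verified, undecidability of $\MC{\FOP(\step{})}{URG}$ follows from Theorem~\ref{theo-hack}. \qed
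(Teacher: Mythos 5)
There is a genuine gap, and it sits exactly where you flag ``the hard part'': the two mutually exclusive positive witnesses $\varphi^+$ and $\psi^+$ do not exist in the form you propose, and without them your reduction from \emph{equality} of reachability sets does not go through. Positive existential patterns over $\step{}$ are preserved under homomorphisms, and any rooted gadget maps homomorphically onto a gadget whose root has a successor carrying a \sloop (send the root to the root and everything else to the \sloop node). Since your $\varphi^+$ for the $\apn_1$-side is precisely ``some successor has a \sloop'', \emph{every} positive existential property of $\amarking_2$ automatically holds at $\amarking_1$ as well; in particular your candidate $\psi^+$ (``a successor lying on a $3$-cycle'') is satisfied at $\amarking_1$ by instantiating all three cycle variables with the \sloop node. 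The qualifier ``but with no self-loop'', which is what makes the $3$-cycle test work in Proposition~\ref{prop-undec-lambda}, is a negated atom and is exactly what $\FOP(\step{})$ forbids. (The idea could be repaired with homomorphism-incomparable gadgets, e.g.\ a $2$-cycle versus a $3$-cycle with no \sloops anywhere among the relevant successors, but that is the entire content of the proof and it is not done here. There is also an internal tension in your treatment of deadlocks: padding $\amarking_\ell,\amarking_r$ with detection gadgets destroys the property that ``not a deadlock'' is positively witnessed by $\exists\avariableter'\ \avariableter\step{}\avariableter'$, which you rely on later.)

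The paper avoids the problem altogether by reducing from the \emph{inclusion} problem $\reach{\apn_2}\subseteq\reach{\apn_1}$, which Theorem~\ref{theo-hack} also states to be undecidable. Then only the one positive witness already available in the construction of Lemma~\ref{lemma-undecidability-schema} is needed: the formula
$\forall \avariableter\,\exists \avariableter_1\,\exists \avariablebis_\ell\,\exists \avariableter'\ (\avariableter \step{} \avariableter') \vee ((\avariableter_1\step{}\avariableter)\wedge(\avariableter_1\step{}\avariablebis_\ell)\wedge(\avariablebis_\ell\step{} \avariablebis_\ell))$
says that every deadlock has a predecessor of type $\amarking_1$, i.e.\ every simulation result (in particular every result of simulating $\apn_2$) is also reachable in $\apn_1$. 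No second, negatively-flavoured witness is required, and the only non-positive connective you worried about disappears into the disjunction with $\avariableter\step{}\avariableter'$ — the one trick your proposal does correctly identify.
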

\begin{proof}
  We rely on the previously introduced proof schema. Let $\apn_1$ and
  $\apn_2$ be two Petri nets and $\overline{\apn}$ their combination
  sketched in Figure~\ref{GRAPHE1}.  We propose a positive formula
  $\aformula$ so that inclusion $\reach{\apn_2} \subseteq
  \reach{\apn_1}$ holds if and only if $\RG{\overline{\apn}}\models
  \aformula$:
  \[
  \aformula\ \egdef\ \forall \avariableter\ \exists
\avariableter_1\ \exists \avariablebis_\ell\, \exists
\avariableter' \,
 (\avariableter \step{} \avariableter') \vee ((\avariableter_1\step{}
\avariableter)\wedge(\avariableter_1\step{}
\avariablebis_\ell)\wedge(\avariablebis_\ell\step{} \avariablebis_\ell))
  \]
The formula considers an arbitrary marking $\amarking$. If $\amarking$
is no deadlock, nothing is required by $\aformula$. If $\amarking$
is a deadlock, then $\aformula$ asks for vertices
$\amarking_1$ and $\amarking_\circlearrowleft$ so that $\amarking_1$ is a common
direct ancestor of $\amarking$ and $\amarking_\circlearrowleft$ and moreover
$\amarking_\circlearrowleft$ has a \sloop.  

By construction of $\overline{\apn}$, formula $\aformula$ is satisfied
if and only if every deadlock marking $\amarking$ reachable in
$\overline{\apn}$ (in particular, a simulation of $\apn_2$) can be
reached in $\apn_1$. This means $\reach{\apn_2} \subseteq
\reach{\apn_1}$. 
\end{proof}


\begin{openproblem}
Decidability status of $\MC{\FOP(\step{*})}{URG}$.
\end{openproblem}
A {\em forward formula} is a formula in which every occurrence $\avariable
\step{} \avariablebis$ is in the scope of a quantifier sequence of the form
$Q_1 \ \avariable \ldots Q_2\ \avariablebis$ where $\avariable$ is bound
before $\avariablebis$. Let ${\FOF( \mathtt{P} )}$ denote the set of forward
formulae over predicates in $\mathtt{P}$.
\newcommand{\stmntpropforward}{
$\MC{\FOF(\step{})}{URG}$ is undecidable.
}
\begin{proposition}\label{prop-forward}
\stmntpropforward%
\end{proposition}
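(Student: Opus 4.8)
The plan is to reuse the construction of $\overline{\apn}$ from Section~\ref{section-undec-FO} and the proof schema of Lemma~\ref{lemma-undecidability-schema}, adapting the characterising formula so that it becomes a forward formula. Recall that the key diagnostic test is: a deadlock $\amarking$ is reachable in $\overline{\apn}$ only via $\atransition^1_{dl}$ or $\atransition^2_{dl}$ from a marking $\amarking_1$ or $\amarking_2$; the marking $\amarking_1$ (coming from a simulation of $\apn_1$) has a successor $\amarking_{\circlearrowleft}$ with a \sloop, while $\amarking_2$ (from a simulation of $\apn_2$) does not. So the property ``$\amarking$ has a predecessor from $\apn_1$'' and ``$\amarking$ has a predecessor from $\apn_2$'' can each be decided by looking at predecessors of $\amarking$ and then \emph{forward} at their successors. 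The obstacle is that a predecessor of a deadlock is reached by a ``backward'' edge $\avariableter_1 \step{} \avariableter$, where $\avariableter$ is bound before $\avariableter_1$, which violates the forward condition.

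First I would quantify in the right order. We want to express, for every deadlock $\avariableter$: there is $\avariableter_1$ with $\avariableter_1 \step{} \avariableter$ and $\avariableter_1$ has a \sloop-successor, and there is $\avariableter_2$ with $\avariableter_2 \step{} \avariableter$ and $\avariableter_2$ has no \sloop-successor. To make this forward, I would \emph{pull the existential witnesses out in front of the universally quantified deadlock variable}. Concretely, because of the tree-like shape of the reachability graph below $\amarking_1,\amarking_2$ in Figure~\ref{GRAPHE1}, the predecessor markings $\amarking_1,\amarking_2$ are themselves reached from the subnets by the dedicated transitions $\atransition^1_{end},\atransition^2_{end}$. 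So I can rephrase the test entirely in terms of edges $\avariableter_1 \step{} \avariableter$ (now $\avariableter_1$ bound \emph{first}) together with forward checks on the successors of $\avariableter_1$: one successor is a deadlock, another successor witnesses (or refutes) the presence of a \sloop. This suggests a formula of the shape
\[
\forall \avariableter\ (\neg\exists \avariableter'\ \avariableter \step{} \avariableter')\ \Rightarrow\
\bigl(\exists \avariableter_1, \avariablebis,\avariableter_3\ (\avariableter_1 \step{} \avariableter)\wedge(\avariableter_1 \step{}\avariablebis)\wedge(\avariablebis\step{}\avariablebis)\bigr)
\ \wedge\
\bigl(\exists \avariableter_2\ (\avariableter_2 \step{}\avariableter)\wedge \forall \avariablebis\ (\avariableter_2 \step{}\avariablebis)\Rightarrow \neg(\avariablebis\step{}\avariablebis)\bigr),
\]
which is already \emph{almost} forward: in each conjunct, the variable occurring on the left of $\step{}$ is bound first in the whole sentence if we move the quantifiers $\exists\avariableter_1,\avariablebis,\avariableter_3$ and $\exists\avariableter_2$ \emph{before} $\forall\avariableter$. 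Doing that is sound here because the semantics of the two conjuncts is really ``there exist such witnesses \emph{depending on} the deadlock'', and over this particular graph one can in fact choose them uniformly per deadlock; but to stay safe I would instead keep the witnesses inside and handle the single remaining offending edge $\avariableter_1 \step{} \avariableter$ (with $\avariableter$ bound first) by a further rewriting: re-express ``$\avariableter_1$ is a predecessor of the deadlock $\avariableter$'' through a forward path from $\avariableter_1$, i.e. quantify $\avariableter_1$ first, say ``$\avariableter_1$ has a deadlock successor'', and then let $\avariableter$ \emph{be} that successor.

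So the clean route is: state the property as ``for all $\avariableter_1$: if $\avariableter_1$ has a successor $\avariableter$ that is a deadlock and $\avariableter_1$ has a \sloop-successor, then $\avariableter$ also has a predecessor $\avariableter_2$ with no \sloop-successor''. In the quantifier prefix $\forall\avariableter_1\ \forall\avariableter\ \dots\ \exists\avariableter_2$, the edge $\avariableter_1\step{}\avariableter$ is forward, the edge $\avariableter_1\step{}\avariablebis$ is forward, the edge $\avariablebis\step{}\avariablebis$ is a self-loop (trivially forward), and the problematic edge is $\avariableter_2\step{}\avariableter$ with $\avariableter$ bound before $\avariableter_2$ --- exactly the same obstacle, dualised. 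To kill it, I use the same trick once more: instead of asserting $\exists\avariableter_2$ with $\avariableter_2\step{}\avariableter$, I assert, for \emph{every} $\avariableter_2$ occurring earlier and every deadlock successor $\avariableter$ of $\avariableter_2$, that $\avariableter_2$ has no \sloop-successor; by the structure of $\overline{\apn}$, every deadlock has a predecessor from the $\apn_2$-side, so demanding this for \emph{all} $\avariableter_2$ that lead to the deadlock is equivalent to demanding it for \emph{some}. The main obstacle, then, is verifying this ``for all = for some'' equivalence against the concrete graph of Figure~\ref{GRAPHE1}: I must check that below $\amarking_2$ the only transition is $\atransition^2_{dl}$ (no \sloop), so that whenever a deadlock has \emph{any} predecessor at all on the $\apn_2$ branch, \emph{all} its predecessors on that branch are \sloop-free, while on the $\apn_1$ branch the predecessor $\amarking_1$ does have the \sloop-successor $\amarking_{\circlearrowleft}$. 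Granting that structural check, the resulting sentence is forward, and the equivalence $\reach{\apn_1}=\reach{\apn_2}$ iff $\RG{\overline{\apn}}\models\aformula$ follows exactly as in Lemma~\ref{lemma-undecidability-schema}, so undecidability of $\MC{\FOF(\step{})}{URG}$ follows from Theorem~\ref{theo-hack}.
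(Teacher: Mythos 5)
Your overall strategy (reuse $\overline{\apn}$ from Section~\ref{section-undec-FO} and test each deadlock by inspecting its predecessors and then looking forward at their successors) is the right one, and your diagnosis of the obstacle --- the edge into the deadlock runs against the forward discipline --- is correct. But the final step, where you replace the existential ``$\avariableter$ has a predecessor $\avariableter_2$ with no \sloop-successor'' by a universal statement over all $\avariableter_2$ leading to the deadlock, does not work, and the justification you give is circular. The claim that ``by the structure of $\overline{\apn}$, every deadlock has a predecessor from the $\apn_2$-side'' is not a structural fact: a deadlock arising from a marking in $\reach{\apn_1}\setminus\reach{\apn_2}$ has \emph{only} $\amarking_1$-type predecessors, and such deadlocks are exactly the witnesses the reduction must detect. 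Concretely, if your universal reads $\forall\avariableter_2\ (\avariableter_2\step{}\avariableter \Rightarrow \neg\lsuc(\avariableter_2))$, then instantiating $\avariableter_2$ with the $\amarking_1$-type predecessor (which always satisfies $\lsuc$, via $\amarking_{\circlearrowleft}$) falsifies the sentence on every instance in which $\apn_1$ reaches anything at all; if instead you exclude $\avariableter_2=\avariableter_1$, the universal becomes vacuously true precisely on the deadlocks that have a single predecessor, i.e.\ precisely when $\reach{\apn_1}\not\subseteq\reach{\apn_2}$, so the reduction fails in the other direction. Either way, the ``for all $=$ for some'' equivalence you invoke is exactly the property being tested, not a property of $\overline{\apn}$.

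The route you briefly considered and then abandoned is close to the one that works: keep the existential witness for the second predecessor, but hoist it \emph{above} the universally quantified deadlock instead of universalising it. The paper's sentence has prefix $\forall\avariableter_2\,\exists\avariableter_1\,\forall\avariableter\,\exists\avariablebis_\ell\,\exists\avariableter'$ and asserts that whenever $\avariableter_2\step{}\avariableter$ and $\avariableter$ is a deadlock, then $\avariableter_1\step{}\avariableter$, $\avariablebis_\ell\step{}\avariablebis_\ell$, and $(\avariableter_1\step{}\avariablebis_\ell)\Leftrightarrow\neg(\avariableter_2\step{}\avariablebis_\ell)$. Every atom is then forward. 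The hoisting of $\exists\avariableter_1$ past $\forall\avariableter$ is sound not because one witness serves all deadlocks (it does not, which is why your parenthetical justification fails), but because an extra outer universal $\forall\avariableter_2$ is added and, in $\overline{\apn}$, each candidate predecessor has at most one deadlock successor, so $\avariableter$ is functionally determined by $\avariableter_2$ and the witness $\avariableter_1$ can be chosen from $\avariableter_2$ alone. The biconditional then states that $\avariableter_1$ and $\avariableter_2$ are predecessors of opposite types (hence distinct, with no need for equality) without committing to which of the two carries the \sloop, which is what lets a single forward sentence capture equality of the reachability sets.
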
%
\begin{proof}
We again reduce the equality problem for reachability sets of two Petri nets
$\apn_1$ and $\apn_2$. Let $\overline{\apn}$ be the net presented in
Figure~\ref{GRAPHE1}. We propose a forward formula $\aformula$ so that
$\reach{\apn_2} =  \reach{\apn_1}$ if and only if $\RG{\overline{\apn}}\models
\aformula$:
  \[
  \aformula \egdef\ \forall \avariableter_2\ \exists \avariableter_1\ \forall
\avariableter\ \exists \ \avariablebis_\ell\ \ \exists \avariableter'\ 
  (\avariableter_2 \step{} \avariableter) \Rightarrow  ( 
  (\avariableter \step{} \avariableter')\vee
  \aformulabis(\avariableter_1,\avariableter_2,\avariableter,\avariablebis_\ell)
%
  )
\]
\\[-0.8cm]
\[
\aformulabis(\avariableter_1,\avariableter_2,\avariableter,
\avariablebis_\ell)\egdef (\avariableter_1\step{}
  \avariableter)
\wedge(\avariablebis_\ell\step{}
  \avariablebis_\ell)\ 
  \wedge\ ((\avariableter_1\step{}\avariablebis_\ell)\Leftrightarrow
	\neg(\avariableter_2\step{}\avariablebis_\ell))
\]
Forward formulae make it harder to quantify over deadlock markings
$\amarking$. 
Before presenting how formula $\aformula$ enables the reduction, a 
short comment on quantification: this formula intends to quantify 
over $\avariableter$, but the forward constraint imposes first to
quantify over $\avariableter_2$, then on $\avariableter_1$, and 
only afterwards on $\avariableter$. This is not a problem since, once 
$\avariableter_2$ is fixed, variable $\avariableter_1$ may be fixed,
and then $\avariableter$ may be chosen.
The idea of $\aformula$ is to capture the situation in
Figure~\ref{GRAPHE1}, potentially with the roles of $M_1$ and $M_2$
swapped. In detail, the formula considers an arbitrary marking
$\amarking_2$, a corresponding marking $\amarking_1$ (if it exists),
and an arbitrary marking $\amarking$.  If $\amarking_2$ and
$\amarking$ are not connected, then $\aformula$ requires nothing. If
$\amarking_2$ and $\amarking$ are connected and $\amarking$ is no
deadlock, there are also no requirements.  Otherwise $\amarking_2$ and
$\amarking$ are connected and $\amarking$ is a deadlock. In this case,
there must be a marking $\amarking_{\circlearrowleft}$ (valuation for
$\avariablebis_\ell$) so that formula $\aformulabis$ is true for
$(\amarking_1,\amarking_2,\amarking,\amarking_{\circlearrowleft})$. The
formula $\aformulabis$ checks that deadlock $\amarking$ is reachable
in both $\apn_1$ and $\apn_2$, see Figure~\ref{GRAPHE1}. Thus,
$\reach{\apn_1} = \reach{\apn_2}$ iff $\RG{\overline{\apn}}\models
\aformula$.  This proves the claimed undecidability. 
\end{proof}


\begin{openproblem}
Decidability status of $\MC{\FOF(\step{*})}{URG}$.
\end{openproblem}

While forward formulae can well identify the deadlock markings used in
the proof schema, the difficulty is in the description of the local
environment witnessing the simulation results.


\subsection{Taming undecidability with fragments}
\label{section-taming-undecidability}
In this section, we present the restrictions of $\FO(\step{})$ that we
found to have decidable model checking or validity problems.
\subsubsection{Existential  fragment}
Our undecidability results follow a common principle, namely
identifying a local pattern in the reachability graph that
characterizes an undecidable property.  The pattern may depend on the
specification language.
Below, we state a result that, at first glance, might seem to contradict
the previous findings: decidability of $\MC{\FO(\step{})}{URG}$ restricted to
the existential fragment. This decidability, however, simply implies that universal
quantification is needed to characterize undecidable properties by local
patterns.
We write $\exists \FO$ for the fragment of $\FO$ consisting of those
formulae that use only existential quantification when written in
prenex normal form.
\newcommand{\stmntpropdecexist}{
$\MC{\exists \FO(\step{},=)}{URG}$ 
is decidable.
}
\begin{proposition} \label{proposition-decidability-existential}
\stmntpropdecexist
\end{proposition}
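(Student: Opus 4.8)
The plan is to reduce the model-checking problem for $\exists\FO(\step{},=)$ over $\URG{\apn}$ to a finite number of Petri net reachability and coverability queries, each of which is decidable by Theorem~\ref{theo:mayr} and Proposition~\ref{proposition-reach-semilinear}. Given a sentence $\aformula = \exists \avariable_1 \cdots \exists \avariable_m\ \aformulabis$ with $\aformulabis$ quantifier-free over atoms of the form $\avariable_i \step{} \avariable_j$, $\avariable_i = \avariable_j$ and $init(\avariable_i)$, I would first put $\aformulabis$ into disjunctive normal form, so that it suffices to decide satisfiability of each disjunct, i.e.\ a conjunction of literals, over the reachable markings. So the task reduces to: decide whether there exist reachable markings $\amarking_1,\dots,\amarking_m \in \reach{\apn}$ satisfying a fixed conjunction of positive and negated atoms.

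The key observation is that a conjunction of literals describes a small \emph{pattern} graph on the variables: positive literals $\avariable_i\step{}\avariable_j$ demand edges, equalities collapse variables, the $init$ atom pins one variable to $\amarking_0$, and the negative literals forbid certain edges/equalities. After collapsing equated variables, I would guess a \emph{fixed} local shape --- but the difficulty is that the witnesses $\amarking_i$ need not be close to each other in the graph, so there is no bounded-radius argument as in Proposition~\ref{proposition-preliminaries-ml1}. The trick is that each edge $\amarking_i\step{}\amarking_j$ corresponds to firing \emph{one} transition $\atransition$ with $\amarking_j = \amarking_i + \Delta_\atransition$ where $\Delta_\atransition$ is the effect vector; so I would additionally guess, for each demanded edge, which transition realizes it (finitely many choices). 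This turns the positive part of the pattern into a system of linear equations relating the $\amarking_i$, leaving only reachability of the ``source'' markings and the disequality constraints to be handled. Concretely: pick a spanning forest of the pattern's positive-edge graph; the roots of the forest are the only markings whose reachability must be checked independently, and all other $\amarking_i$ are affine images of the roots via the guessed transition effects (one must also check non-negativity and enabledness, which are Presburger/semilinear conditions on the roots).

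The main step is then to encode, for each guess, the set of admissible root markings as an effectively semilinear set and check it against reachability. For a single root $\amarking$ we need: $\amarking \in \reach{\apn}$ (a reachability query by Theorem~\ref{theo:mayr}, via Hack's reduction \cite[Lemma 4.3]{Hack76} to handle the semilinear side-constraints), together with finitely many affine side-constraints ensuring each dependent $\amarking_i$ is a genuine marking, each guessed transition is enabled at the appropriate $\amarking_i$ (so the edge really exists in $\UG{\apn}$), each dependent $\amarking_i$ is itself reachable --- but here I would note that if $\amarking$ is reachable and the guessed firing sequence along the forest path is enabled, then $\amarking_i$ is automatically reachable, so no extra query is needed --- and finally that the forbidden edges/equalities from the negated literals do \emph{not} hold. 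A negated equality $\amarking_i \neq \amarking_j$ is an affine disequality; a negated edge $\neg(\amarking_i\step{}\amarking_j)$ says that for \emph{no} transition $\atransition$ is $\atransition$ enabled at $\amarking_i$ with $\amarking_i + \Delta_\atransition = \amarking_j$, which is again a Presburger condition on the (affinely determined) coordinates of $\amarking_i,\amarking_j$. Bundling all of this, for each of the finitely many guesses (choice of DNF disjunct, of which variables are equated, of which transition realizes each edge) we obtain an effectively semilinear set $E$ of candidate root-tuples, and we must decide whether some tuple in $E$ consists of reachable markings; since reachability of a tuple is equivalent to reachability of each component, and semilinear reachability is decidable by Proposition~\ref{proposition-reach-semilinear}, each guess is decidable. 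The sentence holds iff some guess succeeds.

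The step I expect to be the main obstacle is the bookkeeping around the \emph{negative} literals together with the interplay between ``edge exists in $\UG{\apn}$'' and ``both endpoints are reachable'': one must be careful that a demanded edge $\amarking_i\step{}\amarking_j$ in $\URG{\apn}$ requires not only $\amarking_i = \amarking_j - \Delta_\atransition$ with $\atransition$ enabled, but genuinely $\amarking_i,\amarking_j\in\reach{\apn}$, and conversely that forbidden edges must be ruled out over the \emph{reachability} relation, not over $\UG{\apn}$ --- however, since $\step{}$ on $\reach{\apn}$ is just the restriction of $\step{}$ on $\Nat^P$ to reachable markings, and since along a guessed forest every involved marking is shown reachable, the negative-edge condition collapses to the purely local Presburger condition above. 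Making this reduction airtight --- i.e.\ showing the finite disjunction of semilinear reachability instances is equivalent to $\URG{\apn}\models\aformula$ --- is the heart of the proof; everything else is routine. The decidability then follows from Theorem~\ref{theo:mayr} and Proposition~\ref{proposition-reach-semilinear}, and in fact the argument shows the problem is no harder than Petri net reachability, matching the lower bound one gets from the existential formula $\exists \avariable\, (init(\avariable_0) \wedge \ldots)$ encoding a reachability query.
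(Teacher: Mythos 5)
Your overall strategy---translate the quantifier-free matrix into Presburger constraints on marking tuples and then combine semilinearity with reachability queries---is the right one and is close in spirit to the paper's. But the decisive step is exactly the one you gloss over, and as written it contains a genuine gap. At the end you must decide whether a semilinear set $E$ of $m$-tuples contains a tuple \emph{all of whose components lie in} $\reach{\apn}$. You justify this with ``reachability of a tuple is equivalent to reachability of each component, and semilinear reachability is decidable by Proposition~\ref{proposition-reach-semilinear}''---but that proposition concerns a single net and a semilinear set of single markings, and the components of a tuple in $E$ are \emph{coupled} (by your affine constraints, disequalities, and forbidden-edge conditions), so you cannot check them against $\reach{\apn}$ one at a time: you need a joint witness. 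The missing idea, which is the heart of the paper's proof, is to build the disjoint union $\apn'$ of $m$ copies of $\apn$ (with initial marking $m$ copies of $\amarking_0$), so that $\reach{\apn'}=\reach{\apn}^m$; then ``some tuple in $E$ with all components reachable'' becomes a single instance of semilinear reachability for $\apn'$, decidable via Hack's reduction and Theorem~\ref{theo:mayr}. Without this (or an equivalent) construction your final step does not go through.

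A second, related flaw is the spanning-forest shortcut. The pattern's positive edges are directed, and an undirected spanning forest may traverse an edge $\avariable_i\step{}\avariable_j$ backwards (from $\avariable_j$ to $\avariable_i$); in that case $\amarking_i$ is indeed affinely determined by the root, but its reachability is \emph{not} implied by the root's reachability (a predecessor of a reachable marking need not be reachable), so your claim that ``only the roots' reachability must be checked'' is false for such patterns (e.g.\ $\avariable_1\step{}\avariable_3\wedge\avariable_2\step{}\avariable_3$). Once you adopt the product-net construction, both problems disappear and, in fact, all of your DNF, transition-guessing and forest machinery becomes unnecessary: the one-step relation $\step{}$ is itself quantifier-free Presburger-definable, so the entire matrix (positive and negative literals alike) translates directly into one Presburger formula over the $m$-tuple, which is then fed to semilinear reachability on $\apn'$. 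That is precisely the paper's two-line reduction.
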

\begin{proof}
Let $\apn =(P,T,F,\amarking_0)$ be a Petri net with reachability set
$\reach{\apn}$ and $|P| = n$. Decidability follows from two crucial
properties:\vspace{0.1cm}
\begin{enumerate}[(1)]
\item Given a Presburger formula $\aformula(\vect{\avariable}_1,
  \ldots, \vect{\avariable}_{\alpha})$ with $n \times \alpha$ free
  variables such that each $\vect{\avariable}_i$ is a sequence of $n$
  distinct variables interpreted as a marking of $N$, one can decide
  whether $\aformula(\amarking_1, \ldots, \amarking_{\alpha})$ holds
  true for some (not necessarily distinct) markings $\amarking_1,
  \ldots, \amarking_{\alpha}$ in $\reach{\apn}$.
  Proposition~\ref{proposition-reach-semilinear} corresponds to the
  case $\alpha = 1$.
\item One can effectively construct a quantifier-free Presburger
  formula $\aformula_{\step{}}(\vect{\avariable}_1,
  \vect{\avariable}_2)$ so that for all markings $\amarking_1,
  \amarking_2$, formula $\aformula_{\step{}}(\amarking_1,
  \amarking_2)$ holds iff $\amarking_1 [ \atransition \rangle
  \amarking_2$ for some $\atransition \in T$.
\end{enumerate}
\vspace{0.1cm}Before we turn to the proofs of (1) and (2), we explain how these
results yield decidability of $\MC{\exists \FO(\step{},=)}{URG}$. Consider $\aformulabis = \exists \
\avariable_1, \ldots, \avariable_{\alpha} \
\aformulabis'$ where $\aformulabis'$ is a quantifier-free formula with atomic
propositions of the form $\avariable_i \step{}  \avariable_{j}$ and
$\avariable_i
= \avariable_{j}$. With (2), one constructs a quantifier-free Presburger formula
$\aformula(\vect{\avariable}_1, \ldots, \vect{\avariable}_{\alpha})$
so that for all markings $\amarking_1, \ldots, \amarking_{\alpha}$ in
$\reach{\apn}$, formula
$\aformula(\amarking_1, \ldots, \amarking_{\alpha})$ holds true
iff
$\RG{\apn}, \avaluation \models \aformulabis'$ where
$\avaluation(\vect{\avariable}_i) = \amarking_i$
for $1\leq i\leq \alpha$.
By (1),  it is decidable whether
      $\aformula(\amarking_1, \ldots, \amarking_{\alpha})$ holds for some
markings
      $\amarking_1, \ldots, \amarking_{\alpha} \in \reach{\apn}$. This is
equivalent to $\URG{\apn} \models \aformulabis$.

It remains to prove (1) and (2). The formula $\aformula_{\step{}}(\vect{x}_1,
\vect{x}_2)$ for statement (2) encodes the definition of enabledness and firing
for transitions, $\amarking [\atransition \rangle \amarking'$:
$$
\bigvee_{\atransition \in T}
(\bigwedge_{\aplace \in P} \vect{\avariable}_1(p) \geq F(p,\atransition))
\wedge
(\bigwedge_{\aplace \in P} \vect{\avariable}_2(p) = \vect{\avariable}_1(p) -
F(p,\atransition) + F(\atransition,p)).
$$

For statement (1), we adapt the proof of
Proposition~\ref{proposition-reach-semilinear}.
We construct a Petri net $\apn'$ that simulates $\alpha$ copies of
$\apn$.  Technically, $\apn'$ is defined as the disjoint union of
$\alpha$ instances of $\apn$. The initial marking of $\apn'$ is
$\alpha$ times $\amarking_0$. For all markings $\amarking_1, \ldots,
\amarking_{\alpha}$ we now have the following equivalence: the
markings are reachable in $\apn$ and satisfy $\aformula(\amarking_1,
\ldots, \amarking_{\alpha})$ iff $(\amarking_1, \ldots,
\amarking_{\alpha})$ is a possible simulation result in $\apn'$ and
$\aformula(\amarking_1, \ldots, \amarking_{\alpha})$ holds.  An
application of Proposition~\ref{proposition-reach-semilinear} on
$\apn'$ and $\aformula$ yields the desired decidability result.
\end{proof}


Again, decidability is preserved with Presburger-definable properties on markings and
with labelled transition relations of the form $\step{\atransition}$.
\begin{corollary}
  $\MC{\FO(\step{},=)}{URG}$ restricted to Boolean combinations of
  existential formulae is decidable.
\end{corollary}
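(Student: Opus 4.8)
The plan is to reduce the model-checking problem for Boolean combinations of existential $\FO(\step{},=)$ sentences to the existential case already handled in Proposition~\ref{proposition-decidability-existential}. First I would observe that a Boolean combination of existential sentences $\aformulabis_1, \ldots, \aformulabis_k$ --- each of the form $\exists \vect{\avariable}\ \aformulater_i$ with $\aformulater_i$ quantifier-free --- can be put in disjunctive normal form over the atoms $\aformulabis_1, \ldots, \aformulabis_k$. Since each $\URG{\apn} \models \aformulabis_i$ is a fixed truth value (the net $\apn$ is given), the whole Boolean combination evaluates to a fixed truth value once we know each $\URG{\apn} \models \aformulabis_i$. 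Hence it suffices to decide $\URG{\apn} \models \aformulabis_i$ for each $i$ separately, and this is exactly an instance of $\MC{\exists \FO(\step{},=)}{URG}$, which is decidable by Proposition~\ref{proposition-decidability-existential}.

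The one subtlety to address is the treatment of negated existential atoms: a formula like $\neg \aformulabis_i = \neg \exists \vect{\avariable}\ \aformulater_i$ is a \emph{universal} sentence, so one might worry it falls outside the existential fragment. But the point is that we never need to model check a universal sentence directly; we only need the truth value of $\aformulabis_i$ itself, and $\URG{\apn} \models \neg\aformulabis_i$ holds iff $\URG{\apn} \not\models \aformulabis_i$. So the algorithm is: run the decision procedure of Proposition~\ref{proposition-decidability-existential} on each existential constituent $\aformulabis_i$, record the resulting bits, and then evaluate the Boolean combination on these bits. This is clearly effective and terminates.

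For completeness I would note --- as in the analogous remarks after Propositions~\ref{proposition-dec-reach} and~\ref{proposition-decidability-existential} --- that the argument is insensitive to enriching the signature with Presburger-definable unary predicates on markings or with labelled transition relations $\step{\atransition}$, since both enrichments were already accommodated in the proof of Proposition~\ref{proposition-decidability-existential}: property (1) there handles arbitrary Presburger constraints relativised to $\reach{\apn}$, and property (2) there is readily adapted to the labelled relations $\aformula_{\step{\atransition}}$.

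There is essentially no hard part here: the corollary is a routine propositional-logic closure argument on top of the substantive content of Proposition~\ref{proposition-decidability-existential}. The only thing worth stating carefully is why closure under negation at the \emph{sentence} level does not reintroduce genuine quantifier alternation --- and the reason, as above, is simply that a sentence has no free variables, so its negation is decided by negating a bit rather than by model checking a structurally more complex formula.
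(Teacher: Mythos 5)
Your proposal is correct and matches the paper's intent: the corollary is stated without proof precisely because it follows by the routine argument you give, namely deciding each existential constituent separately via Proposition~\ref{proposition-decidability-existential} and then evaluating the Boolean combination on the resulting truth values. Your remark that negation at the sentence level only flips a bit rather than producing a harder model-checking instance is exactly the right observation.
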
 
\noindent Consequently, the following subgraph isomorphism problem is decidable
too:

\vspace{0.1cm}
\begin{description}
\item[\quad input] a finite directed graph $\mathcal{G} = \pair{V}{E}$ and a
Petri net
             $\apn$.
\item[\quad question] is there a subgraph of $\pair{\reach{N}}{\step{}}$ 
isomorphic to 
$\mathcal{G}$?\vspace{0.1cm}
\end{description}
\begin{openproblem} Decidability status of $\MC{\exists \FO(\step{*})}{URG}$
and $\MC{\exists \FO(\step{*},\step{})}{URG}$.
\end{openproblem} 
\subsubsection{$\ML(\bo{})$ with arithmetical constraints} 
\label{section-dec-val-ml}
Section~\ref{section-undecidability-ml} proves that
$\VAL{\ML(\bo{},\bo{}^{-1})}{URG}$ is undecidable. To our surprise,
and in contrast to the negative result on model checking the forward
fragment of $\FO$, this undecidability depends on the backward
modality. The following Proposition~\ref{proposition-dec-ml} shows
decidability of the validity problem for $\ML(\bo{})$, even in the
presence of arithmetical constraints at the atomic level.

\newcommand{\stmntpropdecml}{
  The validity problem $\VAL{\PAML(\bo{})}{URG}$ is decidable.
}
\begin{proposition} \label{proposition-dec-ml}
\stmntpropdecml%
\end{proposition}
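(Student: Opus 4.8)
The plan is to reduce the validity problem $\VAL{\PAML(\bo{})}{URG}$ to finitely many instances of coverability and reachability in Petri nets, which are decidable by Theorem~\ref{theo:mayr} and Proposition~\ref{proposition-reach-semilinear}. The key observation, as with Proposition~\ref{proposition-preliminaries-ml1}, is that a $\PAML(\bo{})$ formula $\aformula$ of modal degree $d$ evaluated at a marking $\amarking$ only constrains the fragment of $\RG{\apn}$ consisting of markings reachable from $\amarking$ by at most $d$ transition firings, together with the arithmetical information about the token counts of those markings. Since $\bo{}$ and $\diam{}$ only move \emph{forward}, we never need to know whether $\amarking$ itself is reachable from $\amarking_0$ in order to decide $\RG{\apn},\amarking \models \aformula$ --- that is precisely what makes the backward-free fragment tractable, in contrast to Proposition~\ref{proposition-preliminaries-ml2}.

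First I would make this locality precise: define, for each marking $\amarking$ and each $d \geq 0$, the finite ``$d$-neighbourhood'' $\amodel_{\amarking}^{d}$ as the restriction of $\pair{\reach{\apn}}{\step{}}$ to markings reachable from $\amarking$ in at most $d$ steps, and observe by induction on subformulae that $\RG{\apn},\amarking \models \aformula$ iff $\amodel_{\amarking}^{d},\amarking \models \aformula$. Next I would decompose $\aformula$ into its ``modal type'': for a fixed modal degree $d$, there are only finitely many logically inequivalent $\PAML(\bo{})$ formulae \emph{up to the arithmetical atoms}, because each $\bo{}$/$\diam{}$ layer contributes a bounded branching pattern and the atomic layer is a quantifier-free Presburger predicate on one marking. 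The crucial point is that whether $\amarking \models \aformula$ is determined by (i) the quantifier-free Presburger constraints on $\amarking$ itself and on the markings at each bounded-length firing sequence from $\amarking$, and (ii) for the $\bo{}$-subformulae, the assertion that \emph{all} such continuations satisfy certain constraints, i.e.\ that no continuation violates them.

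Then the heart of the argument: to decide validity, i.e.\ whether $\RG{\apn},\amarking \models \aformula$ for \emph{every} reachable $\amarking$, I would instead look for a \emph{counterexample}, a reachable marking $\amarking$ with $\RG{\apn},\amarking \not\models \aformula$. Pushing the negation inward, $\neg\aformula$ becomes a positive Boolean combination of formulae of the shape ``there exists a firing sequence $\astring \in T^{*}$ with $|\astring| \leq d$ leading from $\amarking$ to some marking satisfying a quantifier-free Presburger predicate'' and ``for all firing sequences of bounded length, a certain quantifier-free Presburger predicate holds at the endpoint''. By enumerating the finitely many sequences $\astring_1, \dots, \astring_k \in T^{*}$ of length at most $d$ (there are at most $(|T|+1)^{d}$ of them, counting the possibility of a shorter sequence via an inert padding transition), I can express ``$\amarking$ is reachable and witnesses $\neg\aformula$ via these specific choices of witnessing sequences'' as a single \emph{semilinear} (Presburger-definable) constraint on the vector $(\amarking, \amarking\!\cdot\!\astring_1, \dots, \amarking\!\cdot\!\astring_k)$: enabledness and firing along each $\astring_i$ is quantifier-free Presburger (cf.\ statement (2) in the proof of Proposition~\ref{proposition-decidability-existential}), the existential witnesses become a disjunction over which sequences realize them, and the universal ``$\bo{}$'' parts become conjunctions of Presburger constraints ranging over \emph{all} the enumerated sequences (an endpoint of a disabled sequence imposing no constraint). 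Projecting onto the $\amarking$-coordinate, the set $\asetter \subseteq \Nat^{|P|}$ of potential counterexample markings is effectively semilinear, and by Proposition~\ref{proposition-reach-semilinear} we can decide whether some marking of $\asetter$ is reachable from $\amarking_0$; $\aformula$ is valid iff no such marking exists.

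The main obstacle I anticipate is the bookkeeping in the translation of the nested $\bo{}$-modalities: a subformula $\bo{}\aformulabis$ must hold at \emph{every} one-step successor, so when we commit to a fixed tuple of witnessing sequences for the $\diam{}$-parts, the universal parts have to be enforced along \emph{all} bounded-length continuations simultaneously, and the arithmetical atoms deep inside the formula refer to markings reached after a prefix of firings --- one has to be careful that the ``no continuation violates $\aformulabis$'' condition is correctly rendered as a conjunction over \emph{all} enumerated sequences while the ``some continuation satisfies $\aformulater$'' condition is a disjunction, and that these interleave correctly through the Boolean structure of $\aformula$. This is purely a matter of a careful inductive construction of the Presburger formula $\aformula_{\neg\aformula}(\vect{x}_0, \dots)$ mirroring the parse tree of $\neg\aformula$, with the bounded firing sequences as the only source of nondeterminism; once set up, closure of semilinear sets under Boolean operations and projection (Presburger quantifier elimination) does the rest, and the final decidability step is a single appeal to Proposition~\ref{proposition-reach-semilinear}. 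As with Proposition~\ref{proposition-preliminaries-ml1}, one should also note that the same argument extends to richer atomic propositions as long as they remain Presburger-definable on single markings, and that adding labels $\step{\atransition}$ on transitions is harmless since the sequences $\astring_i$ already fix which transitions fire.
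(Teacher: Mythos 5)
Your proof is correct and follows essentially the same route as the paper's: negate $\aformula$, show that the set of markings satisfying $\neg\aformula$ (computed over all of $\Nat^{|P|}$ with the forward firing dynamics, which agrees with satisfaction in $\RG{\apn}$ at reachable markings since forward successors of reachable markings are reachable) is effectively semilinear, and then invoke Proposition~\ref{proposition-reach-semilinear} to test whether a counterexample is reachable. The only difference is presentational: the paper packages the semilinearity step as a structural induction (Lemma~\ref{lemma-semi-paml}) using effective semilinearity of $pre(\aset)$, so that $\bo{}\aformulabis$ denotes $\Nat^n \setminus pre(\Nat^n \setminus \aset_{\aformulabis})$, whereas you unfold the bounded modal depth into one explicit quantifier-free Presburger formula over all firing sequences of length at most $d$ --- the same computation, less modularly organised.
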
%
\begin{proof}
  Let $\apn$ be a Petri net, and $\aformula$ a formula in
  $\PAML(\bo{})$. According to Lemma~\ref{lemma-semi-paml} stated hereafter, 
  the set of
  markings satisfying $\neg \aformula$ is effectively semilinear. Let
  $\aset_{\neg \aformula}$ be this set.  Proving validity of $\aformula$
  amounts to checking that no element of $\aset_{\neg\aformula}$ is
  reachable in $\apn$. This is decidable from
  Proposition~\ref{proposition-reach-semilinear}.
  %
\end{proof}

\begin{lemma}\label{lemma-semi-paml}
Given a Petri net $\apn$ with $n$ places and a formula $\aformula$ in 
$\PAML(\bo{})$, the set of markings in $\Nat^n$ satisfying $\aformula$
in $\UG{\apn}$ is effectively semilinear.
\end{lemma}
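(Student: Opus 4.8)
The plan is to proceed by structural induction on the formula $\aformula \in \PAML(\bo{})$, showing at each stage that the set $S_\aformula = \set{\amarking \in \Nat^n : \UG{\apn}, \amarking \models \aformula}$ is effectively semilinear. First I would handle the base cases: the atomic propositions $\aformulabis$ are exactly quantifier-free Presburger formulae over the $n$ place-variables, so their satisfying sets are Presburger-definable and hence effectively semilinear by the Ginsburg--Spanier correspondence recalled in Section~2.2; the constant $\top$ gives all of $\Nat^n$. For the Boolean connectives, I would use that semilinear sets form an effective Boolean algebra~\cite{Ginsburg&Spanier64}: $S_{\neg\aformula} = \Nat^n \setminus S_\aformula$ and $S_{\aformula \wedge \aformulabis} = S_\aformula \cap S_\aformulabis$ are effectively semilinear whenever $S_\aformula$ and $S_\aformulabis$ are.

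The two modal cases are the heart of the argument. For $\diam{}\aformula$, I would observe that $\amarking \models \diam{}\aformula$ iff some transition $\atransition \in T$ is enabled at $\amarking$ and leads to a marking in $S_\aformula$. Since firing $\atransition$ from $\amarking$ is an affine map $\amarking \mapsto \amarking - F(\cdot,\atransition) + F(\atransition,\cdot)$ defined on the Presburger-definable set where $\atransition$ is enabled, the preimage under this map of a semilinear set is again effectively semilinear (this is just a Presburger substitution-and-projection, using closure of semilinear sets under the relevant operations); taking the finite union over $\atransition \in T$ keeps it semilinear. For $\bo{}\aformula$, I would use the duality $\bo{}\aformula \equiv \neg \diam{} \neg \aformula$ together with the already-established cases for negation and $\diam{}$, or equivalently characterize $S_{\bo{}\aformula}$ directly as $\bigcap_{\atransition \in T}\big(\set{\amarking : \text{not } \amarking[\atransition\rangle} \cup \set{\amarking : \amarking[\atransition\rangle\amarking', \amarking' \in S_\aformula}\big)$, each piece being effectively semilinear. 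Effectiveness propagates through every step since all the closure operations (Boolean, affine preimage, projection for quantifier elimination) are effective.

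The main obstacle, such as it is, is purely bookkeeping: making precise that "preimage of a semilinear set under the affine firing map, intersected with the enabledness region" is effectively semilinear. The cleanest route is to phrase everything in Presburger arithmetic rather than juggling linear-set representations directly --- given a Presburger formula $\aformula_{S_\aformula}(\vec{x})$ for $S_\aformula$, the set $S_{\diam{}\aformula}$ is defined by $\bigvee_{\atransition \in T}\big(\bigwedge_{\aplace}\, x_\aplace \geq F(\aplace,\atransition)\ \wedge\ \exists \vec{y}\,(\bigwedge_\aplace y_\aplace = x_\aplace - F(\aplace,\atransition) + F(\atransition,\aplace)) \wedge \aformula_{S_\aformula}(\vec{y})\big)$, which is manifestly a Presburger formula built effectively from $\aformula_{S_\aformula}$; one then converts back to a semilinear representation via Ginsburg--Spanier if desired. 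Note that no reachability reasoning enters here at all --- this is a statement about $\UG{\apn}$, whose domain is all of $\Nat^n$; the reachability information is injected only later, in Proposition~\ref{proposition-dec-ml}, by intersecting with $\reach{\apn}$ and invoking Proposition~\ref{proposition-reach-semilinear}.
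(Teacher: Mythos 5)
Your proof is correct and follows essentially the same route as the paper: structural induction, Presburger/semilinear sets for the atomic formulae, effective Boolean closure for the connectives, and the observation that the predecessor set $pre(\cdot)$ of a semilinear set is effectively semilinear, with $\bo{}$ handled via $\Nat^n\setminus pre(\Nat^n\setminus S_\aformula)$ (equivalently, your duality $\bo{}\equiv\neg\diam{}\neg$). The only difference is that you spell out, via the explicit enabledness-plus-affine-update Presburger formula, why $pre$ preserves effective semilinearity, a fact the paper merely asserts.
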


\newcommand{\alltrans}{\step{}^{-1}}
\begin{proof}
  We proceed by induction on the structure of $\aformula$, using 
  the fact that semilinear sets are (effectively) closed under 
  Boolean operations
  and the fact that, if $\aset$ is semilinear, then $pre(\aset) = 
  \set{\amarking \in \Nat^n: \exists \ \amarking' \in \aset, \ \amarking \step{} \amarking'}$
  is effectively semilinear too.
  The latter set $pre(\aset)$ contains all markings with a successor marking in $\aset$.

  Each atomic formula  is a quantifier-free Presburger formula, and as
  such, defines a semilinear set. 
  Throughout the induction on the structure of $\aformula$, formulae with 
  outermost Boolean connectives
  are treated in the obvious way by applying Boolean operations on semilinear 
  sets.
  Eventually one has to prove
  that $\bo{}\aformulabis$ defines a semilinear set whenever
  $\aformulabis$ does. Using the induction hypothesis, let $\aset_{\aformulabis}$
  be the semilinear set of markings satisfying $\aformulabis$. The set 
  satisfying
  $\bo{}\aformulabis$ is then equal to
  $\Nat^n \setminus pre(\Nat^n \setminus \aset_{\aformulabis})$, which is effectively semilinear.
  This concludes the induction, and the proof.
  %
  %
\end{proof}



This decidability result can be extended by allowing labels on edges
(transitions).

\subsection{On the hardness of  decidable problems}
\label{section-hardness}
\noindent Some of our decision procedures call subroutines for checking
reachability in Petri nets, even though the reachability problem is not
known to be primitive recursive. We provide here some
complexity-theoretic justification for these costly invocations: we reduce the reachability problem for Petri nets
to the decidable problems $\MC{\ML(\bo{},\bo{}^{-1})}{URG}$ and
$\MC{\exists \FO(\step{})}{URG}$.
Besides reachability, we proposed decision procedures that
exploit the
effective
semilinearity of reachability sets or relations
(see e.g.
Proposition~\ref{proposition-dec-reach}).
The next proposition shows that, already
for bounded Petri
nets, $\MC{\FO(\step{})}{URG}$
is of high complexity.
\newcommand{\stmntpropcompbound}{
 $\MC{\FO(\step{})}{URG}$ restricted to bounded Petri nets
is decidable but this problem has nonprimitive recursive complexity.
}
\begin{proposition}\label{prop-complexity-bounded}
\stmntpropcompbound
\end{proposition}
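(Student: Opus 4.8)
The plan is to establish the two halves of the statement separately: decidability first, then the nonprimitive recursive lower bound.

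\medskip

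\noindent\emph{Decidability.} This part is essentially free. A bounded Petri net has a finite reachability set, and $\reach{\apn}$ together with the one-step relation $\step{}$ can be computed explicitly (e.g.\ via the Karp--Miller construction, which for bounded nets produces the full finite reachability graph). Once $\pair{\reach{\apn}}{\step{}}$ is available as a finite structure, model checking any $\FO(\step{})$ sentence is decidable by a straightforward evaluation of quantifiers over the finite domain. So the only real content is the complexity claim.

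\medskip

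\noindent\emph{Nonprimitive recursive lower bound.} First I would recall that the reachability graph of a bounded Petri net, while finite, can be enormous: there are families of bounded nets $\apn$ of size $O(m)$ whose reachability set has size roughly a tower/Ackermann-type function of $m$ (this follows from known constructions of weakly computing Petri nets, e.g.\ Lipton-style or Mayr--Meyer-style nets that count up to nonprimitive recursive bounds before terminating). The strategy is to encode the \emph{boundedness-instance reachability problem} — or more directly, a problem already known to be nonprimitive recursive on bounded nets — into $\MC{\FO(\step{})}{URG}$ using a \emph{fixed} first-order sentence, reusing exactly the proof schema of Section~\ref{section-undec-FO}. Concretely: given two nets $\apn_1,\apn_2$ that are \emph{a priori bounded} (one can always intersect with a large but finite counter cap, or start from classes of bounded nets for which equality/inclusion of reachability sets is still nonprimitive recursive by Lipton-type arguments), build $\overline{\apn}$ as in Figure~\ref{GRAPHE1}; since $\apn_1,\apn_2$ are bounded, so is $\overline{\apn}$, and $\reach{\overline{\apn}}$ is finite. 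By Lemma~\ref{lemma-undecidability-schema}, $\reach{\apn_1} = \reach{\apn_2}$ iff $\RG{\overline{\apn}} \models \aformula$ for the fixed formula $\aformula$ of Section~\ref{section-undec-FO}. Since deciding equality of reachability sets for suitable families of bounded nets is not primitive recursive (the sets themselves have nonprimitive recursive size, and comparing them encodes, say, the halting problem for bounded counter machines with a nonprimitive recursive time bound, or one appeals to Jan\v{c}ar-style bounded reductions), the model-checking problem inherits this complexity, even for the fixed $\aformula$.

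\medskip

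\noindent The main obstacle is making the lower-bound reduction clean: one must exhibit a concrete family of \emph{bounded} Petri nets for which a decision problem reducible to $\MC{\FO(\step{})}{URG}$ is provably nonprimitive recursive. The honest way to do this is to not go through full equality of reachability sets (which for bounded nets is "merely" EXPSPACE-complete-ish in the naive encoding), but rather to directly embed a bounded counter machine whose running time is bounded by a fixed nonprimitive recursive function $f$: build a bounded net that simulates the machine, so that reachability of a designated marking in the (finite) reachability graph is equivalent to acceptance, and then express "the designated deadlock/target marking is reachable" — or rather its local signature, as in the proof schema — by a fixed $\FO(\step{})$ sentence. The cost of deciding that sentence is then at least $f$ on inputs of the machine's size, because the reachability graph itself must be traversed and has size $\Omega(f)$. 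Packaging this so that a single fixed formula suffices is exactly where the Section~\ref{section-undec-FO} schema earns its keep: the formula only inspects a bounded neighbourhood of deadlocks, so it is insensitive to how large the counters grow.
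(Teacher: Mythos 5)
Your decidability half and the core of your lower-bound argument (the middle paragraph) are exactly the paper's proof: take two \emph{bounded} nets $\apn_1,\apn_2$, build $\overline{\apn}$ as in Section~\ref{section-undec-FO} (which preserves boundedness), and use the fixed formula of that section --- the paper actually uses the one-sided variant expressing $\reach{\apn_1}\subseteq\reach{\apn_2}$ --- to reduce the \emph{finite containment problem} to $\MC{\FO(\step{})}{URG}$ on bounded nets. The external fact that makes this work is precisely the one you waver on: containment (and equality) of reachability sets for Petri nets with finite reachability sets is provably \emph{not primitive recursive}; this is the Mayr--Meyer theorem \cite{Mayr&Meyer81}, which the paper cites. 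Your parenthetical worry that this problem is ``merely EXPSPACE-complete-ish'' is simply wrong, and it is what leads you to abandon a correct proof in favour of a broken one.

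The replacement argument in your last paragraph does not work. That the reachability graph of a bounded net can have size $\Omega(f(m))$ for a nonprimitive recursive $f$ does \emph{not} imply that any algorithm deciding a fixed $\FO(\step{})$ sentence over it must run in time $\Omega(f(m))$: a decision procedure is under no obligation to materialise or traverse the graph. (Compare: coverability is \expspace-complete even though Karp--Miller trees are Ackermannian in size, and reachability is decidable even though reachability sets can be infinite.) A complexity lower bound needs a reduction from a problem with a \emph{proven} lower bound, and for this statement the only such anchor available is the Mayr--Meyer finite containment result. So the fix is not to invent a new machine simulation but to delete your last paragraph, keep your middle one, and cite \cite{Mayr&Meyer81} for the hardness of the source problem.
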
%
\begin{proof}
  We perform a reduction from the finite containment problem for Petri
  nets, known to have nonprimitive recursive
  complexity~\cite{Mayr&Meyer81}. Let $\apn_1$ and $\apn_2$ be two
  bounded Petri nets with identical sets of places, and construct
  $\overline{\apn}$ as in Section~\ref{section-undec-FO}.  This net is
  bounded. The formula $\aformula$ in $\FO(\step{})$ that checks
  inclusion is derived from the formula in
  Section~\ref{section-undec-FO}:
$$
\forall  \avariableter \
(\neg \exists  \avariableter'\;  \avariableter
\rightarrow  \avariableter')
 \Rightarrow (\exists  \avariableter_2\;
\avariableter_2\rightarrow  \avariableter
\wedge \neg \lsuc( \avariableter_2))
$$
where $\lsuc(\avariable) \egdef \exists \avariablebis \ ( \avariable
\step{} \avariablebis \wedge \avariablebis \step{}
\avariablebis)$. The construction guarantees $\reach{\apn_1} \subseteq
\reach{\apn_2}$ iff $\URG{\overline{\apn}} \models \aformula$. Indeed,
a deadlock is either reachable from $N_2$ or from $N_1$. But to
satisfy the formula, if the deadlock is reachable from $N_1$ it also
has to be reachable from $N_2$.  Note that the formula $\aformula$ is
again independent of $\apn_1$ and $\apn_2$.
\end{proof}



We have seen that $\VAL{\ML(\bo{}}{URG}$ is decidable by reduction to the
reachability problem for Petri nets (see Proposition~\ref{proposition-dec-ml}).
Below, we state that there is a reduction in the reverse direction, from
non-reachability to $\VAL{\ML(\bo{}}{URG}$.

\begin{proposition} There is a logarithmic-space reduction from
the non-reachability problem for Petri nets to $\VAL{\ML(\bo{})}{URG}$.
\end{proposition}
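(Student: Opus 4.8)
The plan is to build, from an instance of the Petri net reachability problem, a net whose unlabelled reachability graph satisfies a fixed $\ML(\bo{})$ formula at every marking if and only if the target marking is \emph{not} reachable. Suppose we are given a net $\apn=(P,T,F,\amarking_0)$ and a target marking $\amarking_f$, and we want to decide whether $\amarking_f\in\reach{\apn}$. First I would, using Proposition~\ref{proposition-reach-semilinear}-style reasoning, reduce the marking-reachability question to a question about reaching \emph{some} marking covering a designated configuration: add a fresh place $\aplace_{\mathit{test}}$ together with a transition $\atransition_{\mathit{test}}$ that is enabled exactly at $\amarking_f$ (it consumes $\amarking_f(p)$ tokens from each $p\in P$, so that it is enabled only when the net is in marking $\amarking_f$ or above, and in the above case we will arrange that extra tokens have been parked elsewhere first), firing of which moves the net into a distinguished ``flag'' region. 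A cleaner variant: take $\overline\apn$ to be $\apn$ augmented so that a fresh transition is enabled precisely when the current marking equals $\amarking_f$; this is the standard trick from~\cite[Lemma 4.3]{Hack76} quoted in the excerpt.

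Next I would engineer the local structure around the flag so that it is detectable by a bounded-depth forward modal formula. Concretely, once the distinguished transition $\atransition_{\mathit{flag}}$ fires from $\amarking_f$, $\overline\apn$ enters a small gadget: a marking $\amarking_{\mathit{flag}}$ from which it can, say, fire a transition into a \sloop node $\amarking_{\circlearrowleft}$ (a marking with $\amarking\step{}\amarking$), and \emph{only} into such a node. Every other reachable marking of $\overline\apn$ should be arranged to have \emph{no} reachable \sloop within a fixed number of steps --- this requires ensuring that the original transitions of $\apn$ never create \sloops, which we may assume without loss of generality by discarding neutral transitions as in the proof of Proposition~\ref{proposition-decidability-fo1}. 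Then the modal formula $\aformula \egdef \bo{}\bo{}\cdots\bo{}\perp$ (iterated enough times to exceed the gadget depth) --- or more precisely a formula of the form $\neg\diam{}(\diam{}^{k}(\diam{}\top\wedge\bo{}\cdots))$ tuned to the gadget --- holds at a marking $\amarking$ iff $\amarking$ cannot reach the flag gadget. The key point is that $\VAL{\ML(\bo{})}{URG}$ asks whether the formula holds at \emph{every} reachable marking, in particular at $\amarking_0$; and the formula is violated at some reachable marking precisely when the flag gadget is reachable, i.e.\ precisely when $\amarking_f\in\reach{\apn}$. Hence non-reachability of $\amarking_f$ reduces to validity of $\aformula$ on $\URG{\overline\apn}$.

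I would then check that the construction of $\overline\apn$ and of $\aformula$ is computable in logarithmic space: $\overline\apn$ differs from $\apn$ by a constant number of places and transitions whose weights are $\amarking_f$ (written in unary, but that is still logspace in the size of the input presentation, or one can allow the binary-to-unary blowup since reachability inputs are customarily given with unary weights), and $\aformula$ is a fixed-shape formula whose modal depth is a constant (the gadget depth), so it is produced by a logspace transducer. The correctness argument then amounts to two implications: if $\amarking_f\notin\reach{\apn}$ then the flag transition never fires, the gadget markings are unreachable, and every reachable marking of $\overline\apn$ fails to reach any \sloop within the bounded horizon, so $\aformula$ holds everywhere; conversely, if $\amarking_f\in\reach{\apn}$ then firing the witnessing sequence followed by $\atransition_{\mathit{flag}}$ exhibits a reachable marking from which the \sloop is reachable at the prescribed depth, violating $\aformula$.

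The main obstacle I anticipate is the ``negative'' half of the gadget design: guaranteeing that \emph{no} reachable marking other than those in the flag region exhibits the small pattern that $\aformula$ detects. Self-loops are easy to exclude by removing neutral transitions, but if $\aformula$ detects a richer local configuration (a certain branching pattern of out-degree, a path of a specific length ending in a \sloop, etc.) one must argue that simulating $\apn$'s transitions cannot accidentally reproduce it --- this is exactly the kind of reasoning used in the undecidability proofs of Section~\ref{section-undec-FO}, where one controls the neighbourhood of ``interesting'' markings by putting the simulated transitions \insl with control places and by introducing auxiliary places that make the gadget's footprint unmistakable. I would adapt that machinery: keep the $\apn$-part of $\overline\apn$ guarded by a control place so that its transitions are deactivated once the flag transition has fired, and use a dedicated control place for the \sloop node so that it is reachable only through $\atransition_{\mathit{flag}}$. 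Once the gadget's local pattern is provably unique to the flag region, the equivalence and the logspace bound follow straightforwardly.
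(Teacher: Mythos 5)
There is a genuine gap at the heart of your construction: the ``flag transition enabled precisely when the current marking equals $\amarking_f$'' does not exist in an ordinary Petri net. Enabledness is monotone --- if $\atransition$ is enabled at $\amarking_f$ it is enabled at every $\amarking \geq \amarking_f$ --- so the only thing a transition can test is a covering condition. Your first variant therefore reduces the \emph{coverability} problem, not reachability (and ``parking extra tokens elsewhere first'' does not repair this, since the net may nondeterministically park too few). Your ``cleaner variant'' appeals to Hack's Lemma 4.3, but that lemma goes the other way (from semilinear-set reachability \emph{to} single-marking reachability); it does not yield an exact-marking test inside a net. Since exact-marking detection by a firing event is impossible, the whole gadget downstream of $\atransition_{\mathit{flag}}$ --- the \sloop region, the bounded-depth formula, the control places --- never gets a sound trigger.

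The paper's proof avoids this by never trying to \emph{fire} on the target marking: it normalizes (w.l.o.g., by the standard reduction) to the target $\vec{0}$, assumes every original transition has a nonempty preset, and adds for each place $p$ a transition $t_p$ \insl with $p$. Then a reachable marking is a deadlock iff it equals $\vec{0}$ --- the exact-zero test is realized by the \emph{absence} of any enabled transition, which is the one non-monotone observation the reachability graph offers --- and non-reachability of $\vec{0}$ becomes validity of the fixed formula $\diam{}\top$ on $\URG{\apn'}$. If you want to salvage your write-up, replace the flag transition by this deadlock characterization; as it stands, the reduction is not from the reachability problem.
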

\begin{proof}%
  Without any loss of generality, we can assume that the
  non-reachability problem is restricted to the target marking
  $\vec{0}$ (no place has any token). Consider the Petri net $\apn =
  (P,T,F, \amarking_0)$ where we assume w.l.o.g. that every transition
  has a place in its preset. We build a variant Petri net $\apn'$ from
  $\apn$ by adding a new transition $t_p$ for every place $p\in
  P$. The new transitions are put in self-loop with their places,
  $F'(p, t_p)=1=F'(t_p, p)$ and $F'(p', t_p)=0=F'(t_p, p')$ for all
  $p'\in P$ with $p'\neq p$. Intuitively, $t_p$ witness for the
  presence of tokens on $p$ by the existence of at least one
  transition from $M$ in the reachability graph. As a result, $\vec{0}
  \not \in \reach{\apn}$ iff for every marking $\amarking \in
  \reach{\apn'}$, some transition can be fired:
  $\pair{D}{\step{}}, \amarking \models \diam{} \top$. Note that our
  reduction uses a constant formula.
\end{proof}

\newcommand{\stmntpropreducReachPNml}{
There is a logarithmic-space reduction from
the reachability problem for Petri nets to $\MC{\ML(\bo{},\bo{}^{-1})}{URG}$.
}%
\begin{proposition} \label{proposition-reduc-ReachPN-ml}
\stmntpropreducReachPNml
\end{proposition}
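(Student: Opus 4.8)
The plan is to reduce the Petri net reachability problem to $\MC{\ML(\bo{},\bo{}^{-1})}{URG}$, thereby explaining the (non‑primitive‑recursive) cost of the decision procedure of Proposition~\ref{proposition-preliminaries-ml2}. The key observation is that a backward modality hides a reachability test: $\RG{\apn'},\amarking\models\diam{}^{-1}\top$ holds iff $\amarking$ has a predecessor that \emph{lies in the domain} $\reach{\apn'}$, i.e.\ that is reachable from the initial marking. So from an instance $(\apn,\amarking_0,\amarking')$ I would build a net $\apn'$ whose initial marking $\iota$ has a single possible predecessor $\amarking_{W}$, arranged so that $\amarking_{W}$ is reachable precisely when $\amarking'$ is reachable from $\amarking_0$ in $\apn$; then the fixed formula $\diam{}^{-1}\top$ decides the instance.

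For the construction I would first normalise with the drain gadget already used above (the ``w.l.o.g.\ target $\vec 0$'' step): add a lock place $\ell$ kept invariant by all original transitions and a transition $\atransition_{d}$ consuming $\ell$ together with $\amarking'$, so that the target is tested exactly. Then prefix a fresh \emph{start} place, initially marked, feeding a transition $\atransition_{s}$ that installs $\amarking_0$ and the lock token and first enables the original transitions, the ``go'' resource being required by every original transition; call the result $\apn''$ with initial marking $\iota$. A short check shows that in $\RG{\apn''}$ the marking $\iota$ has no incoming edge, and that $\amarking_{W}\egdef(\text{start}=0,\ \text{go}=1,\ \ell=1,\ P=\amarking')$ is reachable iff $\amarking'\in\reach{\apn}$ starting from $\amarking_0$. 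Finally add one transition $\atransition^{*}$ whose effect is exactly $\amarking_{W}\stepbis{\atransition^{*}}\iota$ (it consumes the go and lock tokens and the copy of $\amarking'$ and re‑deposits the start token), obtaining $\apn'$. Since $\iota$ lacks the go resource and carries the start token, the only transition that can re‑enter $\iota$ is $\atransition^{*}$, and its unique preimage of $\iota$ is $\amarking_{W}$; hence $\RG{\apn'},\iota\models\diam{}^{-1}\top$ iff $\amarking_{W}\in\reach{\apn'}$. The whole construction, and the formula, are clearly computable in logarithmic space.

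The step I expect to be the real obstacle is soundness of the loop transition $\atransition^{*}$: one must still have $\amarking_{W}\in\reach{\apn'}\Rightarrow\amarking'\in\reach{\apn}$. A transition enabled at a marking as small as $\amarking_{W}$ is enabled at every larger marking, so $\atransition^{*}$ cannot be gated by a zero‑test; firing it at a marking that merely \emph{covers} $\amarking'$ and then re‑firing $\atransition_{s}$ creates ``restart'' runs that re‑inject $\amarking_0$ and pile up extra tokens, and naive monotonicity only recovers coverability, not exact reachability, so these restart runs could in principle make $\amarking_{W}$ reachable spuriously. Ruling them out is the technical heart of the proof: one needs an invariant tracking the start/go/lock token balance along a run (forcing the numbers of $\atransition_{s}$, $\atransition_{d}$, $\atransition^{*}$ firings to align and pinning the marking just before the drain), and, if the plain lock does not suffice, a more refined gadget that prevents a covering of $\amarking'$ from being recycled into an exact reaching. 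Everything else — that $\diam{}^{-1}\top$ indeed expresses ``$\iota$ has a reachable predecessor'', the bookkeeping on incoming edges of $\iota$, and the logspace bound — is routine.
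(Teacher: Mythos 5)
Your overall strategy --- hiding a reachability test inside a backward modality evaluated at a distinguished marking --- is the same as the paper's, but the concrete construction is unsound, and the hole is precisely the one you flag as ``the technical heart'' and leave open. No invariant on the start/go/lock tokens can close it, because the damage occurs in the places of $P$. Counterexample: let $\apn$ have a single place $p$, empty initial marking, and one transition $t$ that deposits two tokens on $p$, and let $\amarking'$ put one token on $p$. Then $\amarking'\notin\reach{\apn}$, yet in your $\apn'$ the run $t_s\,t\,t^{*}\,t_s$ reaches exactly $\amarking_{W}=(\text{start}=0,\ \text{go}=1,\ \ell=1,\ p=1)$: the first $t^{*}$ fires at $p=2$ (it can only test \emph{covering} of $\amarking'$), leaves a one-token residue on $p$ alongside the restored start token, and the second $t_s$ re-enters simulation mode without touching $p$. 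Hence $\iota$ acquires a reachable predecessor and $\diam{}^{-1}\top$ holds on a negative instance. The obstacle is structural: making $\diam{}^{-1}\top$ at the \emph{initial} marking the test forces a cycle $\amarking_{W}\step{}\iota$, any such cycle re-enables $t_s$ and thus restarts the simulation with surplus tokens, and killing the restart with a once-only input place destroys the cycle and with it the backward witness.

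The paper sidesteps this by never returning to the initial marking. Its witness $\amarking_w$ is a fresh sink in which \emph{all} places of $\apn$ are empty, so even though $t_{stop}$ is only a covering test, the zero residue in $\amarking_w$ forces the marking at which $t_{stop}$ fired to equal $\overline{\amarking_2}$ exactly; and the control token moves strictly forward through $p_i$, $p_c$, $p_{w'}$, $p_w$, so the simulation is one-shot. Since $\amarking_w$ is also a direct successor of $\amarking_0$ via $t_{try}$, it always has one useless predecessor; the formula therefore probes for a predecessor at depth two, $\diam{}(\bo{}\bottom\wedge\diam{}^{-1}\diam{}^{-1}\top)$, which the predecessor $\amarking_0$ (itself without predecessors) cannot supply, leaving the $t_{win}$-chain --- i.e.\ an exact simulation reaching $\amarking_2$ --- as the only possible witness. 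To repair your reduction you would need both ingredients: the empty-residue sink in place of the cycle back to $\iota$, and the depth-two backward probe in place of $\diam{}^{-1}\top$.
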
%
\begin{proof}%
We reduce reachability of marking $M_2$ from marking $M_1$ in a Petri net
$\apn$ to an instance of $\MC{\ML(\bo{},\bo{}^{-1})}{URG}$ for a larger net
$\overline{\apn}$. The idea is to introduce a marking $\amarking_w$ (see
Figure~\ref{Figure:ReachComplexity}) such that the existence of a path to
$\amarking_w$ of length greater than $1$ is a witness for the existence of some
path from $\amarking_1$ to $\amarking_2$ in $\RG{\apn}$. To reach $\amarking_w$ by an $\ML$
formula, we place it close to the new initial marking. We sketch the
argumentation. The
initial marking $\amarking_0$ of $\overline{\apn}$ contains a single marked place $p_i$
for which two transitions $t_{try}$ and $t_{0}$ compete. Transition $t_{try}$
moves the unique token from $p_i$ to another place $p_{w}$ and thus produces the
marking $\amarking_w$ {\em where no other place is marked}. Transition  $t_{0}$ loads
$\amarking_1$ in the places of $\apn$ and moves the control token from $p_i$ to another
control place $p_{c}$ set in self-loop with all transitions of $\apn$. This
starts the simulation of $\apn$ from $\amarking_1$. The simulation may
get stuck or proceed forever, or it may
be interrupted whenever it reaches a marking of $\apn$ greater than or equal to
$\amarking_2$. Then, transition $t_{stop}$ consumes $\amarking_2$ from the
places of $\apn$ and moves the control token from $p_{c}$ to a place $p_{w'}$.
The control token is finally moved from $p_{w'}$ to $p_w$ by firing $t_{win}$.
$\amarking_w$ is reached, after firing
$t_{stop}\,t_{win}$, iff $\overline{\amarking_2}$ is reached.
Therefore $M_2$ is reachable from $\amarking_1$ iff
$\amarking_w$ is reachable from $\overline{\amarking_1}$
(its restriction to the places of $\apn$ is $\amarking_1$). This is equivalent to stating that $M_w$ has a
predecessor different from $\amarking_0$. The shape of the reachability graph allows us to formulate the latter as a local property in $\ML(\bo{},\bo{}^{-1})$:
\begin{align*}
\aformula\ :=\ \diam{}(\bo{}\bottom\wedge \ \diam{}^{-1} \diam{}^{-1} \top).
\end{align*}
Without loss of generality,
we can assume that $\amarking_1$ is no deadlock
and $\amarking_2\neq \amarking_1$. Formula $\aformula$ requires that
$\amarking_0$ has a deadlock successor which has an incoming path of
length two. That the successor is a deadlock means it is not $\overline{\amarking_1}$
but
$\amarking_w$ obtained by firing $t_{try}$. The path from $\amarking_0$ to
$\amarking_w$ is of length one and $\amarking_0$ has no predecessor. So
the path of length two to $\amarking_w$ is not via $t_{try}$ but stems from
$t_{win}$. This means $\amarking_w$ is reachable from $\overline{\amarking_1}$, which
means $\amarking_2$ is reachable from $\amarking_1$ in $\apn$.
\end{proof}
\begin{figure}
\begin{center}
  {\scriptsize
  %
  \begin{tikzpicture}[node distance=1cm,scale=1.3,transform shape]
    \tikzstyle{every transition}=[fill, minimum width=5mm, minimum height=2mm];
    \tikzstyle{arc}=[arrows=-latex,auto=left];
    \tikzstyle{multiarc}=[arrows = -latex, bend left, thick, shorten >=2pt];
    \tikzstyle{subnet}=[draw,dashed,shape=regular polygon, regular polygon sides=6];
    \newcommand{\mysize}{1.5cm}
    
    \draw
    node (zero){}
    node[subnet, minimum width=\mysize, minimum height=\mysize](N){
      $\apn$
    }
    node[node distance=1mm,right=of zero](zeroprime){}
    node [node distance=1mm, above left=of zeroprime](m1){$\overline{\amarking_1}$}
    node [node distance=1mm, below left=of zeroprime](m2){$\overline{\amarking_2}$}
    node[node distance=3cm,left=of m1](m0){$\amarking_0$}
    node[node distance=.7cm,left=of m2](m2p){$\amarking'_w$}
    node[node distance=.7cm,left=of m2p](mw){$\amarking_w$}
    (m0) edge [arc] node {$t_0$} (m1)
    (m2) edge [arc] node {$t_{stop}$} (m2p)
    (m0) edge [arc,swap] node {$t_{try}$} (mw)
    (m2p) edge[arc] node {$t_{win}$}(mw)
    ;
\end{tikzpicture}
}

\end{center}
\caption{Reachability graph in the hardness proof of
$\ML(\bo{},\bo{}^{-1})$-model
checking
\label{Figure:ReachComplexity}}
\end{figure}
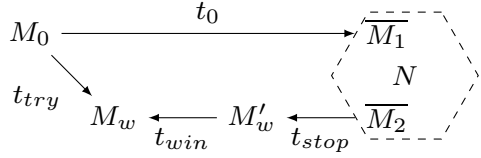
The proof of Proposition~\ref{proposition-reduc-ReachPN-ml} can be adapted
to $\exists \FO(\step{})$ for which we also have shown decidability of
model-checking by reduction to the reachability problem for Petri nets.
\newcommand{\stmntpropcompFOex}{
 There is a logarithmic-space reduction from
the reachability problem for Petri nets to $\MC{\exists \FO(\step{})}{URG}$.
}

\begin{proposition}
\label{proposition-complexity-FOone}%
There is a logarithmic-space reduction from the reachability problem
for Petri nets to $\MC{\exists \FO(\step{})}{URG}$ restricted to a
single variable.
\end{proposition}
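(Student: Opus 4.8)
The plan is to reuse the net $\overline{\apn}$ from Figure~\ref{Figure:ReachComplexity} almost verbatim and to replace the two-variable modal formula $\diam{}(\bo{}\bottom\wedge\diam{}^{-1}\diam{}^{-1}\top)$ by a one-variable $\FO(\step{})$-sentence that queries the same local pattern, but phrased as a reachability question rather than a neighbourhood question of the initial marking. First I would recall that $\exists\FO(\step{})$ restricted to one variable interpreted on $\pair{\reach{\apn}}{\step{}}$ can only assert, by the normal-form analysis in the proof of Proposition~\ref{proposition-decidability-fo1}, the existence of a reachable marking with or without a \sloop; more precisely, the sentence $\exists\avariable\;(\avariable\step{}\avariable)$ holds iff $\overline{\apn}$ has a reachable marking $\amarking$ with $\amarking\stepbis{\atransition}\amarking$ for some transition $\atransition$. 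So the reduction must arrange that such a reachable \sloop exists in the modified net \emph{iff} $\amarking_2$ is reachable from $\amarking_1$ in the input net $\apn$.

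The construction I would use is the following. Given $\apn=(P,T,F)$ with markings $\amarking_1,\amarking_2$, build $\overline{\apn}$ with all places of $\apn$, a control place $p_c$ that is put \insl with every transition of $\apn$ (so the simulation can run), an initialization place $p_i$ holding the single initial token, and a witness place $p_w$. From $p_i$ a transition $t_0$ loads $\amarking_1$ onto the places of $\apn$ and moves the control token to $p_c$, starting the simulation. Whenever the simulated marking of $\apn$ covers $\amarking_2$, a transition $t_{stop}$ consumes $\amarking_2$ from the $\apn$-places and moves the control token from $p_c$ to $p_w$. Finally $p_w$ carries a single transition $t_{sl}$ \insl with $p_w$ (i.e.\ $F'(p_w,t_{sl})=F'(t_{sl},p_w)=1$). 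Crucially, $t_{sl}$ is the \emph{only} \slt in the whole net: the transitions of $\apn$ are assumed (w.l.o.g.) to strictly change the marking, $p_c$'s self-loops on them are genuine self-loops but those transitions change the $\apn$-component so the overall marking moves, $t_0$ and $t_{stop}$ clearly change markings, and one checks no spurious \sloop arises. With this in place, a reachable marking with a \sloop exists iff a token ever reaches $p_w$, which happens iff $t_{stop}$ fires, which happens iff the simulation from $\amarking_1$ reaches a marking $\ge\amarking_2$; combined with $t_{stop}$ consuming exactly $\amarking_2$, one can additionally force equality (e.g.\ by also requiring all $\apn$-places empty after $t_{stop}$ via a cleanup gadget) so that it is exactly reachability of $\amarking_2$. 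The reduction formula is then the fixed sentence $\aformula\egdef\exists\avariable\;(\avariable\step{}\avariable)$, computable in logarithmic space, and by Proposition~\ref{proposition-reach-semilinear} this reachability question is decidable, consistent with $\MC{\exists\FO(\step{})}{URG}$ being decidable.

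The main obstacle I anticipate is the \emph{uniqueness of the \sloop}: I must make absolutely sure that no reachable marking of $\overline{\apn}$ other than those in which $p_w$ is marked admits a self-loop transition, since the one-variable fragment cannot distinguish \emph{which} marking carries the loop. This forces the w.l.o.g.\ assumption that every transition of $\apn$ genuinely changes the marking (transitions \insl with all their neighbouring places, i.e.\ neutral transitions, must be deleted first, as in Proposition~\ref{proposition-decidability-fo1}), and a careful audit of the auxiliary transitions $t_0$, $t_{stop}$ (and any cleanup transition) to confirm each strictly changes the global marking. The remaining bookkeeping — that covering $\amarking_2$ can be turned into exact reachability, and that the whole construction is logspace — is routine.
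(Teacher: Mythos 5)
Your setup is fine up to the point where you choose the polarity of the atomic formula, and that choice is fatal. In any Petri net, a reachable marking $\amarking$ satisfies $\amarking \step{} \amarking$ exactly when it enables some \ntrt, and enabledness of a fixed transition is upward-closed in the marking. Hence the sentence $\exists \avariable\, (\avariable \step{} \avariable)$ always expresses a coverability property (one covering query per \ntrt, as the paper itself notes in the proof of Proposition~\ref{proposition-decidability-fo1}), which is decidable in \expspace\ by Rackoff's bound. A correct logarithmic-space reduction from reachability to this question would therefore place Petri net reachability in \expspace\ --- open at the time of the paper and now known to be false. So no "cleanup gadget" can rescue your construction: the step you defer as routine, namely forcing $\atransition_{stop}$ to fire only when the simulated marking \emph{equals} $\amarking_2$ rather than covers it, amounts to a zero-test, and even if you track leftovers with a budget place you cannot make the \emph{presence} of a \sloop contingent on a place being empty, because presence of a \sloop is monotone in the marking. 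As written, your reduction establishes coverability-hardness only.

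The paper's proof flips the polarity precisely to escape this. It first reduces reachability to reachability of $\vec{0}$, then --- via a budget place that maintains the total token count --- to the question "is some marking with a designated place $\aplace$ empty reachable?", and finally adds a \emph{single} \ntrt \insl with $\aplace$ (all other \ntrts having been removed). Then a reachable marking has \emph{no} \sloop iff it leaves $\aplace$ empty, so the target sentence is $\exists \avariable\, \neg(\avariable \step{} \avariable)$, which is still a one-variable existential formula but whose truth is equivalent to reaching a semilinear set, i.e.\ genuinely reachability-hard. Your audit of which transitions induce \sloops is the right kind of care, and your simulation gadget is close to the paper's Figure~\ref{Figure:ReachComplexity} construction for $\ML(\bo{},\bo{}^{-1})$, but for the one-variable fragment you must detect the target marking by the \emph{absence} of an enabled \ntrt, not by its presence.
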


\begin{proof} Among any two of the following problems, there is a
logarithmic-space reduction:
\begin{enumerate}[(1)]
\item the reachability problem for Petri nets;
\item the reachability problem for Petri nets restricted to instances such that
      the target marking $\amarking$ is equal to $\vec{0}$;
\item the following variant of the reachability problem:
      \begin{description}
      \itemsep 0 cm
      \item[input] a Petri net  $\apn = (P,T,F, \amarking_0)$ with no neutral transitions
      and a place $\aplace \in P$.
      \item[question] Is there a marking $\amarking$ with $\amarking(\aplace) = 0$
                       such that $\amarking \in \reach{\apn}$?
      \end{description}
\end{enumerate}
To show that (3) is as hard as reachability, the idea is to introduce a budget place that maintains the sum of tokens in all other places.
From an instance of problem (3), let us build in instance of
 $\MC{\exists \FO(\step{})}{URG}$ restricted to a single variable.
We build a Petri net $\apn'$ from $\apn$ and $\aplace \in P$ by simply adding
a neutral transition (the unique one in $\apn'$) that is in self-loop with $\aplace$.
One can then easily show that
there is a marking $\amarking \in \reach{\apn}$ with $\amarking(\aplace) = 0$
iff $\RG{\apn'} \models \exists  \avariable \  \neg (\avariable \step{}
\avariable)$.
\end{proof}

\section{$\FO$ with Reachability Predicates}
\label{section-reachability-predicates}

In this section, we consider several first-order languages with
reachability relations $\step{*}$ or $\step{+}$, mainly without the
one-step relation $\step{}$.  Undecidability of these dialects does
not directly follow from Theorem~\ref{prop:undecFO3} since we may
exclude $\step{}$.  Nonetheless we follow the same proof schema.
Besides, we distinguish the case when reachability sets are semilinear
leading to a surprising undecidability result
(Proposition~\ref{proposition-undec-acc-semilinear-sets}).  Finally,
we show that model-checking unlabelled graphs with
$\FO(\step{},\step{*})$ is undecidable too.

\subsection{$\FO$ with reachability relations}
\label{section-fo-reachability-relations}
Let us see why the model checking problem for both the strict and the non-strict reachability relation is undecidable.
\subsubsection{Undecidability of  $\MC{\FO(\step{+})}{URG}$}

The decidability status of $\MC{\FO(\step{+})}{URG}$ is not directly
dependent upon the decidability status of $\MC{\FO(\step{})}{URG}$.
Still we are able to adapt the construction of
Section~\ref{section-undec-FO} but using now a formula $\aformula$ in
$\FO(\step{+})$.  The Petri net $\overline{\apn}$ is the one depicted
on Figure~\ref{GRAPHE1}.
\newcommand{\dead}{dl}
\newcommand{\finaloop}{sl}
\newcommand{\predead}{predl}
The formula $\aformula$ is defined as follows:
$$
\aformula\egdef
\forall \ \avariableter\; \dead(\avariableter) \ \Rightarrow \
(\exists \ \avariableter_1\; (\avariableter_1 \step{+} \avariableter) \wedge
\aformula_{left}(\avariableter_1)) \wedge (\exists \ \avariableter_2\;  (\avariableter_2 \step{+} \avariableter)
 \wedge \aformula_{right}(\avariableter_2))
$$
where
\begin{iteMize}{$\star$}
\item $\dead(\avariableter) \egdef \neg \exists \avariableter'\; \avariableter \step{+} \avariableter'$,
\item $\finaloop(\avariablebis) \egdef \avariablebis \step{+} \avariablebis
\ \wedge \ \forall {\sf w}\; [ \avariablebis \step{+} {\sf w} \Rightarrow
 {\sf w} \step{+} \avariablebis]$,
\item $\aformula_{left}(\avariableter) \egdef [\exists \ \avariablebis  \;
\avariableter \step{+} \avariablebis \wedge \finaloop(\avariablebis)] \wedge
\ [\forall \avariablebis \; \avariableter \step{+} \avariablebis
\Rightarrow (\finaloop(\avariablebis) \ \vee \ \dead(\avariablebis))]$,
\item $\aformula_{right}(\avariableter) \egdef [\exists \avariablebis \; \avariableter
\step{+} \avariablebis
\wedge \forall \avariablebis \; \avariableter \step{+} \avariablebis \Rightarrow \dead(\avariablebis)]$.
\end{iteMize}
\newcommand{\stmntlemundecreachstrict}{
  $\reach{\apn_1}  = \reach{\apn_2}$
  iff
  $\RG{\overline{\apn}} \models \aformula$.
}
\begin{lemma}\label{lemma-undecidability-reach-strict}
\stmntlemundecreachstrict
\end{lemma}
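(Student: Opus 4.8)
The plan is to mirror the proof of Lemma~\ref{lemma-undecidability-schema}, but adapted to the language $\FO(\step{+})$ where only the strictly iterative closure is available. The key observation is that in $\RG{\overline{\apn}}$, the graph structure below $\amarking_1$ and $\amarking_2$ (see Figure~\ref{GRAPHE1}) lets us detect, purely via $\step{+}$, whether a deadlock $\amarking$ came from the $\apn_1$-branch or the $\apn_2$-branch. Concretely: from $\amarking_1$ one can reach the self-loop marking $\amarking_{\circlearrowleft}$ (which is a cycle that reaches only itself, hence captured by $\finaloop$), and also the deadlock $\amarking_\ell$; from $\amarking_2$ one can reach only the deadlock $\amarking_r$. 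Thus $\aformula_{left}(\avariableter)$ says ``$\avariableter$ can reach some $\finaloop$-marking, and everything reachable from $\avariableter$ is either a $\finaloop$-marking or a deadlock'' — which pins down $\avariableter$ as being one of the $\amarking_1$-type markings — while $\aformula_{right}(\avariableter)$ says ``$\avariableter$ can reach something, and everything reachable from $\avariableter$ is a deadlock'' — pinning $\avariableter$ as an $\amarking_2$-type marking.

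First I would verify the side conditions: one should check that, as in the proof of Proposition~\ref{prop:undecFO3}, we may assume without loss of generality that every transition of $\apn_1$ and $\apn_2$ strictly changes the marking, so that no self-loops appear inside the simulated subnets; this guarantees that the only marking in $\RG{\overline{\apn}}$ satisfying $\finaloop$ is $\amarking_{\circlearrowleft}$, and the only markings satisfying $\dead$ are the deadlocks $\amarking_\ell, \amarking_r$ together with markings of the simulated nets that happen to be dead in the original nets — and I would confirm the construction of $\overline{\apn}$ can be massaged so that the relevant deadlocks are exactly the $\amarking_\ell$ / $\amarking_r$ obtained via $\atransition^1_{dl}$ / $\atransition^2_{dl}$ (the argument is the same as in Lemma~\ref{lemma-undecidability-schema}: deadlock markings can only be reached this way). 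Then for the ($\Rightarrow$) direction: given $\reach{\apn_1}=\reach{\apn_2}$, take any deadlock $\amarking$; it equals $\amarking_\ell$ (or $\amarking_r$) coming from some $\amarking_1$ via $\atransition^1_{dl}$, and $\amarking_1$ comes from a marking $\amarking_1'$ of $\apn_1$; this $\amarking_1$ satisfies $\aformula_{left}$. By the reachability-set equality, the same underlying marking is reachable in $\apn_2$, yielding $\amarking_2'$, hence $\amarking_2$, hence a witness $\avariableter_2$ with $\amarking_2 \step{+} \amarking$ and $\aformula_{right}(\amarking_2)$. For the ($\Leftarrow$) direction: assuming $\RG{\overline{\apn}}\models\aformula$, establish both inclusions exactly as in Lemma~\ref{lemma-undecidability-schema} — prolong any reachable marking of $\apn_1$ to a deadlock $\amarking$, extract the $\aformula_{right}$-predecessor $\avariableter_2$ forced by $\aformula$, argue it must sit on the $\apn_2$-branch (it cannot be the $\amarking_1$-type predecessor because that one fails $\aformula_{right}$, as it reaches $\amarking_{\circlearrowleft}$ which is not a deadlock), and conclude the marking is reachable in $\apn_2$; symmetrically for the other inclusion using $\aformula_{left}$.

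The main obstacle I anticipate is purely definitional bookkeeping: confirming that $\finaloop$ and $\dead$ really isolate the intended markings in $\RG{\overline{\apn}}$ and nowhere else. In particular, $\finaloop(\avariablebis)\egdef \avariablebis\step{+}\avariablebis \wedge \forall {\sf w}\,(\avariablebis\step{+}{\sf w}\Rightarrow {\sf w}\step{+}\avariablebis)$ is supposed to characterize $\amarking_{\circlearrowleft}$, which works because $\amarking_{\circlearrowleft}$ has the self-loop $\atransition_{sl}$ (so $\amarking_{\circlearrowleft}\step{+}\amarking_{\circlearrowleft}$) and reaches nothing else; I must make sure no marking inside a simulated subnet accidentally forms such a trap cycle, which is why the ``every transition changes the marking'' normalization is essential, and I should also ensure the simulated nets contain no reachable cycles that would create spurious $\finaloop$-markings — if they might, one can either argue the combined formula still works or preprocess the nets. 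Everything else is a routine adaptation of the already-verified Lemma~\ref{lemma-undecidability-schema}, so I would keep the write-up parallel to that proof and only dwell on the points where $\step{+}$ behaves differently from $\step{}$.
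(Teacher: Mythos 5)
Your proposal is correct and follows essentially the same route as the paper's proof: establish that $\dead$ holds exactly at the markings $\amarking_\ell,\amarking_r$, that $\finaloop$ holds exactly at $\amarking_{\circlearrowleft}$, and that $\aformula_{left}$ and $\aformula_{right}$ isolate the type-$\amarking_1$ and type-$\amarking_2$ markings respectively, and then rerun the two-direction argument of Lemma~\ref{lemma-undecidability-schema}. The only point you leave hanging --- possible spurious $\finaloop$-markings inside the simulated subnets --- is resolved in the paper without any normalisation: every simulation marking can fire the irreversible transition $\atransition^1_{end}$ or $\atransition^2_{end}$, so the universal conjunct of $\finaloop$ (and likewise of $\aformula_{left}$ and $\aformula_{right}$) fails at all such markings regardless of cycles or \sloops.
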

\begin{proof}
  The principles presented in the proof of
  Lemma~\ref{lemma-undecidability-schema} apply here. Below, we
  refer to markings as they are depicted in Figure~\ref{GRAPHE1}.

  First, observe that none of the formulae $\dead(\avariableter),
  \finaloop(\avariablebis), \aformula_{left}(\avariableter)$ nor
  $\aformula_{right}(\avariableter)$ may be satisfied at a marking
  reached in course of simulating the original
  Petri nets
  $\apn_1$ or $\apn_2$: the formula $\dead(\avariableter)$,
  which asserts the absence of a successor, is always
  false on such markings whereas the formula $\finaloop(\avariablebis)$, requiring
  that one can always come back to $\avariablebis$, is false at such markings
  since the transitions $\atransition^1_{end}$ and $\atransition^2_{end}$ cannot be undone.
  Furthermore, neither $\dead$ nor $\finaloop$
  is satisfied
  by the markings $\amarking_1$ or $\amarking_2$. Hence,
  formulae $\aformula_{left}(\avariableter)$ and
  $\aformula_{right}(\avariableter)$ are not satisfied by any marking
  $\avariableter$
  reached in the course of simulating
  $\apn_1$ or $\apn_2$: any such
  marking has at least one successor of the type $\amarking_1$
  or $\amarking_2$, thus invalidating the subformulae
  $\forall \avariablebis \; \avariableter \step{+} \avariablebis
\Rightarrow (\finaloop(\avariablebis) \ \vee \ \dead(\avariablebis))$
and
  $\forall \avariablebis \; \avariableter \step{+} \avariablebis \Rightarrow \dead(\avariablebis)$.

  Now, it is straightforward to verify the following facts:
  \begin{iteMize}{$\star$}
  \item $\dead(\avariableter)$ is satisfied precisely at markings
    $\amarking_r$ and $\amarking_\ell$;
  \item $\finaloop(\avariablebis)$  is satisfied precisely at marking $\amarking_\circlearrowleft$;
  \item $\aformula_{left}$ and $\aformula_{right}$ are satisfied respectively
    at markings $\amarking_1$ and $\amarking_2$.
  \end{iteMize}
  The formula $\aformula$ 
  may be written $\forall  \avariableter\;
  \aformula'(\avariableter)$ with $\aformula'(\avariableter)$
  of the form
  $\dead(\avariableter) \
  \Rightarrow \ \aformulabis(\avariableter)$. Formula $\aformula'(\avariableter)$ is true
  whenever $\avariableter$ evaluates to a non-deadlock marking. Otherwise,
  when $\avariableter$ is a deadlock, validity of
  %
  $\aformulabis$ requires that it has two distinct predecessors $\avariableter_1$ and $\avariableter_2$
  of the
  types $\amarking_1$ and $\amarking_2$, entailing the
  equality of the reachability sets of $\apn_1$ and $\apn_2$.
  Conversely, if both reachability sets are equal, then all markings
  of $\apn_1$ and $\apn_2$ are connected 
  as described in
  Figure~\ref{GRAPHE1}, entailing the validity of $\aformula$ in
  $\overline{\apn}$.  
\end{proof}

\newcommand{\stmntundecfotrans}{
  $\MC{\FO(\step{+})}{URG}$ is undecidable. Furthermore this results
  holds for the fixed formula $\aformula$ defined earlier.
}
\begin{corollary} \label{corollary-undecibility-FOtransitive}
  \stmntundecfotrans
\end{corollary}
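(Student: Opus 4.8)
The plan is to read off Corollary~\ref{corollary-undecibility-FOtransitive} directly from Lemma~\ref{lemma-undecidability-reach-strict} together with Theorem~\ref{theo-hack}. First I would recall the set-up: from two Petri nets $\apn_1$ and $\apn_2$ sharing all their places, the net $\overline{\apn}$ of Figure~\ref{GRAPHE1} is constructed effectively (and in fact in logarithmic space). Lemma~\ref{lemma-undecidability-reach-strict} asserts that $\reach{\apn_1} = \reach{\apn_2}$ if and only if $\URG{\overline{\apn}} \models \aformula$, where $\aformula$ is the $\FO(\step{+})$ sentence displayed just above that lemma. So $(\apn_1,\apn_2) \mapsto (\overline{\apn},\aformula)$ is a many-one reduction from the equality problem for Petri net reachability sets to $\MC{\FO(\step{+})}{URG}$, and since the former is undecidable by Theorem~\ref{theo-hack}, so is the latter.

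For the sharper statement, I would then point out that $\aformula$ is a single fixed sentence of $\FO(\step{+})$: it is built from the binary predicate $\step{+}$ and equality alone (via the abbreviations $\dead$, $\finaloop$, $\aformula_{left}$, $\aformula_{right}$), and its definition never mentions the places, transitions, or initial markings of $\apn_1$ or $\apn_2$. Consequently, even a decision procedure that only had to handle the single instance $\aformula$ could be composed with the construction of $\overline{\apn}$ to decide $\reach{\apn_1} = \reach{\apn_2}$, which is impossible. Hence $\MC{\FO(\step{+})}{URG}$ is already undecidable when restricted to the fixed formula $\aformula$.

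I do not anticipate any real obstacle: the substantive work is entirely contained in Lemma~\ref{lemma-undecidability-reach-strict}, which is proved. The one point worth a line of care is purely notational --- the satisfaction relation should be understood over the structure $\URG{\overline{\apn}}$, so that the predicate $\step{+}$ receives its intended reading as the strict transitive closure of the one-step firing relation restricted to reachable markings, in accordance with the semantics used in Lemma~\ref{lemma-undecidability-reach-strict}.
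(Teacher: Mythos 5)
Your proposal is correct and matches the paper's (implicit) argument exactly: the corollary is read off from Lemma~\ref{lemma-undecidability-reach-strict} together with Theorem~\ref{theo-hack}, with the observation that $\aformula$ is fixed and independent of $\apn_1,\apn_2$. Your notational remark that satisfaction must be taken over $\URG{\overline{\apn}}$ (so that $\step{+}$ is interpreted) is a reasonable clarification of the paper's slightly loose use of $\RG{\overline{\apn}}$ in the lemma.
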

\subsubsection{Undecidability of  $\MC{\FO(\step{*})}{URG}$}

For showing undecidability of $\MC{\FO(\step{*})}{URG}$, we have
to adapt our usual proof schema since, in $\FO(\step{*})$,
we are no longer able to identify \sloops as we did in
$\FO(\step{+})$.
The new schema is illustrated in
Figure~\ref{GRAPHE2}.

\newcommand{\stmntpropundecreachnonstrict}{
  $\MC{\FO(\step{*})}{URG}$ is undecidable.
}

\begin{proposition} \label{prop-undec-reach-non-strict}
\stmntpropundecreachnonstrict
\end{proposition}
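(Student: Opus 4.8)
The plan is to reduce, once more, a problem from Theorem~\ref{theo-hack} — this time the \emph{inclusion} problem $\reach{\apn_2}\subseteq\reach{\apn_1}$ for two Petri nets $\apn_1,\apn_2$ with the same places — and to reuse the combination $\overline{\apn}$ of Section~\ref{section-undec-FO}: $\overline{\apn}$ consumes an initial token to guess, via competing transitions $\atransition^1_c,\atransition^2_c$, whether to load the initial marking of $\apn_1$ (with a control token on $\aplace_1$) or of $\apn_2$ (with a control token on $\aplace_2$); it runs the chosen net with its original transitions kept \insl with the control place, and stops the simulation at any moment via $\atransition^1_{end}$ or $\atransition^2_{end}$ (which moves the control token away and so disables the original transitions). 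The obstruction specific to $\FO(\step{*})$ is that $\step{*}$ is reflexive: we can no longer identify a $1$-loop, we cannot even single out deadlocks by a $\neg(\avariableter\step{}\avariableter)$-style trick, and there is no equality predicate. Hence the ``endgame'' following $\atransition^i_{end}$ must be redesigned so that every notion the reduction needs becomes definable from $\step{*}$ alone; this is the new schema of Figure~\ref{GRAPHE2}.

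Concretely, after $\atransition^i_{end}$ stops the simulation at a marking $\amarking$ of the shared places, I would route the two branches \emph{differently} before they both reach a common deadlock $d_{\amarking}$ (= ``$\amarking$ on the shared places, all control places empty, one token on a fixed terminal place''). On the $\apn_1$-branch, the entry node $A^1_{\amarking}$ forks — by two transitions — into two markings $y_{\amarking},z_{\amarking}$ that are $\step{*}$-incomparable (the token sits on two distinct auxiliary places, neither convertible into the other) and that both rejoin at $d_{\amarking}$; on the $\apn_2$-branch, a single transition leads from the entry node $A^2_{\amarking}$ straight to $d_{\amarking}$. Since $d_{\amarking}$ depends only on $\amarking$, the two branches merge there, so $d_{\amarking}\in\reach{\overline{\apn}}$ iff $\amarking\in\reach{\apn_1}\cup\reach{\apn_2}$, and $d_{\amarking}$ keeps no trace of which net produced it.

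In $\FO(\step{*})$ one can express: ``$\avariableter$ is a bottom node'', $\mathrm{bot}(\avariableter)\egdef\forall\avariablebis\,(\avariableter\step{*}\avariablebis\Rightarrow\avariablebis\step{*}\avariableter)$; ``$\avariableter$ is the only bottom node reachable from $\avariable$'', $\mathrm{only}(\avariable,\avariableter)\egdef\avariable\step{*}\avariableter\wedge\forall\avariablebis\,((\avariable\step{*}\avariablebis\wedge\mathrm{bot}(\avariablebis))\Rightarrow\avariablebis\step{*}\avariableter)$; and ``$\avariable$ reaches two $\step{*}$-incomparable nodes, each reaching $\avariableter$'' — where incomparability supplies distinctness for free, compensating for the missing $=$. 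Take
\[
\aformula\egdef\forall\avariableter\;\big(\mathrm{bot}(\avariableter)\Rightarrow\exists\avariable\;(\mathrm{only}(\avariable,\avariableter)\wedge\mathrm{fork}(\avariable,\avariableter))\big),
\]
with $\mathrm{fork}(\avariable,\avariableter)\egdef\exists\avariablebis\exists\avariable'\,(\avariable\step{*}\avariablebis\wedge\avariable\step{*}\avariable'\wedge\avariablebis\step{*}\avariableter\wedge\avariable'\step{*}\avariableter\wedge\neg(\avariablebis\step{*}\avariable')\wedge\neg(\avariable'\step{*}\avariablebis))$. The verification mimics Lemma~\ref{lemma-undecidability-schema}: any simulation marking can still fire $\atransition^i_{end}$ and then reach a deadlock it never returns to, and the endgame is acyclic, so the bottom nodes of $\RG{\overline{\apn}}$ are exactly the reachable $d_{\amarking}$; moreover, for such a $d_{\amarking}$ there is $\avariable$ with $\mathrm{only}(\avariable,d_{\amarking})\wedge\mathrm{fork}(\avariable,d_{\amarking})$ iff $A^1_{\amarking}$ is reachable, i.e.\ iff $\amarking\in\reach{\apn_1}$ — if $\amarking\in\reach{\apn_1}$ then $\avariable=A^1_{\amarking}$ works (its only reachable bottom node is $d_{\amarking}$ since $\apn_1$'s transitions are disabled there), while if $\amarking\notin\reach{\apn_1}$ then, once one restricts to nodes whose only reachable bottom node is $d_{\amarking}$, the part of the graph reaching $d_{\amarking}$ is a chain through $A^2_{\amarking}$, containing no $\step{*}$-incomparable reconvergence. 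Hence $\RG{\overline{\apn}}\models\aformula$ iff every reachable $d_{\amarking}$ has $\amarking\in\reach{\apn_1}$ iff $\reach{\apn_2}\subseteq\reach{\apn_1}$; undecidability follows, with $\aformula$ a fixed formula.

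The main obstacle is precisely this last verification: the $\mathrm{fork}$ pattern must be not only $\FO(\step{*})$-expressible — no equality, no loop- or length-detection, only the reflexive-transitive pre-order — but also \emph{robust} against the arbitrary reachability structure of the simulated subnets, which may be cyclic and heavily branching. The $\mathrm{only}(\avariable,\avariableter)$ clause is doing the real work here: it confines the quantified predecessor $\avariable$ to the endgame, ruling out the global source $\amarking_0$ (which reaches every node) and the deep simulation markings (which reach several deadlocks), so that a genuine fork into $d_{\amarking}$ can only be the one deliberately built into the $\apn_1$-gadget. Re-establishing the biconditional cleanly under these constraints is where the effort lies; everything else is a routine adaptation of the schema, including keeping $\aformula$ independent of $\apn_1,\apn_2$.
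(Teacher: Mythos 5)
Your proof is correct, and it instantiates the paper's proof schema with a genuinely different gadget and formula. The paper reduces from \emph{equality} of reachability sets and keeps the endgame of Figure~\ref{GRAPHE2} symmetric: after $\atransition^1_{end}$ or $\atransition^2_{end}$ the two branches pass through pre-deadlock markings $\amarking_1,\amarking_2$ (which differ only in their control tokens and are therefore automatically $\step{*}$-incomparable) into deadlocks that coincide exactly when the simulated marking is reachable in both nets; its fixed formula asserts that every ``dead'' node (your bottom node, defined by the same $\forall {\sf w}\,(\avariableter\step{*}{\sf w}\Rightarrow{\sf w}\step{*}\avariableter)$) has two $\step{*}$-incomparable ``predead'' ancestors, where a node is predead iff it is not dead but every node strictly reachable from it is dead. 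You instead reduce from \emph{inclusion}, make the construction asymmetric by grafting a diamond $A^1_{\amarking}\to\{y_{\amarking},z_{\amarking}\}\to d_{\amarking}$ onto the $\apn_1$-side only, and detect it with the $\mathrm{only}$-guarded $\mathrm{fork}$ pattern. Both arguments rest on the same two observations --- bottom nodes are $\FO(\step{*})$-definable, and $\step{*}$-incomparability substitutes for the missing equality --- so the real difference is where the distinguishing local structure lives. The paper needs no extra gadget, since the distinct control places already make $\amarking_1$ and $\amarking_2$ incomparable, and its predead predicate pins down the only two candidate ancestors directly; you pay for an extra diamond, and your $\mathrm{only}$ clause then carries the burden of excluding spurious witnesses (the global source, and a simulation marking whose forward closure collapses onto a single deadlock --- which, as you correctly argue, still only yields the chain through $A^2_{\amarking}$ and hence no fork). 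Both verifications go through; yours is marginally longer but shows the schema also works for the inclusion problem with an asymmetric endgame.
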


\begin{figure}[htbp]
\begin{center}
{\small 
  %
  \begin{tikzpicture}[scale=.8]
    \tikzstyle{every transition}=[fill, minimum width=5mm, minimum height=2mm];
    \tikzstyle{multiarc}=[arrows = -latex, bend left, thick, shorten >=2pt];
    \tikzstyle{subnet}=[draw,dashed,shape=regular polygon, regular polygon sides=6];
    \tikzstyle{arc}=[arrows=-latex,auto=left];

    \draw
    node[subnet](N1){$\apn_1$}
    node[subnet, node distance=3cm, right=of N1](N2){$\apn_2$} 
    node [node distance=1.4cm, right=of N1](tempo){}
    node [node distance=.5cm, above=of tempo](init){$\amarking_0$} 
    (init) edge [arc,swap] node {$\atransition^1_c$} (N1)
    (init) edge [arc] node {$\atransition^2_c$} (N2)
    node[below=of N1](z1){$\amarking_1$}
    node[below=of N2](z2){$\amarking_2$}
    node[below right=of z1](zl){$\amarking_\ell$}
    node[below left=of z2](zr){$\amarking_r$}
    (N1) edge [dashed, arrows=-latex] (z1)
    (N2) edge [dashed, arrows=-latex] (z2)
    (z1) edge [arrows=-latex] (zl)
    (z2) edge [arrows=-latex] (zr)
    ;
\end{tikzpicture}
}

\end{center}
\caption{Petri net $\overline{\apn}$ adapted for $\FO(\step{*})$}\label{GRAPHE2}
\end{figure}
\begin{proof}
  %
  From two Petri nets
  $\apn_1$ and $\apn_2$, we construct the Petri net
  $\overline{\apn}$ depicted in Figure~\ref{GRAPHE2}. We define the
  following formulae:
\begin{iteMize}{$\star$}
\itemsep 0 cm
\item  $\dead(\avariableter) \egdef  \forall {\sf w} \;  \avariableter \step{*}
       {\sf w} \Rightarrow  {\sf w}  \step{*}  \avariableter$,
\item $\predead(\avariableter) \egdef \neg \dead(\avariableter) \wedge
(\forall  {\sf w}  (\avariableter \step{*}
       {\sf w} \wedge \neg  {\sf w}  \step{*}  \avariableter) \Rightarrow \dead({\sf w}))$.
\end{iteMize}
Thus in Figure~\ref{GRAPHE2}, the markings $\amarking_r$ and
$\amarking_\ell$ satisfy $\dead$, and the markings
$\amarking_1$ and $\amarking_2$ satisfy $\predead$, but no other
marking satisfies these predicates.

The formula $\aformula$ is defined as follows:
$$
\aformula\egdef
\forall \avariableter \;(\dead(\avariableter)\Rightarrow \exists \avariableter_1,\avariableter_2\;
(\avariableter_1 \step{*} \avariableter \wedge
\predead(\avariableter_1)\wedge \avariableter_2 \step{*} \avariableter
\wedge \predead(\avariableter_2)\wedge \neg \avariableter_1 \step{*}
\avariableter_2))$$
Observe that $\neg \avariableter_1 \step{*} \avariableter_2$ ensures that
$\avariableter_1$ and $\avariableter_2$ have distinct interpretations.
%
By construction,
$\reach{\apn_1} = \reach{\apn_2}$ iff $\RG{\overline{\apn}} \models \aformula$.
\end{proof}

Even though $\MC{\FO(\step{}, =)}{UG}$ is decidable
(see Proposition~\ref{proposition-dec-ug}),
replacing $\step{}$ by $\step{*}$ and adding $init$ leads to undecidability.

\begin{corollary}\label{coro:undecFO_UG_init_reach}
$\MC{\FO(init,\step{*})}{UG}$ is undecidable.
\end{corollary}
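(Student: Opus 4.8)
The plan is to derive Corollary~\ref{coro:undecFO_UG_init_reach} from Proposition~\ref{prop-undec-reach-non-strict} by showing that the structure $\URG{\overline{\apn}}$ used there is first-order definable, with parameters $init$ and $\step{*}$, inside $\UG{\overline{\apn}}$. Recall that $\UG{\overline{\apn}}$ has domain $\Nat^{P}$ with the full (non-reachability-restricted) step relation, whereas $\URG{\overline{\apn}}$ restricts everything to $\reach{\overline{\apn}}$. The key observation is that in the language $\FO(init,\step{*})$ we can relativize any sentence to the reachable markings: a marking $\amarking$ is reachable iff $\exists \avariable\;(init(\avariable)\wedge \avariable\step{*}\amarking)$, and since $\overline{\apn}$ has a single initial marking this is a legitimate relativization predicate.

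The steps I would carry out are as follows. First, recall the net $\overline{\apn}$ and the formula $\aformula$ from the proof of Proposition~\ref{prop-undec-reach-non-strict}; note that $\aformula$ uses only $\step{*}$ and quantification, no $\step{}$ and no $init$. Second, define $reach(\avariable)\egdef \exists\avariablebis\,(init(\avariablebis)\wedge \avariablebis\step{*}\avariable)$; since $\UG{\overline{\apn}}$ has exactly one marking satisfying $init$, namely $\amarking_0$, the predicate $reach(\avariable)$ holds in $\UG{\overline{\apn}}$ precisely at the elements of $\reach{\overline{\apn}}$. Third, form the relativization $\aformula^{reach}$ of $\aformula$ by replacing each $\exists\avariableter\,\psi$ with $\exists\avariableter\,(reach(\avariableter)\wedge\psi)$ and each $\forall\avariableter\,\psi$ with $\forall\avariableter\,(reach(\avariableter)\Rightarrow\psi)$; this is a sentence of $\FO(init,\step{*})$. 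Fourth, verify that $\UG{\overline{\apn}}\models\aformula^{reach}$ iff $\URG{\overline{\apn}}\models\aformula$: this is the standard relativization lemma, using that $\step{*}$ restricted to $\reach{\overline{\apn}}\times\reach{\overline{\apn}}$ agrees with the $\step{*}$ of $\URG{\overline{\apn}}$ (reachable markings only reach reachable markings, so no predecessor/successor leaves the relativized domain). Combining with Proposition~\ref{prop-undec-reach-non-strict}, we get $\reach{\apn_1}=\reach{\apn_2}$ iff $\UG{\overline{\apn}}\models\aformula^{reach}$, and Theorem~\ref{theo-hack} finishes the undecidability argument.

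The only point requiring any care — and the main obstacle, though a mild one — is checking that $\step{*}$ as interpreted in $\UG{\overline{\apn}}$ really does restrict correctly, i.e.\ that for reachable $\amarking,\amarking'$ we have $\amarking\step{*}\amarking'$ in $\UG{\overline{\apn}}$ iff $\amarking\step{*}\amarking'$ in $\URG{\overline{\apn}}$. The $\Leftarrow$ direction is immediate. For $\Rightarrow$, any firing sequence from a reachable $\amarking$ stays within reachable markings, so the witnessing path for $\amarking\step{*}\amarking'$ in $\UG{\overline{\apn}}$ lies entirely in $\reach{\overline{\apn}}$ and hence is a valid path in $\URG{\overline{\apn}}$. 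Thus the relativized $\step{*}$ coincides with the $\step{*}$ of $\URG{\overline{\apn}}$, and the relativization is faithful. Everything else is routine, so the corollary follows.
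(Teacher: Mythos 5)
Your proposal is correct and matches the paper's own argument: the paper likewise reduces $\MC{\FO(\step{*})}{URG}$ (undecidable by Proposition~\ref{prop-undec-reach-non-strict}) to $\MC{\FO(init,\step{*})}{UG}$ by relativizing all quantifiers to the markings $\avariable$ with $\avariable_0 \step{*} \avariable$ for the unique $\avariable_0$ satisfying $init$. The only cosmetic difference is that the paper pulls $\exists \avariable_0\, init(\avariable_0)$ to the front rather than defining a $reach(\cdot)$ predicate, which is equivalent since $init$ is a singleton.
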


Indeed, $\MC{\FO(\step{*})}{URG}$ reduces to
 $\MC{\FO(init,\step{*})}{UG}$ by relativization:
$\URG{\apn} \models \aformula$ iff
$\UG{\apn} \models \exists \avariable_0\ \ init(\avariable_0) \wedge \amap(\aformula)$
where $\aformula$ and $\amap(\aformula)$ are in $\FO(\step{*})$,
$\amap$ is homomorphic for Boolean connectives and
$\amap(\forall  \avariable \ \aformulabis) \egdef
\forall  \avariable \ (\avariable_0 \step{*} \avariable) \Rightarrow \amap(\aformulabis)$.

\medskip

\begin{openproblem} 
Decidability status of $\MC{\FO(\step{*})}{UG}$.
\end{openproblem}

\subsection{When semilinearity enters into the play}
We saw that $\MC{\FO(\step{},=)}{URG}$ restricted to Petri nets with
effectively semilinear reachability sets is decidable, using a
translation into Presburger arithmetic (see
Proposition~\ref{proposition-dec-reach}).  This section is devoted to
discovering what happens when the relation $\step{*}$ is added.  We
establish that $\MC{\FO(\step{},\step{*})}{URG}$ restricted to Petri
nets with semilinear reachability sets is undecidable, by a reduction
from $\MC{\FO(\step{})}{URG}$.  Given a Petri net $\apn$ and a
sentence $\aformula \in \FO(\step{})$, we reduce the truth of
$\aformula$ in $\RG{\apn}$ to the truth of a formula
$\overline{\aformula}$ in $\RG{\overline{\apn}}$ where
$\overline{\apn}$ is an augmented Petri net with a semilinear
reachability set.  The Petri net $\overline{\apn}$ is defined from
$\apn$ by adding the new places $\aplace_0$, $\aplace_1$ and
$\aplace_2$; each transition from $\apn$ is \insl with
$\aplace_1$. Moreover, we add a new set of transitions \insl with
$\aplace_2$, each of which adds tokens to or removes tokens from a
corresponding (original) place of $\apn$ (thus modifying its contents
arbitrarily).
These transitions form a subnet denoted by $\brownian$.
Three other transitions are added; see Figure~\ref{PN}
for a schematic representation of $\overline{\apn}$ (the initial marking $\amarking_0'$ of
$\overline{\apn}$ restricted to places in $\apn$ is $\amarking_0$, while
$\amarking'_0(\aplace_0) = \amarking'_0(\aplace_1) = 1$ and
$\amarking'_0(\aplace_2) = 0$).
%
%
Our intention is to force $\reach{\overline{\apn}}$ to be semilinear
while staying able to identify a subset from $\reach{\overline{\apn}}$
in bijection with $\reach{\apn}$; this is a way to drown $\reach{\apn}$
into $\reach{\overline{\apn}}$. Indeed, $\reach{\overline{\apn}}$ contains
all markings such that the sum of $\aplace_1$ and $\aplace_2$ is
$1$ and $\aplace_0$ is at most $1$.
Nevertheless, if the transition $\atransition$ is fired first, then the
subsequently
reachable markings are exactly those of $\apn$ (except that $\aplace_1$ contains one token); $\RG{\apn}$ embeds
isomorphically into $\RG{\overline{\apn}}$. Until $\atransition$ is
fired, one may always come back to $\amarking'_0$, using the brownian
subnet $\brownian$, but this is impossible afterwards.

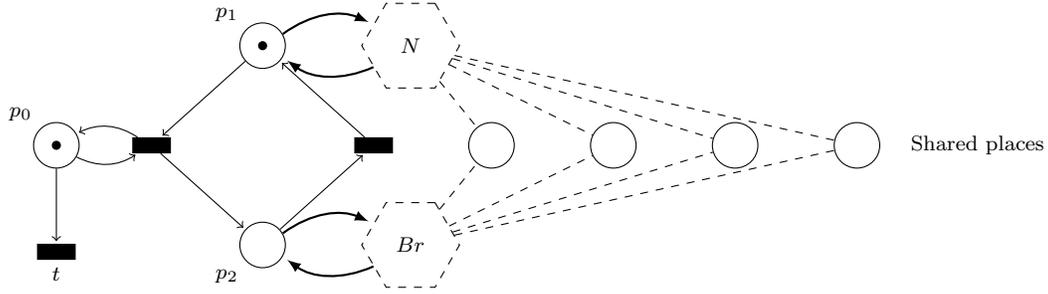
\begin{figure}[htbp]
\begin{center}
{\scriptsize
  %
  \begin{tikzpicture}[scale=.4]
    \tikzstyle{every transition}=[fill, minimum width=5mm, minimum height=2mm];
    \tikzstyle{multiarc}=[arrows = -latex, bend left, thick, shorten >=2pt];
    \tikzstyle{subnet}=[draw,dashed,shape=regular polygon, regular polygon sides=6, minimum size=13mm];
 
    \draw
    node[place,label=above left:$p_0$] (p0){}
    [children are tokens]
    child{ node [token] {} }
    node[right=of p0](pprime){}
    node[place,above right=of pprime,label=above left:$p_1$](p1){}
    [children are tokens]
    child{ node [token] {} }
    node[place,below right=of pprime,label=below left:$p_2$](p2){}
    node[transition, below  left=of p1](t1){}
    node[transition, below  right=of p1](t2){}
    node[transition, below=of p0,label=below:$t$](t){}
    (t)edge [pre] (p0)
    (t1)edge [pre,bend left] (p0)
    (t1)edge [pre] (p1)
    (t2)edge [pre] (p2)
    (t1)edge [post,bend right] (p0)
    (t1)edge [post] (p2)
    (t2)edge [post] (p1)
    node[subnet, right=of p1](N){$\apn$}
    node[subnet, right=of p2](Br){$Br$}
    (p1) edge [multiarc] (N)
    (p2) edge [multiarc] (Br)
    (N) edge [multiarc] (p1)
    (Br) edge [multiarc] (p2)
    \foreach \cur/\prev in {1/t2,2/pp1,3/pp2,4/pp3}
    {
      node [place, right=of \prev](pp\cur){}
      (N) edge[dashed] (pp\cur)
      (Br) edge[dashed] (pp\cur)
    }
    node [node distance=.3cm,right=of pp4]{Shared places}
    ;
  \end{tikzpicture}
}
%
\end{center}
\caption{Petri net $\overline{\apn}$}\label{PN}
\end{figure}
\newcommand{\stmntpropundecaccsem}{
$\MC{\FO(\step{},\step{*})}{URG}$ restricted to Petri nets with semilinear
reachability sets is undecidable.
}
\begin{proposition} \label{proposition-undec-acc-semilinear-sets}
\stmntpropundecaccsem
\end{proposition}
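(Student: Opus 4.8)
The construction of $\overline{\apn}$ is already given, so the proof splits into two parts: (i) $\reach{\overline{\apn}}$ is semilinear, and (ii) there is a sentence $\overline{\aformula}\in\FO(\step{},\step{*})$ with $\RG{\apn}\models\aformula$ iff $\RG{\overline{\apn}}\models\overline{\aformula}$. Since $\MC{\FO(\step{})}{URG}$ is undecidable (Corollary~\ref{prop:undecFO}), (i) and (ii) together give the claim.

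For (i) I would first read off the place invariants of $\overline{\apn}$: along every firing sequence from $\amarking'_0$ one has $\amarking(\aplace_0)\le 1$ and $\amarking(\aplace_1)+\amarking(\aplace_2)=1$, because the three added transitions only move a single token between $\aplace_1$ and $\aplace_2$ and consume the $\aplace_0$-token, while the transitions of $\apn$ and of $\brownian$ leave $\aplace_0,\aplace_1,\aplace_2$ untouched. Conversely, since a $\brownian$-transition is enabled as soon as $\aplace_2$ holds a token and can add or remove a token on any original place, every marking satisfying the two invariants is reachable (fire $\atransition_1$ to move the token onto $\aplace_2$, reshape the original places with $\brownian$, then possibly fire $\atransition$ and/or $\atransition_2$). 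Hence $\reach{\overline{\apn}}$ equals the set carved out by the two invariants, which is visibly semilinear.

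For (ii) the plan is to relativise $\aformula$. The graph $\RG{\overline{\apn}}$ is layered according to $\amarking(\aplace_0)$ and to which of $\aplace_1,\aplace_2$ carries the token; the ``$\aplace_0$-marked'' part is strongly connected (via $\brownian$, $\atransition_1$, $\atransition_2$ and the transitions of $\apn$), and $\atransition$ is the unique \emph{irreversible} step out of it, as nothing ever refills $\aplace_0$; after firing $\atransition$ one either stays inside the faithful copy $S\egdef\{\iota(\vec v):\vec v\in\reach{\apn}\}$ (where $\iota(\vec v)$ has $\aplace_1$ marked, $\aplace_0,\aplace_2$ empty, original content $\vec v$, and $\step{}$ restricted to $S$ coincides with the step relation of $\apn$) or one falls into the $\brownian$-region, on which no transition of $\apn$ is enabled. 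The central step is a $\FO(\step{},\step{*})$-formula $\theta(\avariable)$ that, evaluated in $\RG{\overline{\apn}}$, holds exactly on $S$: I would pin down the marking $\iota(\amarking_0)$ obtained by firing $\atransition$ at the initial marking by a first-order formula exploiting the contrast between the reversible $\brownian$-moves and the one-way step $\atransition$, and then put $\theta(\avariable)\egdef\iota(\amarking_0)\step{*}\avariable$. Defining $\overline{\aformula}$ as $\aformula$ with every quantifier relativised to $\theta$ and every atom $\avariable\step{}\avariablebis$ left unchanged, the map $\iota$ is an isomorphism from $\RG{\apn}$ onto the $\theta$-definable substructure of $\RG{\overline{\apn}}$, whence $\RG{\apn}\models\aformula$ iff $\RG{\overline{\apn}}\models\overline{\aformula}$.

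The main obstacle is exactly the construction of $\theta$: one must argue that no ``$\brownian$-corrupted'' marking can leak back into $S$, so that $S$ genuinely is the $\step{*}$-cone of a single, $\FO(\step{},\step{*})$-identifiable marking. This is where the layered placement of $\aplace_0,\aplace_1,\aplace_2$ and the orientation of $\atransition,\atransition_1,\atransition_2$ relative to $\brownian$ do their work, and where the reachability predicate $\step{*}$ — which, by Proposition~\ref{proposition-dec-reach}, is precisely what separates this undecidable problem from the decidable $\MC{\FO(\step{},=)}{URG}$ on nets with semilinear reachability sets — is essential.
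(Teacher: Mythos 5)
Your overall strategy coincides with the paper's: establish semilinearity of $\reach{\overline{\apn}}$ via the place invariants plus the observation that $\brownian$ realizes every original content, and then relativize $\aformula$ to the $\step{*}$-cone of the marking reached by the irreversible transition $\atransition$. Part (i) is essentially the paper's argument and is fine.

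The gap is in part (ii), precisely at the point you flag as "the main obstacle" but then do not resolve: the $\FO(\step{},\step{*})$-definability of the anchor marking $\iota(\amarking_0)$. The contrast between reversible moves and the one-way step $\atransition$ cannot by itself single out this marking, because $\atransition$ is enabled at \emph{every} reachable marking with a token on $\aplace_0$, and all of these are reachable (that is the whole point of $\brownian$). Hence "being the target of an irreversible edge" holds for every marking of the form $(0,1,0,\vec v)$ and $(0,0,1,\vec v)$ with $\vec v$ arbitrary, and the markings $(0,1,0,\vec v)$ form a copy of the \emph{full} transition graph of $\apn$ on $\Nat^P$, not of $\RG{\apn}$; nothing graph-theoretic distinguishes $\vec v=\amarking_0$ inside this family. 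One genuinely needs an anchor at the initial marking first: the paper writes $\theta$ as "$\avariable_1$ is an irreversible successor of the \emph{initial} marking", which requires the predicate $init$ — not available in $\FO(\step{},\step{*})$ — and then eliminates $init$ by a further modification of the net (an extra place $\aplace_0'$, initially the only marked place, and a kick-off transition producing the tokens on $\aplace_0$ and $\aplace_1$), so that the initial marking becomes the unique vertex of the reachability graph with no incoming edge and is therefore $\FO(\step{})$-definable. Without this extra construction (or an equivalent device) your formula $\theta(\avariable)\egdef\iota(\amarking_0)\step{*}\avariable$ has no first-order definition of its parameter, and the reduction does not go through.
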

\begin{proof}
  In a first stage, we  use $init$ although this 
  predicate cannot be expressed in $\FO(\step{},\step{*})$. 
  Let $\overline{\aformula}$ be the formula $\exists \ \avariable_0 \
  \avariable_1 \ init(\avariable_0) \wedge \avariable_0 \step{} \avariable_1
  \wedge \neg (\avariable_1 \step{*} \avariable_0) \wedge
  \amap(\aformula)$ where $\amap(\cdot)$ is homomorphic for Boolean
  connectives
  and $\amap(\forall  \avariable \ \aformulabis) \egdef \forall 
  \avariable \ (\avariable_1 \step{*} \avariable) \Rightarrow
  \amap(\aformulabis)$ (relativization).  In $\overline{\aformula}$,
  $\avariable_0$ is interpreted as the initial marking $\amarking_0'$, 
  and $\avariable_1$ is 
  interpreted as a
  successor of $\avariable_0$ from which $\avariable_0$ cannot be reached 
  again. This may only
  happen  by firing $\atransition$ from $\amarking_0'$.  
  Now the relativization of every other variable to
  $\avariable_1$ in $\overline{\aformula}$
  ensures that 
  $\RG{\apn} \models \aformula$ iff  $\RG{\overline{\apn}} \models \overline{\aformula}$.
To remove $init$, we construct a 
Petri net 
$\overline{\apn}'$
  very similar to $\overline{\apn}$. $\overline{\apn}'$ has an extra place $p'_0$, initially marked with one
  token, and a new transition that consumes this token and produces two
  tokens in $p_0$ and $p_1$, which were initially empty. 
  By construction, the initial marking of $\overline{\apn}'$ is the sole
  marking in $\RG{\overline{\apn}'}$ 
  with no incoming edge and one outgoing edge.
  With this modified net, we use the modified formula
  $\overline{\aformula}'$ as follows:
  $$
  \exists \ \avariable_0' \ \avariable_0 \ \avariable_1 \ (\neg \exists
  \ \avariablebis \ \avariablebis \step{} \avariable_0') \ \wedge \
  \avariable_0' \step{} \avariable_0 \ \wedge \ \avariable_0 \step{} 
  \avariable_1
  \wedge (\neg \avariable_1 \step{*} \avariable_0) \wedge
  \amap(\aformula)
  $$
  For the same reasons as above, $\RG{\apn} \models \aformula$ iff
  $\RG{\overline{\apn'}} \models \overline{\aformula'}$.
\end{proof}
\begin{openproblem}
Decidability status of $\MC{\FO(\step{*})}{URG}$ restricted to Petri nets with semilinear
reachability sets.
\end{openproblem}
\subsection{The reachability relation and structure $\UG{\apn}$}
Corollary~\ref{coro:undecFO_UG_init_reach} has stated a first undecidability
result for the structure $\UG{\apn}$. In this section,
we examine two other situations where it is an undecidable problem to model-check
formulas of $\FO(\step{},\step{*})$ in $\UG{\apn}$.
\newcommand{\stmntpropfostepreach}{
  $\MC{\FO(\step{},\step{*})}{UG}$ is undecidable.
}
\begin{proposition}\label{prop-fo-step-reach}
\stmntpropfostepreach
\end{proposition}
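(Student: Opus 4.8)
The plan is to reduce from one of the previously established undecidable problems, namely $\MC{\FO(\step{*})}{URG}$ (Proposition~\ref{prop-undec-reach-non-strict}), or equivalently from the equality problem for reachability sets of Petri nets (Theorem~\ref{theo-hack}). The key point is that $\UG{\apn}$ has domain $\Nat^P$ rather than $\reach{\apn}$, so unlike in Corollary~\ref{coro:undecFO_UG_init_reach} we cannot use the $init$ predicate to relativize. Instead, we must build a net $\overline{\apn}$ in which the reachable part of the transition graph can be \emph{identified intrinsically} by a first-order formula using only $\step{}$ and $\step{*}$ — that is, we need a graph-theoretic ``signature'' of the initial marking (or of the relevant region of the reachability graph) that survives the transition from $\RG{}$ to $\UG{}$, where many extra ``junk'' edges and vertices (all of $\Nat^P$, with firings allowed from non-reachable markings) are now present.

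The first step is to take two Petri nets $\apn_1,\apn_2$ with identical place sets and form a combined net $\overline{\apn}$ along the lines of Figure~\ref{GRAPHE2} (the $\FO(\step{*})$ schema), but \emph{armored} so that the initial marking $\amarking_0$, and more importantly the deadlock markings $\amarking_\ell,\amarking_r$, are distinguishable by local $\FO(\step{},\step{*})$ properties even inside $\UG{\overline{\apn}}$. Concretely, I would equip $\overline{\apn}$ with a small ``gadget'' of fresh control places whose token counts are bounded (say $0$ or $1$) along every reachable run, and whose combinatorial pattern of $\step{}$-edges cannot be reproduced from an arbitrary marking of $\Nat^P$ with large token counts — because from such a marking all the ``brownian''/free transitions of the gadget become enabled, producing extra out-edges that a genuine reachable configuration does not have. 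This is exactly the trick used in Proposition~\ref{proposition-complexity-FOone} (a neutral self-loop transition detecting emptiness of a place) and in Proposition~\ref{prop-positive}: a place being empty or full is witnessed by presence/absence of a self-loop or a successor. Using such witnesses, one writes a formula $\mathit{reachablelike}(\avariable)$ that holds exactly on (a set containing) the reachable markings, and in particular lets us pick out $\amarking_0$ as the unique $\step{}$-source satisfying the gadget invariant, then relativize the $\FO(\step{*})$-formula $\aformula$ of Proposition~\ref{prop-undec-reach-non-strict} to the set of markings $\step{*}$-reachable from that source.

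The second step is to verify that, within this relativized region, $\UG{\overline{\apn}}$ and $\RG{\overline{\apn}}$ agree on all the atomic predicates appearing in $\aformula$: the $\step{}$-edges out of a reachable marking are the same in both structures, and $\amarking\step{*}\amarking'$ restricted to reachable $\amarking$ has the same meaning. Then by Lemma~\ref{lemma-undecidability-reach-strict}-style reasoning (adapted to the $\step{*}$-only schema of Figure~\ref{GRAPHE2}), $\reach{\apn_1}=\reach{\apn_2}$ iff $\UG{\overline{\apn}}\models \overline{\aformula}$, where $\overline{\aformula} = \exists\,\avariable_0\ \mathit{source}(\avariable_0)\wedge \mathit{gadget}(\avariable_0)\wedge \amap(\aformula)$ and $\amap$ relativizes every quantifier to $\{\avariable : \avariable_0\step{*}\avariable\}$. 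Undecidability then follows from Theorem~\ref{theo-hack}.

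The main obstacle I anticipate is designing the gadget so that the relativizing formula genuinely excludes all spurious markings: in $\UG{}$ one can fire a transition $t$ from any marking $\amarking$ with $\amarking(p)\ge F(p,t)$ for all $p$, so even markings far from $\reach{\overline{\apn}}$ sit inside the transition graph and can be connected by $\step{}$ and $\step{*}$ to the real computation. The formula must therefore pin down the control component of a marking purely by the \emph{shape} of its neighbourhood (out-degree/in-degree patterns, presence of certain short paths), since token counts on the shared places of $\apn_1,\apn_2$ are unbounded and uncontrollable. I expect this to be arrangeable — one can make the gadget transitions all self-loops on dedicated places so their enabledness depends only on those bounded places, and add a ``source detector'' transition pattern as in the proof of Proposition~\ref{proposition-reduc-ReachPN-ml} — but getting the bookkeeping exactly right, so that no stray marking of $\Nat^P$ masquerades as $\amarking_0$ or as a deadlock of the real run, is the delicate part of the argument.
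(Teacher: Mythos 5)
Your strategy coincides with the paper's: reduce from $\MC{\FO(\step{*})}{URG}$, armour the combined net so that the initial marking becomes definable in $\FO(\step{},\step{*})$ inside $\UG{}$, and relativize the $\FO(\step{*})$ formula to the markings $\step{*}$-reachable from that definable source. However, the proposal stops exactly at the step you yourself flag as "the delicate part" --- the concrete gadget and the defining formula --- and that step is the entire content of the proof, so as written there is a genuine gap. The paper closes it as follows. The net $\apn$ is first extended to $\apn'$ by adding a control place $p_\ell$ (one initial token) in self-loop with every original transition, and, crucially, a transition $t_i$ in self-loop with \emph{every} place $p_i$ of $\apn'$. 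This second addition is the move you gesture at but never pin down: it forces the all-zero marking to be the \emph{only} marking of $\UG{\apn'}$ without a \sloop, since any nonzero place enables its own $t_i$. Then $\overline{\apn}$ adds a place $p_0$ with one initial token, a transition $t_e$ that merely consumes a token from $p_0$, and $t_0$ that consumes one token from $p_0$ and deposits $\amarking'_0$. The source is picked out by
\[
\aformula_{init}(\avariable) \egdef (\neg\,\avariable\step{}\avariable)\ \wedge\ \bigl(\exists\,\avariablebis\,\forall\,\avariableter\ \ \avariable\step{}\avariablebis \wedge \neg(\avariablebis\step{}\avariableter)\bigr):
\]
the first conjunct empties all places except $p_0$, and the existence of a deadlock successor then forces $p_0$ to carry exactly one token (zero tokens gives no successor; two or more tokens gives successors that all either have a \sloop via $t_0$ or can fire $t_e$ again). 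One then relativizes not to $\overline{\amarking_0}$ itself but to its $t_0$-successor, definable as the unique $\avariable$ with $\exists\avariablebis\ \aformula_{init}(\avariablebis)\wedge\avariablebis\step{}\avariable\wedge\avariable\step{}\avariable$, because $\overline{\amarking_0}$ also reaches junk via $t_e$.

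A second, more substantive worry about your step 2: you assert that within the relativized region the $\step{}$-edges agree with those of the original reachability graph. With the gadget above this is false --- the transitions $t_i$ give every reachable marking a \sloop it did not have before --- and no gadget of the kind you describe can simultaneously create self-loop witnesses and preserve the one-step relation. The reduction survives only because the source problem is $\MC{\FO(\step{*})}{URG}$, whose formulae are blind to uniformly added \sloops; reducing from $\MC{\FO(\step{})}{URG}$, which your phrasing leaves open as an alternative, would not work with this construction. So the choice of source problem and the design of the gadget are coupled, which is precisely why the bookkeeping you defer cannot be waved through.
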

%
\begin{figure}[htbp]
\begin{center}
{\small 
  %
  \begin{tikzpicture}[scale=.8]
    \tikzstyle{every transition}=[fill, minimum width=5mm, minimum height=2mm];
    \tikzstyle{multiarc}=[arrows = -latex, bend left, thick, shorten >=2pt];
    \tikzstyle{subnet}=[draw,dashed,shape=regular polygon, regular polygon sides=6];
    \tikzstyle{miniplace}=[place, minimum width=3mm, minimum height=3mm];
    \tikzstyle{minitrans}=[transition, minimum width=2.5mm, minimum height=1.2mm];
 
    \draw
    node[place,label=above left:$p_0$] (p0){}
    [children are tokens]
    child{ node [token] {} }
    node[node distance=3cm,subnet, minimum width=3.6cm, minimum height=3.1cm, right=of p0,label=above left:$\apn'$](Np){}
    node[node distance=4.7cm, right=of p0](Nprep){}
    node[transition, below right=of p0, label=below:$t_0$](t0) {}
    edge [pre] (p0) 
    edge [post] (Np)
    node[transition, above right=of p0, label=above right:$t_{e}$](te) {}
    edge [pre] (p0) 
    node[node distance=.3cm, above=of Nprep](newB){}

    node[node distance=.6cm,miniplace,label=above:$p_\ell$, right=of newB] (pl){}

    node [node distance=1.2cm,subnet, left=of pl](N){$\apn$}
    edge [pre, bend left=15] (pl) 
    edge [post,bend right=15] (pl)

    node [minitrans, below=of pl, label=below:$t_\ell$](tl){}
    edge [pre, bend left=15] (pl) 
    edge [post,bend right=15] (pl)

    node [minitrans, below=of N, label=right:$t_i$](t_i){}
    edge [pre, bend left=15] (N) 
    edge [post,bend right=15] (N)

    node [right =of pl](txt1){
      \begin{minipage}{0.3\linewidth}
        $p_\ell$: new place \insl with each transition of $\apn$.
      \end{minipage}
    }
    node [right =of tl](txt1){
      \begin{minipage}{0.3\linewidth}
        For each place $p_i$ in $\apn$, there is a transition $t_i$
        \insl with it.
      \end{minipage}
    }
    %
    ;
  \end{tikzpicture}
}

\end{center}
\caption{Petri net $\overline{\apn}$}
\label{fig:UG}
\end{figure}
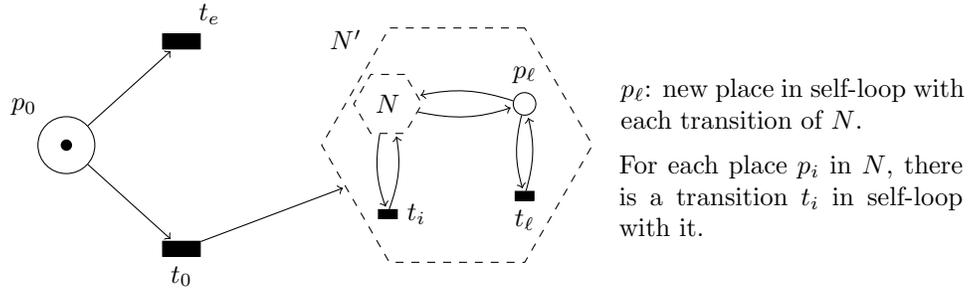
\begin{proof}
We reduce $\MC{\FO(\step{*})}{URG}$ to $\MC{\FO(\step{},\step{*})}{UG}$.
Given a net $\apn = (P,T,F, \amarking_0)$ and a formula $\aformula$ in 
$\FO(\step{*})$, we construct $\overline{\apn}$ and $\aformula'$ such that
$\URG{\apn} \models \aformula$ iff 
$\UG{\overline{\apn}} \models \aformula'$. 
 Figure~\ref{fig:UG}
  presents some key elements for the construction of
$\overline{\apn}$. 


  First, let $\apn' = (P',T',F', \amarking'_0)$ be the Petri net
  defined with $P' = P\cup \{p_\ell\}$, $T' = T\cup \{t_i\ |\ p_i\in
  P'\}$, for all $(p,t)$ in $P\times T$, $F' (p, t)= F(p,t)$ and $F'
  (t,p)= F(t,p)$, for all $p_i\in P', F(p_i,t_i)=F(t_i,p_i)=1$, for all
  $t\in T, F(t,p_\ell)=F(p_\ell,t)=1$, for all $p\in P$,
  $\amarking'_0(p) = \amarking_0(p)$, and $\amarking'_0(p_\ell) =1$.
  Restricted to places in $P$ (all places but $p_\ell$), the reachable
  markings of $\apn'$ coincide with those of $\apn$. 
  By construction, $p_\ell$ contains always a single token. In
  $\URG{\apn'}$, every marking has a \sloop.
  Similarly, every marking of $\apn'$ in which some place is positive
  possesses a \sloop in the graph $UG(\apn')$. The tuple $(0,0,\ldots,0)$, on
  the other hand, enables no transition (the empty place $p_\ell$
  inhibits every transition).

  Now, we construct $\overline{\apn}$ from $\apn'$. $\overline{\apn}$
  has the same places and transitions as $\apn'$, plus an
  extra place $p_0$ and two extra transitions $t_e$ and
  $t_0$. Transition $t_e$ removes tokens from $p_0$, one at a time.
  Transition $t_0$ consumes one token from $p_0$ and produces $\amarking'_0$ in
  the places of $\apn'$. The initial marking
  $\overline{\amarking_0}$ of $\overline{\apn}$ has a
  single token in place $p_0$.

  We claim the following:
  \begin{iteMize}{$\star$}
  \item The reachable graph of $\overline{\apn}$ is identical to the
    reachable graph of $\apn$, up to the first transition and up to the \sloops which have no
    influence on formulas in $\FO(\step{*})$.
  \item There is a formula $\aformula_{init}(x)\in
    \FO(\step{},\step{*})$ which is satisfied in
    $\UG{\overline{\apn}}$ only at $\overline{\amarking_0}$.
  \end{iteMize}

  Assuming these claims, validity of a formula in $\FO(\step{*})$
  with respect to $\URG{\apn}$ may be reduced to the validity of a
  formula of $\FO(\step{},\step{*})$ with respect to
  $\UG{\overline{\apn}}$, using a similar technique as in the proof of
  Corollary~\ref{coro:undecFO_UG_init_reach}. 
  For this purpose, we should relativize the given formula in $\FO(\step{*})$
  to the vertices of $\UG{\overline{\apn}}$ that may be reached from the
  marking $\overline{\amarking_0}'$ defined by
  $\overline{\amarking_0}\stepbis{t_0}\overline{\amarking_0}'$. 
  This can actually be done in $\FO(\step{},\step{*})$, because 
  $\overline{\amarking_0}'$ is the sole marking of $\overline{\apn}$ that
  satisfies the formula $\exists \avariablebis\ 
  \aformula_{init}(\avariablebis)\wedge \avariablebis\step{}\avariable\wedge
  \avariable\step{}\avariable$. Therefore, to complete the proof of the
  proposition, it suffices to establish the two claims made above.

  
  Now, the first claim derives immediately from the construction of
  $\overline{\apn}$.
  The second claim may be established by setting:
  \[
  \aformula_{init} (\avariable) \egdef (\neg \ \avariable\step{}
  \avariable) \wedge (\exists \, \avariablebis \forall
  \,\avariableter\ \avariable\step{} \avariablebis \wedge\,
  \neg(\avariablebis\step{}\avariableter))
  \]
  This formula contains a subformula $(\neg \ \avariable\step{}
  \avariable)$ that expresses the absence of a \sloop, thus
  $\aformula_{init} (\avariable)$ may only be satisfied in markings
  with all places $\aplace \in P'$ empty. But $(\neg \
  \avariable\step{}\avariable)$ may be satisfied in a marking
  $\avariable$ with an arbitrary number of tokens in $\aplace_0$. Now
  consider markings with all places in $P'$ empty, and an arbitrary
  number of tokens in $\aplace_0$. Three cases must be
  considered. First, suppose that $\aplace_0$ contains a single token
  (i.e., $\avariable$ is interpreted by $\overline{\amarking_0}$),
  then $(\exists \, \avariablebis \forall\,\avariableter\
  \avariable\step{} \avariablebis
  \wedge\,\neg(\avariablebis\step{}\avariableter))$ is satisfied:
  $\avariable$ has a successor $\avariablebis$ (reached by firing
  $\atransition_e$) which is a deadlock. Second, if $\aplace_0$ is
  empty, then the marking $\avariable$ has no successor at
  all. If $\aplace_0$ contains at least two tokens, then no
    successor of $\avariable$ is a deadlock: every marking reached by
    $\atransition_0$ has a \sloop\ and $\atransition_e$ can be
    executed at least twice.  Putting everything together, the only
  tuple in $\Nat^{n}$ satisfying $\aformula_{init} (\avariable)$, is
  the marking $\overline{\amarking_0}=(1,0,\ldots,0)$, establishing
  the second claim.

  %
\end{proof}
%
Proposition~\ref{prop-fo-step-reach} holds even when the reachability set
of the net is effectively semilinear.
\newcommand{\stmntpropfostepreachsem}{
  $\MC{\FO(\step{},\step{*})}{UG}$ is undecidable for the subclass of Petri
  nets with an effective semilinear reachability set.  }
\begin{proposition}\label{prop-fo-step-reach-semi}
\stmntpropfostepreachsem
\end{proposition}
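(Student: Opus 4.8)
The plan is to merge two constructions already in the paper: the reduction of $\MC{\FO(\step{*})}{URG}$ to $\MC{\FO(\step{},\step{*})}{UG}$ of Proposition~\ref{prop-fo-step-reach} (the gadget with a \sloop\ place, the drain transition $t_e$ and the loading transition $t_0$, which makes the ``true'' starting marking definable in $\UG$), and the drowning of Proposition~\ref{proposition-undec-acc-semilinear-sets} (the extra control places, the commit transition $\atransition$, and the subnet $\brownian$, which forces the reachability set of the enlarged net to be effectively semilinear). Since $\MC{\FO(\step{*})}{URG}$ is undecidable by Proposition~\ref{prop-undec-reach-non-strict}, it suffices to turn an arbitrary net $\apn$ and a sentence $\aformula\in\FO(\step{*})$ into a net $\overline{\apn}$ with effectively semilinear reachability set and a sentence $\aformula'\in\FO(\step{},\step{*})$ such that $\URG{\apn}\models\aformula$ iff $\UG{\overline{\apn}}\models\aformula'$. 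One cannot merely \emph{compose} the two reductions, because the drowning of Proposition~\ref{proposition-undec-acc-semilinear-sets} outputs a $\FO(\step{},\step{*})$ sentence while Proposition~\ref{prop-fo-step-reach} consumes a $\FO(\step{*})$ one; so $\overline{\apn}$ must be assembled in one shot.

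Concretely, I would take the net of Proposition~\ref{prop-fo-step-reach} built on $\apn$ (with its \sloop\ place and self-loop transitions) and, on top of it, add the control places of the drowning of Proposition~\ref{proposition-undec-acc-semilinear-sets} (renamed to avoid a clash with the gadget place $p_0$), the commit transition $\atransition$ and the subnet $\brownian$, making $t_0$ load the full initial configuration of the Brownian-augmented net rather than that of $\apn'$. The control flow becomes $\overline{\amarking_0}\step{t_0}\amarking_0'\step{\atransition}\amarking_1\step{}\cdots$, where $\overline{\amarking_0}$ is the initial marking (one token on $p_0$), $\amarking_0'$ is the initial marking of the Brownian-augmented net --- the marking from which, as long as $\atransition$ has not fired, one may always come back using $\brownian$ --- and $\amarking_1$ is obtained by committing. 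From $\amarking_1$ only the transitions of $\apn$ and pure self-loops stay enabled, so the set of $\step{*}$-successors of $\amarking_1$ in $\UG{\overline{\apn}}$ is, up to the constant extra coordinates and up to \sloops, an isomorphic copy of $\reach{\apn}$ with its reachability relation; \sloops are immaterial for $\FO(\step{*})$, and this is exactly why we reduce from $\FO(\step{*})$ and not from $\FO(\step{})$. Finally $\reach{\overline{\apn}}$ equals $\{\overline{\amarking_0}\}$, plus the deadlock $\vec 0$, plus the markings reachable from $\amarking_0'$, and by the analysis of Proposition~\ref{proposition-undec-acc-semilinear-sets} the latter form a semilinear set; hence $\reach{\overline{\apn}}$ is effectively semilinear.

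The only delicate point is to keep $\amarking_1$ definable by an $\FO(\step{},\step{*})$ formula over $\UG{\overline{\apn}}$ in the presence of the unreachable ``ghost'' markings of $\Nat^{|P|}$. I would reuse the \sloop\ trick of Proposition~\ref{prop-fo-step-reach}, but I must add a self-loop transition for \emph{every} place other than the gadget place $p_0$, in particular for the newly introduced drowning places, so that ``$x$ has no \sloop'' becomes equivalent to ``every place except $p_0$ is empty at $x$''; otherwise a ghost marking such as ``one token on $p_0$, $k$ tokens on a drowning place, everything else empty'' would be \sloop-free with a deadlock successor and would spuriously satisfy the formula meant to pin down $\overline{\amarking_0}$. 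Adding these self-loop transitions is harmless since the source formula is in $\FO(\step{*})$. Then $\overline{\amarking_0}$ is the unique \sloop-free marking with a deadlock successor, hence is defined by
$$
\aformula_{init}(\avariable)\ \egdef\ \neg(\avariable\step{}\avariable)\ \wedge\ (\exists\,\avariablebis\,\forall\,\avariableter\ \ \avariable\step{}\avariablebis\ \wedge\ \neg(\avariablebis\step{}\avariableter));
$$
next, $\amarking_0'$ is the unique successor of the $\aformula_{init}$-marking that carries a \sloop, and $\amarking_1$ is the unique successor of $\amarking_0'$ from which $\amarking_0'$ is not reachable back --- all $\FO(\step{},\step{*})$-expressible. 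Relativising $\aformula$ to the $\step{*}$-successors of this definable $\amarking_1$ yields $\aformula'$, as in Proposition~\ref{prop-fo-step-reach} and Corollary~\ref{coro:undecFO_UG_init_reach}, and gives $\URG{\apn}\models\aformula$ iff $\UG{\overline{\apn}}\models\aformula'$.

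The main obstacle is therefore the verification that the two gadgets do not interfere. One should first delete the neutral transitions of $\apn$, then check that the only transitions of $\overline{\apn}$ ever acting as self-loops are the freshly added ones (the Brownian transitions always change a place count, so they produce no self-loop), and re-run the argument of Proposition~\ref{proposition-undec-acc-semilinear-sets} showing that from $\amarking_0'$ one can always return until $\atransition$ fires, so that the ``no way back'' successor isolated by $\aformula'$ is indeed $\amarking_1$ and is unique. The rest is a direct transcription of the two earlier proofs.
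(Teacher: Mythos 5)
Your proposal is correct and follows essentially the same route as the paper: both stack the drowning gadget of Proposition~\ref{proposition-undec-acc-semilinear-sets} and the $\UG$-gadget of Proposition~\ref{prop-fo-step-reach} (in the same nesting order, with the same definable chain $\overline{\amarking_0}\step{}\amarking_0'\step{}\amarking_1$ and the same relativization), relying on the fact that the added \sloops are invisible to $\FO(\step{*})$. The only cosmetic difference is that you invoke the undecidability of $\MC{\FO(\step{*})}{URG}$ as a black box, whereas the paper re-threads the explicit Hack formula of Proposition~\ref{prop-undec-reach-non-strict} through the layers.
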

\begin{proof}
  %
We pile up (adaptations of) the proofs of Propositions~\ref{prop-undec-reach-non-strict}, \ref{proposition-undec-acc-semilinear-sets}, and \ref{prop-fo-step-reach}. 

Given arbitrary two nets $\apn_1$ and $\apn_2$ without \ntrts\hspace{-.5ex}, let $\apn_3$ denote the net $\overline{\apn}$ constructed from $\apn_1$ and $\apn_2$ like in the proof of Proposition~\ref{prop-undec-reach-non-strict}, and let $M_3$ denote the initial marking of this net. By the proof of Proposition~\ref{prop-undec-reach-non-strict}, $\reach{\apn_1}=\reach {\apn_2}$ if and only if $\RG{\apn_3} \models \aformula$, where:
$$
\aformula\egdef
\forall \avariableter \;(\dead(\avariableter)\Rightarrow \exists \avariableter_1,\avariableter_2\;
(\avariableter_1 \step{*} \avariableter \wedge
\predead(\avariableter_1)\wedge \avariableter_2 \step{*} \avariableter
\wedge \predead(\avariableter_2)\wedge \neg \avariableter_1 \step{*}
\avariableter_2 \wedge \neg \avariableter_2 \step{*} \avariableter_1)),$$
$$\dead(\avariableter) \egdef  \forall {\sf w} \;  \avariableter \step{*}  
       {\sf w} \Rightarrow  {\sf w}  \step{*}  \avariableter,$$ 
$$\predead(\avariableter) \egdef \neg \dead(\avariableter) \wedge
(\forall  {\sf w} ) (\avariableter \step{*}  
       {\sf w} \wedge \neg  {\sf w}  \step{*}  \avariableter) \Rightarrow \dead({\sf w}).$$
Let $\mbox{\RGV}_{\circlearrowleft}(\apn_3)$ be the extended reachability graph obtained from $\RG{\apn_3}$ by adding a \sloop in every marking. Then clearly, $\RG{\apn_3} \models \aformula$ if and only if $\mbox{\RGV}_{\circlearrowleft}(\apn_3) \models \aformula$. By Hack's result, $\mbox{\RGV}_{\circlearrowleft}(\apn_3) \models \aformula$ is undecidable from the input $\{\apn_1,\apn_2\}$.  

Now put $\apn=\apn_3$ in the net shown in Figure~\ref{PN}. Denote the resulting 
net $\overline{\apn}$ by $\apn_4$, and let $\amarking_4$ be 
its initial marking. By construction, $\apn_4$ has a semilinear reachability
set. Moreover, if we put:  
$$\theta(\avariable,\avariablebis)
\egdef \ init(\avariable) \wedge \avariable \step{} \avariablebis
\wedge \neg (\avariablebis \step{*} \avariable),$$
then, in $\RG{\apn_4}$, this statement holds exclusively for 
$\avariable$ interpreted by $\amarking_4$ and 
$\avariablebis$ interpreted by $\amarking_3+\{p_1\}$. Let $\mbox{\RGV}_{\circlearrowleft}(\apn_4)$ be the extended reachability graph obtained from $\RG{\apn_4}$ by adding a \sloop in every marking. Then clearly, in $\mbox{\RGV}_{\circlearrowleft}(\apn_4)$, 
$\theta(\avariable,\avariablebis)$ holds exclusively for $\avariable$ 
interpreted by $\amarking_4$ and $\avariablebis$ interpreted by 
$\amarking_3+ \{\aplace_1\}$. 

Finally put  $\apn=\apn_4$ in the net shown in Figure~\ref{fig:UG}. Denote the 
resulting net $\overline{\apn}$ by 
$\apn_5$, and let $\amarking_5$ be its initial marking. Thus, $\apn_5$ has a
semilinear reachability set. 
As was shown in the proof of Proposition~\ref{prop-fo-step-reach}, if we put:
$$
\aformula_{init} (\avariable) \egdef (\neg \ \avariable\step{}
\avariable) \wedge (\exists \, \avariablebis \forall
\,\avariableter\ \avariable\step{} \avariablebis \wedge\,
\neg(\avariablebis\step{}\avariableter)),
$$
then, in $\UG{\apn_5}$, $\aformula_{init}(\avariable)$ holds exclusively for 
$\avariable$ interpreted by $\amarking_5$. Therefore, if we put:
$$
\psi_{init} (\avariable) \egdef \exists \avariablebis\ 
\aformula_{init}(\avariablebis)\wedge \avariablebis\step{}\avariable\wedge
\avariable\step{}\avariable,
$$
then, in $\UG{\apn_5}$, $\psi_{init}(\avariable)$ holds exclusively 
for $\avariable$ interpreted by $\amarking_4+\{p_\ell\}$. 
The subgraph of $\UG{\apn_5}$ reachable from 
the marking $\amarking_4+\{p_\ell\}$ is isomorphic to 
$\mbox{\RGV}_{\circlearrowleft}(\apn_4)$. Therefore, in $\UG{\apn_5}$, $\psi_{init}(\avariable)\wedge\avariable\step{} \avariablebis \wedge \neg (\avariablebis \step{*} \avariable)$ holds for $\avariable,\avariablebis$ if and only if 
$\avariable$ is interpreted by $\amarking_4+\{p_\ell\}$ and 
$\avariablebis$ is interpreted by $\amarking_3+\{p_1\}+\{p_\ell\}$. 
The subgraph of $\UG{\apn_5}$ reachable from the marking $M_3+\{p_1\}+\{p_\ell\}$ is isomorphic to $\mbox{\RGV}_{\circlearrowleft}(\apn_3)$. Therefore, $\mbox{\RGV}_{\circlearrowleft}(\apn_3) \models \aformula$ if and only if $\UG{\apn_5}\models\overline{\aformula}$ where $\overline{\aformula}$ is the formula:
$$
 \exists \ \avariable_0 \ \avariable_1 \ \psi_{init}(\avariable_0)
\wedge \avariable_0 \step{} \avariable_1
\wedge \neg (\avariable_1 \step{*} \avariable_0) \wedge
\amap(\aformula)$$
where $\amap(\cdot)$ is homomorphic for Boolean connectives  
and $\amap(\forall  \avariable \ \aformulabis) \egdef \forall 
\avariable \ (\avariable_1 \step{*} \avariable) \Rightarrow
\amap(\aformulabis)$ (relativization).
As a consequence, $\UG{\apn_5}\models \overline{\aformula}$ is undecidable 
from the input $\{\apn_1,\apn_2\}$.  
 
\end{proof}
In this section we have examined several first-order sublanguages
involving the reachability predicate. We obtained undecidability
results, even when the reachable markings form a semilinear set, and
even when the global structure $\UG{\apn}$ is considered instead of
$\URG{\apn}$.


\section{Concluding Remarks}
\label{section-conclusion}%
We investigated mainly the model-checking problem over unlabelled
reachability graphs of Petri nets with the first-order language
$\FO(\step{})$ (no label on transitions, no property on markings).
The robustness of our main undecidability proof has been tested
against standard fragments of $\FO(\step{})$ (for instance the
two-variable fragment), modal fragments from $\ML(\bo{},\bo{}^{-1})$
and against the additional assumption that reachability sets are
effectively semilinear.  Table~\ref{figure-summary} provides a summary
of the main results (observe that whenever the reachability relation
$\step{*}$ is effectively semilinear, each problem is decidable).
Results in bold are proved in the paper, whereas unbold ones are their
consequences; furthermore each undecidability result holds for a fixed
formula.
\newcommand{\notesym}{\dagger}
\begin{table}
{\scriptsize 
\begin{center}
\begin{tabular}{c|c|c|c|}
  Problem& 
$\sharp$
& Arbitrary & Effectively semilinear $\reach{\apn}$ \\
  \hline
  $\MC{\FO(\step{})}{\sharp}$
  & $URG$ & {\bf \undec} \ (Cor.~\ref{prop:undecFO})   & {\bf \dec} \\
  \hline
  $\MC{\FO(\step{+})}{\sharp}$
  & $URG$& {\bf \undec} \  (Cor.~\ref{corollary-undecibility-FOtransitive}) & open  \\
  \hline
  $\MC{\FO(\step{*})}{\sharp}$  & $URG$& {\bf \undec} 
  \ (Prop.~\ref{prop-undec-reach-non-strict}) &  open \\
  \hline
  \multirow{2}{*}{$\MC{\FO(\step{},\step{*})}{\sharp}$}
  & $URG$& \undec & {\bf \undec} \ (Prop.~\ref{proposition-undec-acc-semilinear-sets}) \\
  & $UG$&  {\bf \undec} \ (Prop.~\ref{prop-fo-step-reach}) &  {\bf \undec} \ (Prop.~\ref{prop-fo-step-reach-semi})\\
  \hline
  $\MC{\FOP(\step{})}{\sharp}$ & $URG$ &  {\bf \undec} (Prop.~\ref{prop-positive}) &
   \dec \\
   \hline 
  $\MC{\FOF(\step{})}{\sharp}$ & $URG$ &  {\bf \undec} (Prop.~\ref{prop-forward}) &
   \dec \\
   \hline 
  $\MC{\exists \FO(\step{},=)}{\sharp}$ & $URG$& {\bf \dec}$^{\notesym}$ \ 
   (Prop.~\ref{proposition-decidability-existential}) & \dec \\
  \hline
  $\MC{\FO(\step{})}{\sharp}$ with 1 variable & $URG$& {\bf \dec}$^{\notesym}$ \ 
   (Prop.~\ref{proposition-complexity-FOone}) & \dec \\
  \hline
  $\MC{\FO(\step{},=)}{\sharp}$ & $UG$& {\bf \dec} \ (Prop.~\ref{proposition-dec-ug}) & \dec \\
  \hline
  $\MC{\ML(\bo{})}{\sharp}$ & $URG$ & {\bf \pspace-complete} &  \pspace-complete \\
  \hline
  $\MC{\ML(\bo{},\bo{}^{-1})}{\sharp}$ & $URG$ & {\bf \dec}$^{\notesym}$ \ 
  (Prop.~\ref{proposition-preliminaries-ml2}) & \dec \\
  \hline
  $\VAL{\ML(\bo{},\bo{}^{-1})}{\sharp}$ & $URG$ & {\bf \undec} \ (Prop.~\ref{proposition-undecidability-ml}) & \dec \\
  \hline
  $\VAL{\PAML(\bo{})}{\sharp}$ & $URG$ & {\bf \dec}$^{\notesym}$ \ (Prop.~\ref{proposition-dec-ml}) &  \dec \\
  \hline
\end{tabular}
\end{center}
}
\caption{Summary ($\notesym$: equivalent to Petri nets (non) reachability problem)}
\label{figure-summary}
\end{table}
We have investigated several types of borderlines to distinguish decidable problems from
undecidable ones. For instance, $\MC{\FO(\step{})}{URG}$ restricted to the two-variable fragment
is undecidable whereas $\MC{\FO(\step{})}{URG}$ restricted to the existential
fragment is decidable
(even though this problem is at least as hard as the reachability problem for Petri nets).
Similarly, on the modal side, $\MC{\ML(\bo{},\bo{}^{-1})}{URG}$ is decidable (again 
 as hard as the reachability problem for Petri nets) whereas 
 $\VAL{\ML(\bo{},\bo{}^{-1})}{URG}$ is undecidable. 
Despite the numerous results we obtained, we can identify the following
rules of thumb.
\begin{enumerate}[(1)]
\item Undecidability of $\MC{\FO(\step{})}{URG}$ is robust for numerous fragments of 
      $\FO(\step{})$ including both universal and
      existential quantifications (a single alternation is enough).
\item Decidability results with simple restrictions such as considering bounded Petri nets
      or $\exists \FO(\step{})$ lead to computationally 
      difficult problems, some of them
      being non primitive recursive or  as hard as the reachability 
      problem for Petri nets (see Section~\ref{section-hardness}).
\item The above points are still relevant for modal languages.
 \end{enumerate}
Let us conclude the paper by mentionning possible continuations of this
work. A first direction would be to investigate the model checking of 
fragments of {\em second-order languages} with respect to Petri net unlabelled
reachability graphs. Knowing that $\MC{\FO(\step{})}{URG}$ is already 
undecidable, this makes sense only if one disallows first-order 
quantification, while keeping of course second-order quantification.
A possible primitive atomic formula could be for instance: 
$X \SOimpl{} Y$ $\equivdef$ for all $x \in X$, there is $y
\in Y$ such that $x \step{} y$ and for all $y \in Y$, there is $x \in
X$ such that $x \step{} y$. With this definition, it is easily shown 
that $\MC{{\rm MSO}(\Rightarrow)}{URG}$ is undecidable, but many other 
fragments of MSO are worth investigating and comparing with the 
fragments considered in the paper. 

A second direction for extending this work would be to consider the
geometrical properties of the set of markings reachable from a given
marking, taken as a subset of $\Nat^n$. It is for instance trivial to
determine whether there is at least one marking reachable from the
initial marking and different from it. It is slightly more difficult
to prove that there is at least one non-reachable marking.

A third direction, diverging significantly from our approach, would 
be to investigate decidability questions about infinite 
{\em unfoldings} of nets instead of net reachability graphs. 
Unfolding Petri nets produces {\em local event structures} that 
induce in turn {\em local trace languages} \cite{HKT93}. Safe Petri 
nets, as opposed to unbounded Petri nets, may in particular be 
modelled with {\em regular} trace event structures \cite{Mad03}. 
The decidability of FO over regular trace event structures has been 
shown in \cite{Mad03}, as well as the decidability of MTL, a fragment
of MSO where quantification is restricted to conflict-free sets of
events. The proofs of these results rely strongly on regularity and 
do not extend easily to local event structures representing general 
Petri nets.


\bibliographystyle{abbrv}
\bibliography{biblio}
\vspace{-20 pt}

\end{document}